\newtheorem{theorem}{Theorem}
\newtheorem{lemma}[theorem]{Lemma}
\newtheorem{remark}[theorem]{Remark}
\newenvironment{proof}[1][Proof]{\textbf{#1.} }{\ \rule{0.5em}{0.5em}}
\newcommand{\Prob}{\mathbb{P}}
\begin{document}

\preprint{APS/123-QED}

\title{The impact of a network split on cascading failure processes}

\author{Fiona Sloothaak}
\author{Sem C. Borst}
 \altaffiliation[Also at ]{Nokia Bell Labs, Murray Hill, New Jersey, USA.}
\author{Bert Zwart}%
 \altaffiliation[Also at ]{Centrum Wiskunde en Informatica (CWI), Amsterdam, The Netherlands.}
\affiliation{Eindhoven University of Technology, Eindhoven, The Netherlands.}%

\date{\today}

\begin{abstract}
Cascading failure models are typically used to capture the phenomenon where failures possibly trigger further failures in succession, causing knock-on effects. In many networks this ultimately leads to a disintegrated network where the failure propagation continues independently across the various components. In order to gain insight in the impact of network splitting on cascading failure processes, we extend a well-established cascading failure model for which the number of failures obeys a power-law distribution. We assume that a single line failure immediately splits the network in two components, and  examine its effect on the power-law exponent. The results provide valuable qualitative insights that are crucial first steps towards understanding more complex network splitting scenarios.
\end{abstract}

\maketitle


\section{Introduction}
Cascading failure models are used to describe systems of interconnected components where initial failures can trigger subsequent failures of other components. Despite the deceptively simple appearance, these models capture an extraordinary richness of different behaviors and have proven to be crucial in many application areas, such as material science, computer networks, traffic networks, earthquake dynamics, and power transmission systems~\cite{Pradhan2010}. It is therefore not surprising that cascading failures have received a lot of attention throughout the years~\cite{Albert2000,Albert2002,Motter2002,Watts2002,Crucitti2004,Motter2004,Heide2008, Mirzasoleiman2011,Witthaut2015}

In particular, the application domain of power systems has received increasing attention over the last fifteen years~\cite{Albert2004,Kinney2005,Bienstock2015,Rohden2016}, and also provides the main inspiration for our research. Power grids have grown significantly in size and complexity. Moreover, various recent advances, such as the rise of renewable sources, have considerably increased the volatility in these systems. The occurrences of severe blackouts have increased rapidly around the world, in a time where society relies on a reliable power grid more and more. Notorious examples include the Northeast Blackout of 2003, the India Blackout of 2012 and the South Australia Blackouts of 2016 and 2017. Such catastrophic events cause significant economic and social disruption, and the analysis of severe blackouts has therefore become a crucial part of transmission grid planning and operations~\cite{Newman2011,Wang2015}.

Blackouts often occur through a cascade of failures that accelerate and outstrip control capabilities~\cite{Bienstock2015}. The failure mechanism causing a power outage entails long and complex sequences of failures, making the analysis of the failure propagation extremely difficult. Simulation techniques are typical approaches in order to obtain a better understanding of the cascading failure process. However, standard Monte-Carlo simulation may become computationally intractable due to the low probability of a blackout event and the huge size of the network (curse of dimensionality)~\cite{Guo2016}. Nevertheless, rare-event simulation, such as importance sampling and splitting~\cite{Kim2013,Shortle2013}, can be used to overcome these issues and analyze fairly complex cascading failure models. 

Another approach, often used in practice, involves extensive scenario analysis where one tests whether the failure of one line, or a combination of several lines, is likely to result in a major blackout~\cite{Bienstock2015,Soltan2014}. There are two typical ways to measure the reliability: worst-case analysis where the initial contingency is targeted to be the most vulnerable component, or one where it is chosen uniformly at random in order to understand the typical behavior on average~\cite{Bienstock2015}. The fact that this distinction is critical has also been illustrated for complex networks: they are rather resilient to random attacks, while relatively vulnerable to targeted attacks~\cite{Cohen2000,Cohen2001}.

Although advanced simulation techniques and scenario testing approaches have proven indispensable, they provide little physical insight in the mechanism leading to a severe blackout. In contrast, macroscopic models, such as~\cite{Motter2002,Watts2002,Crucitti2004,Dobson2004,BlanchetUnpublished}, focus on a few essential characteristics to obtain more qualitative insights. Such insights help in gaining a deeper understanding of the failure propagation. In particular, Dobson et al.~\cite{Dobson2004,Dobson2005} construct a simple cascading failure model that captures four salient features of large blackouts: the large number of components, the initial disturbance stressing the network, the component failure when its capacity is exceeded and the additional loading of other components when a component failure has occurred. This results in a tractable model that allows for a rigorous derivation of the number of component failures.

In this paper, we extend these models, allowing for another distinctive feature observed in occurrences of large blackouts: network splitting. Successive line failures may cause the network to disintegrate in disjoint components. Once a network split has occurred, the failure propagation continues independently among the various components. Network splitting is also known as islanding, and is sometimes used as a tool in power systems to prevent blackouts to cascade to large-scale proportions~\cite{Bienstock2015}. Our results show the impact of islanding on the power-law exponent.

Specifically, we consider a network consisting of two star components connected by a single line, see Figure~\ref{fig:ModelFigure}. Each line has an initial load that is exceeded by the capacity by a random margin. The cascading failure process is triggered by the failure of the line bridging the two components, causing all lines to be additionally loaded. When this load surge causes the capacity to be exceeded on a line, it fails. Every consecutive line failure causes all surviving lines connected to it to receive another supplementary load increase, possibly triggering massive knock-on effects. We emphasize that due to the network structure, no network splitting occurs after the failure of the bridging line. Therefore, the cascading failure propagation continues independently in the two components until the capacities at the  surviving lines in both components are sufficient to meet the load surges. A detailed description of the model is given in Section~\ref{sec:ModelDescription}.

We measure the reliability of the network by the probability that the total number of line failures exceeds a certain threshold, which we refer to as the \emph{exceedance probability}. This objective is well-understood in the case of a single star network under certain assumptions~\cite{Sloothaak2016}. That is, there is an initial disruption causing all lines to be additionally loaded and every consecutive line failure causes subsequent load surges to all surviving lines. Under a particular condition, the exceedance probability obeys a power-law distribution with exponent $-1/2$. This type of behavior is of strong interest as it appears in empirical analyses of historic blackout data~\cite{Carreras2000,Carreras2004,Talukdar2003,Newman2011}. This heavy-tailed behavior reflects a relatively high risk of having severe blackouts.

The objective of this paper is to examine the impact of including a single immediate network split on the exceedance probability. It turns out that the power-law property, which appeared in case of a star topology, mostly prevails. However, the splitting feature may possibly change the prefactor and the exponent depending on the threshold and component sizes. The results can intuitively be interpreted as follows. When the threshold is sufficiently smaller than the size of the smaller component, the threshold is most likely been exceeded in just one of the components alone. If the threshold is approximately between the size of the smaller and the larger component, the threshold is most likely exceeded in the bigger component alone. In both cases, this property will imply that the power-law exponent is $-1/2$ as is the case of a single star network. For larger threshold values, both components need to have a significant number of line failures. Consequently, it is much less likely for the threshold to be exceeded, which causes a phase-transition: the power-law exponent is reduced to~$-1$. This provides a possible explanation of why also other power-law exponents appear in empirical data analyses~\cite{Hines2009}.

Our methodology uses an asymptotic analysis for the sum of two independent quasi-binomially distributed random variables. We distinguish between different cases: the balanced case where the sizes of both components are of the same order of magnitude, and the disparate case where one is of a smaller order. Preliminary results appeared in a conference paper~\cite{Sloothaak2017}, without proofs, where we focused on an approximation scheme for the exceedance probability and compared this to simulation results. In the present paper, we rigorously prove the asymptotic behavior. In the analysis many subtleties need to be accounted for, which are most apparent when the threshold is close to the size of the larger component. Our analysis in Section~\ref{sec:Main results} aims to provide physical insights in these subtleties.

The paper is organized as follows. In Section~\ref{sec:ModelDescription} we describe our model in more detail and review some known results for the single star network. We state our main results in Section~\ref{sec:Main results} and provide a high-level interpretation. The proofs of the main results are covered in Section~\ref{sec:ProofsOfMainResults}. We conclude this paper with a discussion of some future research directions in Section~\ref{sec:Future}.

\section{Model description and preliminaries}
\label{sec:ModelDescription}
We consider a network with $n+2$ nodes, where $n$ is large. The network consists of two components connected by a single line. The smaller component  consists of $l:=l_n$ lines, whereas the other component has $n-l$ lines, and hence~$l \leq n-l$. Each line has a limited capacity for the amount of load it can carry before it fails. We assume that the network is initially stable in the sense that every line has enough capacity to carry its load. The difference between the initial load and capacity is called the \emph{surplus capacity}, and we assume it to be independent and standard uniformly distributed at each of the $n$ lines. A visual representation of the model is given in Figure~\ref{fig:ModelFigure}

\begin{figure}[htb]
\centering
\includegraphics[width=8cm]{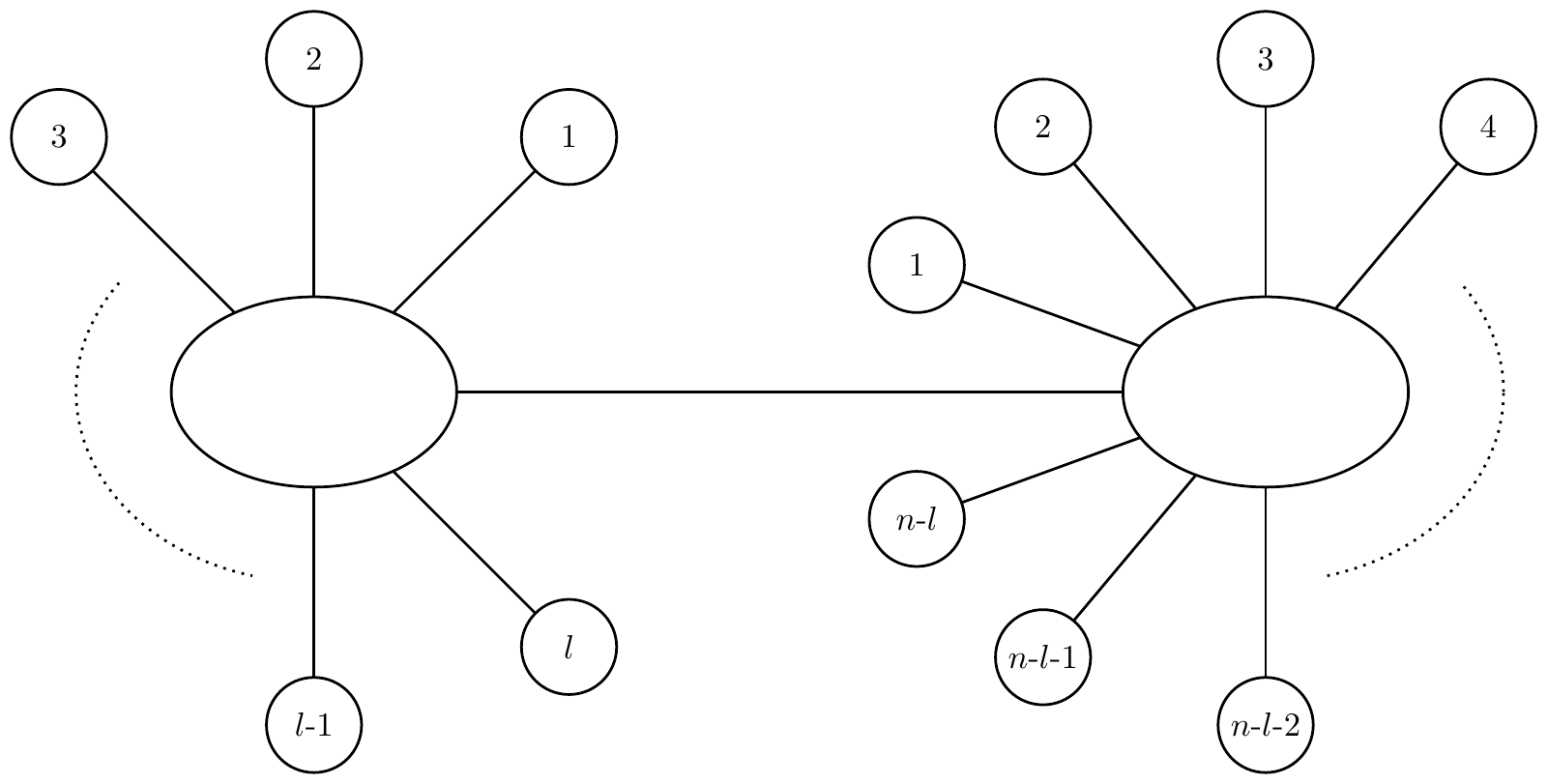}
\caption{Visual representation of the network.}
\label{fig:ModelFigure}
\end{figure}

\begin{figure*}
\centering
\subfloat[Case $0 \leq \alpha<\beta$.\label{fig:SumCaseA}]{%
  \includegraphics[width=.3\linewidth]{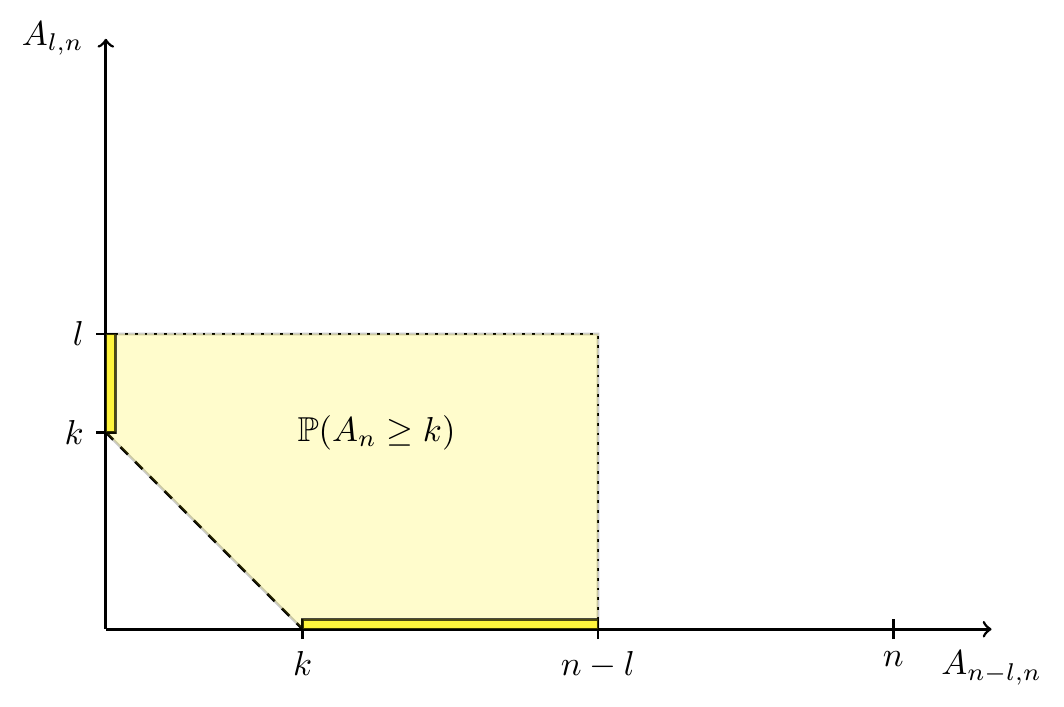}%
	}\hfill
\subfloat[Case $\beta \leq \alpha<1-\beta$.\label{fig:SumCaseB}]{%
  \includegraphics[width=.3\linewidth]{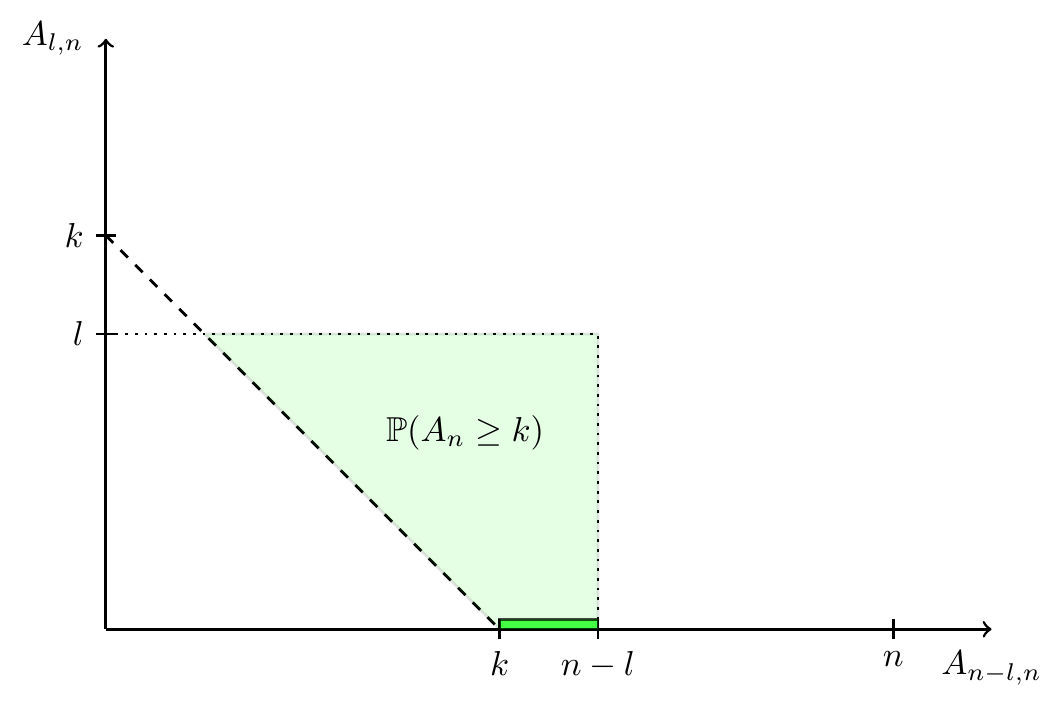}%
	}\hfill
\subfloat[Case $1-\beta \leq \alpha<1$.\label{fig:SumCaseC}]{%
  \includegraphics[width=.3\linewidth]{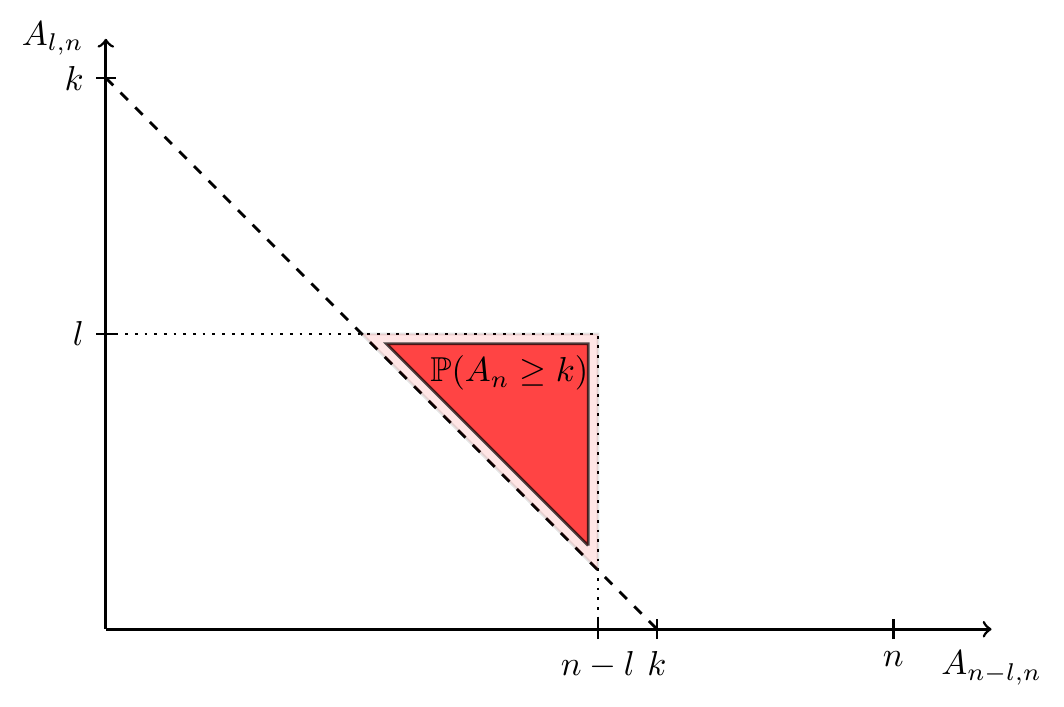}%
	}
\caption{Asymptotic contributions to the exceedance probability in Theorem~\ref{thm:ProportionalExceedanceNormal}.}
\label{fig:SumCases}
\end{figure*}

The cascading failure process is initiated by the failure of the single line connecting the two components. This event creates two disjoint components, and causes the load at each other line to increase by $\theta/n$ for a certain constant $\theta>0$. If this load increase exceeds the capacity of one or more lines, those lines will fail. Every subsequent failure again results in a load increase at the surviving lines, and we call such an increase the \emph{load surge}. This cascading failure process continues until the surplus capacity for every surviving line exceeds its load. We assume that the load surge caused by each consecutive failing line in the smaller component is $1/l$, and in the larger component $1/(n-l)$. We assume that both components remain connected after every consecutive line failure. In other words, the cascading failure processes behave independently between the two components and no further splitting will occur. 

The vulnerability of the network is measured by the probability that the blackout size, i.e.~the number of failed lines, after the cascading failure propagation has stopped, exceeds a certain threshold $k:=k_n$ as $n$ grows large. The asymptotic behavior shows how the \emph{exceedance probability} decays with respect to the threshold, and furnishes valuable qualitative insights. We consider all thresholds $k$ that are \emph{growing} with $n$, i.e.~both $k \rightarrow \infty$ and $n-k \rightarrow \infty$ as $n \rightarrow \infty$. 

Naturally, the behavior of the exceedance probability depends heavily on the sizes of the two components. We will consider the balanced case where both components have a size of order~$n$, as well as the disparate case where the smaller component is of a size smaller than order $n$.

Next, we introduce some notation that will be used throughout the paper. Let $A_n$ be the random variable representing the total number of line failures, and $A_{l,n}$ the number of line failures in a component of size $l$ disconnected from a component of size $n-l$. We assume both $\alpha:=\lim_{n\rightarrow \infty} k/n$ and $\beta:=\lim_{n\rightarrow \infty} l/n$ exist. We write $a_n=o(b_n)$ if $\lim_{n \rightarrow \infty} a_n/b_n =0$ and $a_n=O(b_n)$ if $\limsup_{n \rightarrow \infty} a_n/b_n< \infty$. Similarly, we write $a_n=\omega(b_n)$ if $\lim_{n \rightarrow \infty} b_n/a_n =0$ and $b_n=\Omega(b_n)$ if $\limsup_{n \rightarrow \infty} b_n/a_n< \infty$. Finally, we denote $a_n \sim b_n$ if $\lim_{n\rightarrow\infty} a_n/b_n=1$ and $a_n \propto b_n$ if $\lim_{n\rightarrow\infty} a_n/b_n \in (0,\infty)$.

The case of a single star network, where each line failures causes a single node to become isolated, has been studied rigorously in~\cite{Sloothaak2016}. Specifically, this case involves a star network consisting of $n+1$ nodes, $n$ lines with uniformly distributed surplus capacities, an initial load surge of $\theta/n$ at all lines, and subsequent load surges of $1/n$ at all surviving lines. In that case, the following result holds.

\begin{theorem}
Let $k_\star:={k_n}_\star$ and $k^\star:={k_n}^\star$ both be growing sequences of $n$ with $k_\star \leq k^\star$. Then,
\begin{align}
\lim_{n \rightarrow \infty} \sup_{k \in [k_\star,k^\star]} \big| k^{3/2} \sqrt{\frac{n-k}{n}} \Prob\left( A_n = k\right) - \frac{\theta}{\sqrt{2\pi}} \big| = 0,
\label{eq:NoSplittingEqual}
\end{align}
and
\begin{align}
\lim_{n \rightarrow \infty} \sup_{k \in [k_\star,k^\star]} \big| k^{1/2} \sqrt{\frac{n}{n-k}} \Prob\left( A_n \geq k\right) - \frac{2\theta}{\sqrt{2\pi}} \big| = 0.
\label{eq:NoSplittingExceedance}
\end{align}
\label{thm:NoSplitting}
\end{theorem}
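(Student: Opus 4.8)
The plan is to reduce both displays to a single exact formula for $\Prob(A_n=k)$ and then extract the two limits by uniform, elementary estimates. \emph{Step 1 (exact law of $A_n$).} Let $U_1,\dots,U_n$ be the i.i.d.\ standard uniform surplus capacities, with order statistics $U_{(1)}\le\cdots\le U_{(n)}$. After $j$ line failures the accumulated load surge equals $(\theta+j)/n$, so, scanning the lines in order of increasing surplus, line $(i)$ fails iff lines $(1),\dots,(i-1)$ have already failed and $U_{(i)}\le(\theta+i-1)/n$; hence
\[
\{A_n=k\}=\bigl\{U_{(i)}\le\tfrac{\theta+i-1}{n}\ \text{for all }i\le k\bigr\}\cap\bigl\{U_{(k+1)}>\tfrac{\theta+k}{n}\bigr\}.
\]
Condition on the event that exactly $k$ of the $U_i$ fall below the critical level $(\theta+k)/n$: this has probability $\binom{n}{k}(\tfrac{\theta+k}{n})^{k}(1-\tfrac{\theta+k}{n})^{n-k}$, it renders the constraint on $U_{(k+1)}$ automatic, and conditionally the $k$ small values rescaled by $n/(\theta+k)$ are i.i.d.\ standard uniform $V_1,\dots,V_k$ with $V_{(i)}\le(\theta+i-1)/(\theta+k)$ for all $i\le k$. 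A classical ballot-type identity (induction on $k$, or the cycle lemma) gives $\Prob(V_{(i)}\le\tfrac{\theta+i-1}{\theta+k}\ \forall i\le k)=\tfrac{\theta}{\theta+k}$, so that
\[
\Prob(A_n=k)=\binom{n}{k}\,\frac{\theta}{\theta+k}\,\Bigl(\frac{\theta+k}{n}\Bigr)^{k}\Bigl(1-\frac{\theta+k}{n}\Bigr)^{n-k},
\]
a quasi-binomial law (the derivation is valid for $\theta+k\le n$, which holds eventually whenever $n-k\to\infty$).

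\emph{Step 2 (proof of \eqref{eq:NoSplittingEqual}).} Factor the formula as
\[
\binom{n}{k}\Bigl(\tfrac{k}{n}\Bigr)^{k}\Bigl(1-\tfrac{k}{n}\Bigr)^{n-k}\cdot\Bigl(1+\tfrac{\theta}{k}\Bigr)^{k}\cdot\Bigl(1-\tfrac{\theta}{n-k}\Bigr)^{n-k}\cdot\frac{\theta}{\theta+k}.
\]
By Stirling's formula with its explicit remainder, $\binom{n}{k}(k/n)^{k}(1-k/n)^{n-k}=\sqrt{n/(2\pi k(n-k))}\,(1+O(1/\min(k,n-k)))$ uniformly, which on $k\in[k_\star,k^\star]$ equals $\sqrt{n/(2\pi k(n-k))}\,(1+o(1))$ since $k\ge k_\star\to\infty$ and $n-k\ge n-k^\star\to\infty$. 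On the same range $(1+\theta/k)^{k}=e^{\theta}(1+O(1/k_\star))$, $(1-\theta/(n-k))^{n-k}=e^{-\theta}(1+O(1/(n-k^\star)))$ and $\theta/(\theta+k)=(\theta/k)(1+O(1/k_\star))$. Multiplying, $\Prob(A_n=k)=\tfrac{\theta}{\sqrt{2\pi}}\,k^{-3/2}\sqrt{n/(n-k)}\,(1+o(1))$ uniformly over $[k_\star,k^\star]$, which is \eqref{eq:NoSplittingEqual}.

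\emph{Step 3 (proof of \eqref{eq:NoSplittingExceedance}).} Fix $M_n\to\infty$ with $M_n=o(\sqrt n)$ and $M_n=o(n-k^\star)$ (possible since $n-k^\star\to\infty$) and split $\Prob(A_n\ge k)=\sum_{j=k}^{n-M_n}\Prob(A_n=j)+\sum_{j=n-M_n+1}^{n}\Prob(A_n=j)$. On the first range both $j$ and $n-j$ diverge, so Step 2 applies uniformly and this sum equals $\tfrac{\theta}{\sqrt{2\pi}}(1+o(1))\sum_{j=k}^{n-M_n}j^{-3/2}\sqrt{n/(n-j)}$. A sum-to-integral comparison with $\int_{k}^{n}x^{-3/2}\sqrt{n/(n-x)}\,dx=2k^{-1/2}\sqrt{(n-k)/n}$ (the integrand is monotone on either side of $x=3n/4$, its boundary contributions are $O(1/\min(k_\star,n-k^\star))$ times the target, and $\int_{n-M_n}^{n}=O(\sqrt{M_n}/n)$) gives $\sum_{j=k}^{n-M_n}j^{-3/2}\sqrt{n/(n-j)}=2k^{-1/2}\sqrt{(n-k)/n}\,(1+o(1))$ uniformly. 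For the second sum, Step 1 and Stirling give $\Prob(A_n=n-m)=O(1/(n\sqrt{m\vee 1}))$ uniformly for $0\le m\le M_n$, hence it is $O(\sqrt{M_n}/n)=o(k^{-1/2}\sqrt{(n-k)/n})$ by the choice of $M_n$. Adding the pieces yields $\Prob(A_n\ge k)=\tfrac{2\theta}{\sqrt{2\pi}}k^{-1/2}\sqrt{(n-k)/n}\,(1+o(1))$ uniformly, i.e.\ \eqref{eq:NoSplittingExceedance}.

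\emph{Main obstacle.} The exponents are forced by the $k^{-3/2}$-type decay of the point masses; the real difficulty is uniformity over all growing sequences $[k_\star,k^\star]$, and the delicate region is $j$ near $n$. There the local asymptotic of Step 2 breaks down, the number of such terms can be of order $n-k$, and the target $k^{-1/2}\sqrt{(n-k)/n}$ can be as small as of order $(n-k^\star)^{1/2}/n$; only by tuning the truncation level $M_n$ to $n-k^\star$ (rather than taking a fixed slowly growing sequence) do the discarded tail and the sum-to-integral error become simultaneously negligible and uniform. Carrying the uniform Stirling remainders together with these bookkeeping bounds through the whole argument is the bulk of the work.
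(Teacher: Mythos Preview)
Your proof is correct and follows essentially the same truncation-and-sum strategy as the paper, but is considerably more self-contained. The paper simply cites \cite{Sloothaak2016} for \eqref{eq:NoSplittingEqual} and, for \eqref{eq:NoSplittingExceedance}, imports the upper and lower bounds from that reference with the specific truncation $\tilde{k}=n-\log(n-k^\star)$; this is exactly your $M_n=\log(n-k^\star)$, which satisfies your conditions $M_n\to\infty$ and $M_n=o(n-k^\star)$. What you add is the explicit derivation of the quasi-binomial formula via the ballot identity (Step~1) and the uniform Stirling expansion (Step~2), both of which the paper takes as given, together with the explicit evaluation $\int_k^n x^{-3/2}\sqrt{n/(n-x)}\,dx=2k^{-1/2}\sqrt{(n-k)/n}$ that makes the sum-to-integral step transparent. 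Your diagnosis of the main obstacle---that the truncation level must be tuned to $n-k^\star$ rather than chosen as a universal slowly growing sequence---is precisely the point the paper is making by its choice of $\tilde{k}$; the extra hypothesis $M_n=o(\sqrt n)$ in your Step~3 is harmless but not actually needed, since $M_n=o(n-k^\star)$ already forces $n-M_n>k^\star\to\infty$, which is all the uniform Stirling bound on $\Prob(A_n=n-m)$ requires.
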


\noindent
Theorem~\ref{thm:NoSplitting} thus states that uniformly for all $k \in [k_\star,k^\star]$,
\begin{align*}
\Prob\left( A_n = k\right) \sim \frac{\theta}{\sqrt{2\pi}} \sqrt{\frac{n}{n-k}}k^{-3/2},
\end{align*}
and
\begin{align*}
\Prob\left( A_n = k\right) \sim \frac{2\theta}{\sqrt{2\pi}} \sqrt{\frac{n-k}{n}} k^{-1/2}.
\end{align*}
Equation~\eqref{eq:NoSplittingEqual} is Theorem~1 of~\cite{Sloothaak2016}. The proof of~\eqref{eq:NoSplittingExceedance} can be found in the Appendix: it follows the lines of the proof of Theorem~2 in~\cite{Sloothaak2016}, but it is adapted to hold uniformly.

\section{Main results}
\label{sec:Main results}
The exceedance probability naturally depends on the threshold and the component sizes. In essence, we derive the tail distribution of $A_n=A_{l,n}+A_{n-l,n}$, where $A_{l,n}$ and $A_{n-l,n}$ are independent random variables. Note that $A_{l,n}$ involves the number of line failures in a single star network with initial load surge $\theta/n =\theta/l \cdot l/n$ and consecutive load surges $1/l$. Therefore, $A_{l,n}$ obeys a quasi-binomial distribution~\cite{Dobson2005}, where the asymptotic behavior is given by Theorem~\ref{thm:NoSplitting} (with $\theta$ replaced by $\theta \cdot l/n$). We point out that $A_{l,n}$ is thus heavy-tailed for all values that are not too close to $l$. We derive the asymptotic behavior of the probability that the sum of two quasi-binomial distributed random variables exceeds a network-size dependent threshold $k$.

As mentioned earlier, we distinguish between two cases: the balanced case where $\beta = \lim_{n \rightarrow \infty} l/n >0$, and the disparate case where $l=o(n)$.

\begin{table*}[htb]
\caption{Road map for proof of Theorem~\ref{thm:ProportionalExceedanceNormal}}
\begin{ruledtabular}
\begin{tabular}{cccc}
 & $\Prob\left(A_n \geq k ; A_{l,n} \leq s_\star \right)$ & $\Prob\left(A_n \geq k ; s_\star <  A_{l,n} < s^\star \right)$ & $\Prob\left(A_n \geq k ; A_{l,n} \geq  s^\star \right)$ \\
\hline
$0 \leq \alpha < \beta$ & $\sim \Prob\left(A_{n-l,n} \geq k \right)$ & negligible & $\sim \Prob\left(A_{l,n} \geq n\right)$ \\
$\beta \leq \alpha < 1-\beta$ & $\sim \Prob\left(A_{n-l,n} \geq  k \right)$ & negligible & 0 or negligible \\
$1-\beta < \alpha < 1$ & 0 & dominant & 0 \\
\end{tabular}
\end{ruledtabular}
\label{tab:OverviewNormalCase}
\end{table*}

\subsection{Balanced component sizes}
In this section we consider the case where the two component sizes are of the same order, and derive the tail of $A_n$. This tail behavior reflects the most likely scenarios for the number of line failures to exceed threshold $k$. Recall that $A_n$ is essentially the sum of two heavy-tailed random variables (when $l\rightarrow \infty$ as $n \rightarrow \infty$). Moreover, the tail of both random variables typically obeys a power-law distribution with exponent $-1/2$ in the balanced case. 

This observation yields an intuition for the asymptotic behavior of the exceedance probability. Figure~\ref{fig:SumCases} visually illustrates this intuition, where the thicker areas reflect which scenarios asymptotically contribute to the exceedance probability. When the threshold is significantly smaller than both component sizes ($\alpha < \beta$), the most likely scenario to exceed $k$ is when it is exceeded in one of the components alone. In other words, the event where both $A_{l,n}$ and $A_{n-l,n}$ attain large values is much less likely to occur. Similarly, if the threshold is only significantly smaller than the larger component size ($\beta \leq \alpha <1-\beta$), the most likely scenario for $A_n$ to exceed $k$ is when it is exceeded in the larger component, while the smaller component only has very few line failures. We observe that in both cases, the tail of $A_n$ therefore obeys a power-law distribution with exponent $-1/2$.

If $1-\beta<\alpha<1$, both components must have many line failures. The threshold is then most likely to be exceeded if in both components a non-negligible fraction of the lines have failed. This causes the power-law exponent to decrease to $-1/2 \cdot 1/2 =-1$, i.e.~a phase-transition appears at $\alpha=1-\beta$. Table~\ref{tab:OverviewNormalCase} summarizes these notions, providing a road map to prove the following theorem.

\begin{theorem}
Suppose $\beta \in (0,1/2]$ and $\alpha \neq 1-\beta$. As $n \rightarrow \infty$, $A_n$ asymptotically behaves as in Table~\ref{tab:ProportionalExceedanceNormalTheorem}, where
\begin{align*}
c(\alpha,\beta) &= \int_{x=\frac{\alpha-(1-\beta)}{\beta}}^{1}   \frac{x^{-3/2}}{\sqrt{1-x}} \sqrt{\frac{s(x)}{1-s(x)}} \, dx, \\
 s(x) &= \frac{\beta x - (\alpha-(1-\beta))}{1-\beta}.
\end{align*}
\label{thm:ProportionalExceedanceNormal}
\end{theorem}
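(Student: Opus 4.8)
The plan is to derive the asymptotics of $\Prob(A_n \geq k)$ by decomposing the event according to the value of $A_{l,n}$, following the three-column road map in Table~\ref{tab:OverviewNormalCase}. Concretely, I would fix cut-off sequences $s_\star = s_\star(n)$ and $s^\star = s^\star(n)$ (with $s_\star$ slowly growing and $s^\star$ slightly below $l$, the exact rates to be chosen later so that the error terms vanish) and write
\begin{align*}
\Prob(A_n \geq k) &= \Prob(A_n \geq k;\, A_{l,n} \leq s_\star) + \Prob(A_n \geq k;\, s_\star < A_{l,n} < s^\star) \\
&\quad + \Prob(A_n \geq k;\, A_{l,n} \geq s^\star).
\end{align*}
For each of the three regimes $0 \leq \alpha < \beta$, $\beta \leq \alpha < 1-\beta$, and $1-\beta < \alpha < 1$, I would evaluate each of the three pieces using the uniform local and tail estimates of Theorem~\ref{thm:NoSplitting} applied separately to $A_{l,n}$ (with effective parameter $\theta l/n$) and $A_{n-l,n}$ (with effective parameter $\theta(n-l)/n$), exploiting independence.

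The routine parts are the ``small'' and ``large'' columns. When $A_{l,n} \leq s_\star = o(k)$, the event $A_n \geq k$ is essentially $\{A_{n-l,n} \geq k - s_\star\} \approx \{A_{n-l,n} \geq k\}$, and since $\Prob(A_{l,n} \leq s_\star) \to 1$, this piece is $\sim \Prob(A_{n-l,n} \geq k)$ whenever $k$ is a growing sequence in the larger component, i.e.~when $\alpha < 1-\beta$; when $\alpha > 1-\beta$ we have $k > n-l$ eventually, so this piece is $0$. Symmetrically, the ``$A_{l,n} \geq s^\star$'' piece is handled by noting $A_{l,n}$ is essentially concentrated below $l$: for $\alpha < \beta$ the only way to have $A_{l,n}$ this large is the ``saturation'' event $\{A_{l,n} \geq n\}$ contribution (which in the star model corresponds to total collapse of the small component and is of the same order), while for $\alpha > 1-\beta$ having $A_{l,n} \geq s^\star$ still leaves $A_{n-l,n}$ needing to reach $k - s^\star$, which exceeds $n-l$ — impossible. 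The genuinely new work is the middle column in the regime $1-\beta < \alpha < 1$, where neither component alone suffices and
\begin{align*}
\Prob(A_n \geq k;\, s_\star < A_{l,n} < s^\star) = \sum_{j=s_\star}^{s^\star} \Prob(A_{l,n} = j)\, \Prob(A_{n-l,n} \geq k - j).
\end{align*}

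The main obstacle will be turning this sum into the integral defining $c(\alpha,\beta)$. The plan is to substitute $j = \beta x \cdot \frac{n}{?}$ — more precisely, parametrize $j$ so that the number of failures in the \emph{larger} component, $k-j$, is written as a fraction of $n-l$; this is exactly the role of $s(x)$, since $k - j \approx (n-l) s(x)$ when $j \approx \beta n x$, giving the constraint $s(x) \in (0,1)$ i.e.~$x \in (\tfrac{\alpha - (1-\beta)}{\beta}, 1)$ which is precisely the integration range. Inserting the power-law asymptotics $\Prob(A_{l,n} = j) \sim \frac{\theta l/n}{\sqrt{2\pi}}\sqrt{\tfrac{l}{l-j}} j^{-3/2}$ and $\Prob(A_{n-l,n} \geq k-j) \sim \frac{2\theta(n-l)/n}{\sqrt{2\pi}}\sqrt{\tfrac{n-l-(k-j)}{n-l}}(k-j)^{-1/2}$ from Theorem~\ref{thm:NoSplitting} (both uniform over the relevant ranges), collecting powers of $n$, and recognizing the Riemann sum, one obtains $\Prob(A_n \geq k) \propto n^{-1} \cdot c(\alpha,\beta)$ up to an explicit constant involving $\theta$. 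Care is needed at the two endpoints of the integration interval: near $x = 1$ the factor $j^{-3/2}$ together with $\sqrt{l/(l-j)}$ must be shown integrable (the square-root singularity is integrable), and near $x = \tfrac{\alpha-(1-\beta)}{\beta}$ the factor $(k-j)^{-1/2}$ blows up but again integrably; one must also verify that the contributions from $j \leq s_\star$ and $j \geq s^\star$ to this particular sum are negligible, which dictates the choice of cut-offs. Finally, to get uniform/clean statements one invokes dominated convergence for the Riemann sums, using the uniformity built into Theorem~\ref{thm:NoSplitting} to bound the summands. The hypothesis $\alpha \neq 1-\beta$ is exactly what keeps the integration interval bounded away from degeneracy and the phase transition sharp.
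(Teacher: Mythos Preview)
Your plan is essentially the paper's: the same three-way partition in $A_{l,n}$, the same use of Theorem~\ref{thm:NoSplitting} for the edge pieces (Lemmas~\ref{lem:ConditionSmallerComponentSmall}--\ref{lem:ConditionSmallerComponentLarge}), and the same Riemann-sum-to-integral passage for the middle piece when $1-\beta<\alpha<1$. Your parametrization $j\approx lx$ with $k-j\approx (n-l)(1-s(x))$ is exactly what the paper does, and the constant comes out correctly.

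Two statements in your sketch are wrong, though neither breaks the strategy. First, in the case $\alpha<\beta$ the ``large'' piece satisfies $\Prob(A_n\geq k;\,A_{l,n}\geq s^\star)\sim\Prob(A_{l,n}\geq k)$, not $\Prob(A_{l,n}\geq n)$; the latter is identically zero since $A_{l,n}\leq l<n$ (the ``$n$'' in Table~\ref{tab:OverviewNormalCase} is a typo for $k$, and there is no ``saturation'' event involved). Second, in the case $1-\beta<\alpha<1$ the ``large'' piece is \emph{not} zero: with $s^\star=l-o(l)$ one has $k-s^\star\approx k-l<(n-l)$ since $\alpha<1$, so $A_{n-l,n}\geq k-s^\star$ is certainly possible. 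The paper handles this by the product bound $\Prob(A_{l,n}\geq s^\star)\Prob(A_{n-l,n}\geq k-l)=o(k^{-1/2})\cdot O(k^{-1/2})=o(k^{-1})$, and in the main term additionally truncates $A_{n-l,n}<q^\star=(n-l)-o(n-l)$ to keep the local asymptotics of Theorem~\ref{thm:NoSplitting} uniformly valid near the upper edge; your formulation via the tail $\Prob(A_{n-l,n}\geq k-j)$ may let you avoid this extra cutoff, but you should still check uniformity as $j\downarrow r=k-(n-l)$, where $n-l-(k-j)\to 0$.
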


\begin{table}[b]
\caption{\label{tab:ProportionalExceedanceNormalTheorem}%
Asymptotic behavior of $A_n$ if $\beta \in (0,1/2]$ and $\alpha \neq 1-\beta$.
}
\begin{ruledtabular}
\begin{tabular}{c|c}
If & As $n\rightarrow \infty$, $\Prob\left(A_n \geq k\right) \sim$ \\
\hline
$0 \leq \alpha<\beta$ 			& $\frac{2\beta \theta}{\sqrt{2\pi}} \sqrt{1-\frac{k}{l}} k^{-1/2} + \frac{2(1-\beta) \theta}{\sqrt{2\pi}} \sqrt{1-\frac{k}{n-l}} k^{-1/2}$\\
$\beta \leq \alpha < 1-\beta$ 	& $\frac{2(1-\beta) \theta}{\sqrt{2\pi}} \sqrt{1-\frac{k}{n-l}} k^{-1/2}$\\
$1-\beta < \alpha <1$			& $\frac{\alpha \sqrt{\beta(1-\beta)} \theta^2}{\pi} c(\alpha,\beta) k^{-1}$\\
\end{tabular}
\end{ruledtabular}
\end{table}

\begin{table}[b]
\caption{\label{tab:ProportionalExceedanceSpecialTheorem}%
Asymptotic behavior of $A_n$ as $n \rightarrow \infty$ if $\beta \in (0,1/2]$ and $\alpha = 1-\beta$, with $\eta := \lim_{n\rightarrow \infty} t/r$.
}
\begin{ruledtabular}
\begin{tabular}{c|c}
If & As $n\rightarrow \infty$, $\Prob\left(A_n \geq k\right) \sim$ \\
\hline
$\beta \in [0,1/2)$ & \\
\hline
$r<0, -r= \omega((\log k)^2)$ & $\frac{2(1-\beta)\theta}{\sqrt{2\pi}} \frac{\sqrt{-r}}{k}$\\
$r<0, -r \propto (\log k)^2$ & $\frac{2(1-\beta)\theta}{\sqrt{2\pi}} \frac{\sqrt{-r}}{k} + \frac{\beta(1-\beta) \theta^2}{\pi} \frac{\log k}{k}$\\
$r<0, -r \propto (\log k)^2$ & $\frac{\beta(1-\beta) \theta^2}{\pi} \frac{\log k}{k}$\\
otherwise & $\frac{\beta(1-\beta) \theta^2}{\pi} \frac{\log (k/r)}{k}$\\
\hline
$\beta =1/2$ & \\
\hline
$-r= \omega((\log k)^2), \eta >0$ & $\frac{\theta}{\sqrt{2\pi}} \frac{\sqrt{-r}+\sqrt{-t}}{k}$\\
$-r= \omega((\log k)^2), \eta \leq 0$ & $\frac{\theta}{\sqrt{2\pi}} \frac{\sqrt{-r}}{k}$\\
$-r \propto (\log k)^2, \eta >0$ & $\frac{\theta}{\sqrt{2\pi}} \frac{\sqrt{-r}+\sqrt{-t}}{k} +\frac{\theta^2}{2\pi} \frac{\log k}{k}$ \\
$-r \propto (\log k)^2, |t| \neq k^{1-o(1)}$ & $\frac{\theta}{\sqrt{2\pi}} \frac{\sqrt{-r}}{k} +\frac{\theta^2}{4\pi} \frac{\log k+\log(k/(|t|+1))}{k}$\\
$-r \propto (\log k)^2, \, t = k^{1-o(1)}$& $\frac{\theta}{\sqrt{2\pi}} \frac{\sqrt{-r}}{k} +\frac{\theta^2}{4\pi} \frac{\log k}{k}$ \\
$r \textrm{ else}, \frac{k}{|t|+1} \neq \left(\frac{k}{|r|+1}\right)^{o(1)}$ & $\frac{\theta^2}{4\pi} \frac{\log(k/(|r|+1))+\log(k/(|t|+1))}{k}$ \\
otherwise & $\frac{\theta^2}{4\pi} \frac{\log (k/(|r|+1))}{k}$\\
\end{tabular}
\end{ruledtabular}
\end{table}

\noindent
The proof of Theorem~\ref{thm:ProportionalExceedanceNormal} makes the provided intuition more rigorous. That is, we partition the event of exceeding the threshold in three terms:
\begin{align}
&\{A_n \geq k \} = \{A_n \geq k ; A_{l,n} \leq s_\star \} \nonumber \\
&\, \cup \{A_n \geq k ; s_\star <  A_{l,n} < s^\star \} \cup \{A_n \geq k ; A_{l,n} \geq  s^\star \}.
\end{align}
where $s_\star$ is chosen appropriately small, and $s^\star$ appropriately large. Table~\ref{tab:OverviewNormalCase} illustrates which term will yield the dominant behavior in each of the cases in Theorem~\ref{thm:ProportionalExceedanceNormal}.

The reasoning turns more subtle at the boundary where the threshold is either close to the larger component size, or when it close to $n$ itself. In view of Theorem~\ref{thm:ProportionalExceedanceNormal}, the first case ($\alpha=1-\beta$) corresponds to the interval of threshold values where we move from a power-law distribution with exponent $-1/2$ to one with exponent $-1$. When the larger component remains significantly larger than the smaller one ($0 <\beta <1-\beta <1$), this phase transition occurs as follows. As long as threshold $k$ is sufficiently smaller than $l$, the most likely scenario to exceed $k$ remains when it is already exceeded in the larger component alone. However, the closer $k$ is to $l$, the smaller this probability is and it is in fact zero when $l > k$. From some specific point, the scenario where the number of line failures in the larger component is close to $k$, yet not exceeding it, becomes the most likely one. If $\alpha=\beta=1-\beta=1/2$, a similar likely event can also occur for the smaller component. Figure~\ref{fig:SumCases2} reflects this intuition of Theorem~\ref{thm:ProportionalExceedanceSpecial}. Again, the thick areas indicate which scenarios possibly asymptotically contribute to the exceedance probability.

\begin{figure*}
\centering
\subfloat[Case $\beta \in (0,1/2)$.\label{fig:SumCaseA2}]{%
  \includegraphics[width=.45\linewidth]{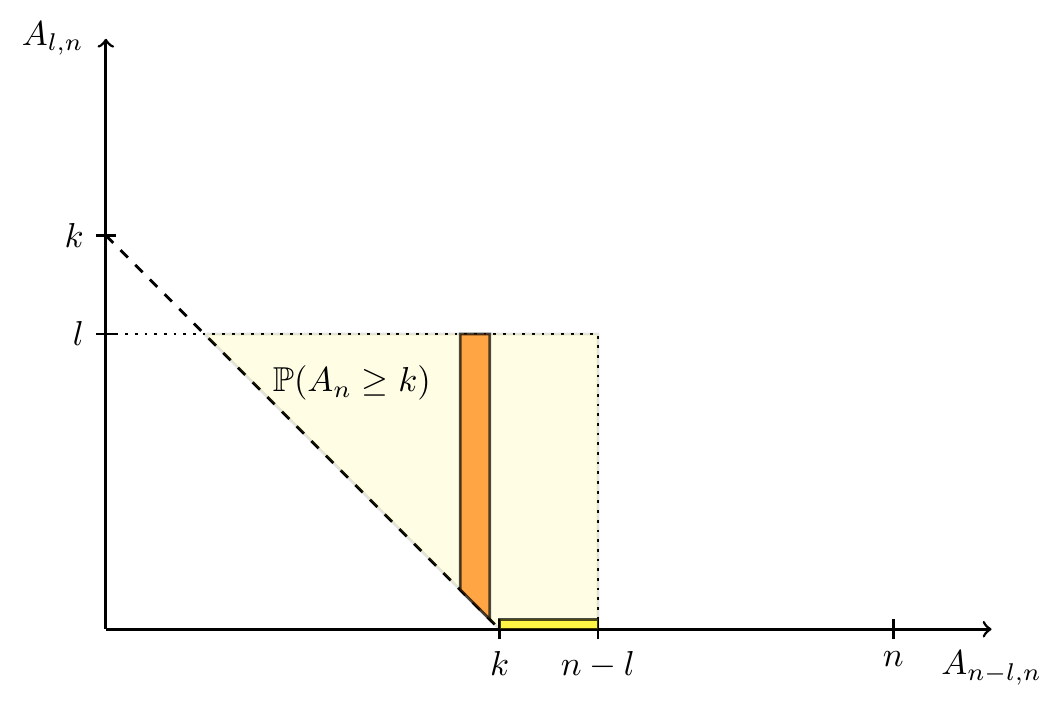}%
	}\hfill
\subfloat[Case $\beta=1-\beta=1/2$.\label{fig:SumCaseB2}]{%
  \includegraphics[width=.45\linewidth]{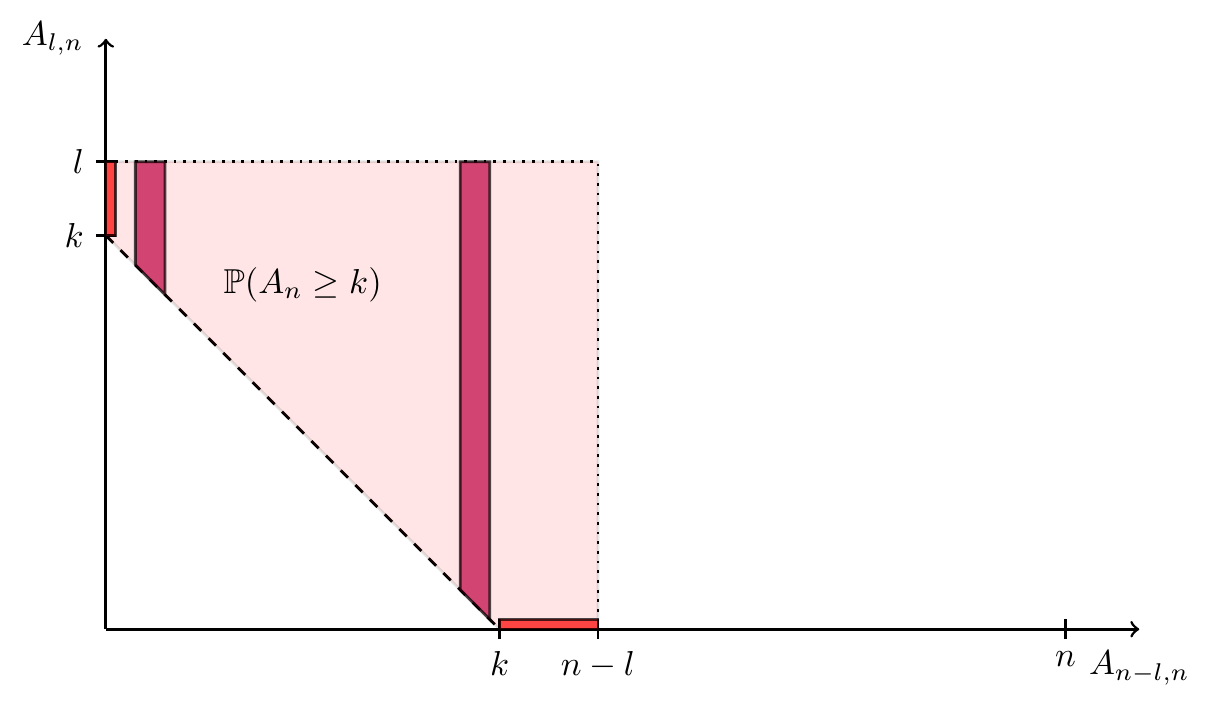}%
	}
\caption{Asymptotic contributions to $\Prob(A_n \geq k)$ in Theorem~\ref{thm:ProportionalExceedanceSpecial} if $\alpha=1-\beta$.}
\label{fig:SumCases2}
\end{figure*}

Finally, if the threshold is close to the network size itself ($\alpha=1$), almost all lines in both of the components need to have failed. Visually, this case is comparable to the one in Figure~\ref{fig:SumCaseC}, where the triangle is minuscule. 

\begin{theorem}
Suppose $\alpha=1-\beta$ and $\beta \in (0,1/2]$, and write $r:=r_n=k-(n-l)$ and $t:=t_n=k-l$. Then as $n \rightarrow \infty$, $A_n$ asymptotically behaves as in Table~\ref{tab:ProportionalExceedanceSpecialTheorem}. If $\alpha=1$, then as $n\rightarrow \infty$,
\begin{align}
\Prob\left(A_n \geq k\right) \sim \frac{\theta^2}{2}(n-k)k^{-2}.
\end{align}
\label{thm:ProportionalExceedanceSpecial}
\end{theorem}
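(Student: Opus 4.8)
The starting point is the convolution
\[
\Prob(A_n\ge k)=\sum_{j\ge 0}\Prob(A_{l,n}=j)\,\Prob(A_{n-l,n}\ge k-j),
\]
valid by independence, into which I would substitute, term by term, the asymptotics of Theorem~\ref{thm:NoSplitting} (with $\theta$ replaced by $\theta l/n$ in the first factor and by $\theta(n-l)/n$ in the second); the uniformity in Theorem~\ref{thm:NoSplitting} is precisely what licenses such a substitution over the whole range of indices that matters, the finitely many boundary indices being handled by a direct estimate. Since $\alpha=1-\beta$ we have $l\sim\beta n$, $n-l\sim(1-\beta)n$ and $k\sim(1-\beta)n$; because $A_{n-l,n}\le n-l$ only the terms with $j\ge r$ are not negligible, while those with $j>k$ collapse to $\Prob(A_{l,n}>k)$, which matters only when $t<0$, hence only when $\beta=1/2$ (note $t-r=n-2l\ge0$). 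Inserting the local estimates, the generic summand becomes
\[
\frac{\theta^{2}\beta(1-\beta)}{\pi k}\cdot\frac{j^{-3/2}\sqrt{j-r}}{\sqrt{(1-j/l)(1-j/k)}}\,\bigl(1+o(1)\bigr),
\]
so the problem reduces to summing this against suitable cut-offs $s_\star<s^\star$.

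For $\beta\in(0,1/2)$ I would follow the three-way split announced in the text, with $s_\star\to\infty$ slowly (say $s_\star\asymp\log k$) and $s^\star=\epsilon l$ for a small fixed $\epsilon>0$, eventually letting $\epsilon\downarrow0$. The contribution of $A_{l,n}\le s_\star$ is $(1+o(1))\Prob(A_{l,n}\le s_\star)\Prob(A_{n-l,n}\ge k)\sim\tfrac{2(1-\beta)\theta}{\sqrt{2\pi}}\tfrac{\sqrt{-r}}{k}$ when $-r\to\infty$, equals $0$ once $r\ge s_\star$, and is $o(\log k/k)$ whenever $-r$ is of order at most $(\log k)^{2}$, so it surfaces in the table only when $-r$ is comparatively large. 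The contribution of $A_{l,n}\ge s^\star$ is bounded by $\Prob(A_{l,n}\ge s^\star)\Prob(A_{n-l,n}\ge t)=O\bigl((\epsilon l)^{-1/2}n^{-1/2}\bigr)=o(\log k/k)$ and is always negligible. The middle piece equals, up to $1+o(1)$ and an $O(\epsilon)$ distortion from the factor $\sqrt{1-j/l}$,
\[
\frac{\theta^{2}\beta(1-\beta)}{\pi k}\int_{\max(s_\star,r)}^{\epsilon l}x^{-3/2}\sqrt{x-r}\,dx=\frac{\theta^{2}\beta(1-\beta)}{\pi k}\Bigl(\log\tfrac{\epsilon l}{\max(s_\star,|r|)}+O(1)+O\bigl(\sqrt{|r|/s_\star}\bigr)\Bigr),
\]
and since $\log(\epsilon l)\sim\log k$ this is $\tfrac{\theta^{2}\beta(1-\beta)}{\pi}\tfrac{\log(k/|r|)}{k}$, resp.\ $\tfrac{\log(k/r)}{k}$ when $r>0$. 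Adding the three pieces and comparing $\sqrt{-r}/k$ with $\log k/k$ reproduces exactly the sub-cases of Table~\ref{tab:ProportionalExceedanceSpecialTheorem} for $\beta\in[0,1/2)$, the balance occurring at $-r\asymp(\log k)^{2}$; when $\beta=0$ the summation interval degenerates to a boundary layer at its upper end in which the same integral returns $\tfrac{\pi}{2}(l-r)/l$, matching the $\alpha=1$ formula below by continuity.

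The case $\beta=1/2$ is the main obstacle. Now $l\sim n-l\sim k\sim n/2$, so in the generic summand the factor $\sqrt{1-j/k}$ no longer stays bounded away from $0$: for $j$ near $l$ it vanishes at the same rate as $\sqrt{1-j/l}$, and the single integral above is replaced by one that diverges at \emph{both} endpoints. The remedy is a genuinely two-sided split: one isolates the mass of $j$ near $l$ (equivalently of $A_{n-l,n}$ near $t$) exactly as the first piece isolated $j$ near $0$, which produces a $\sqrt{-t}/k$ term when $t<0$ together with a second logarithm $\log(k/(|t|+1))$ coming from the middle regime seen from its upper end. Since $t\ge r$, a negative $t$ forces a negative $r$; the parameter $\eta=\lim t/r$ then records whether $\sqrt{-r}$ and $\sqrt{-t}$ both survive and are of the same order, while the clause $t=k^{1-o(1)}$ isolates when $\log(k/(|t|+1))=o(\log k)$ and drops out. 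The genuine difficulty is bookkeeping: ensuring every error term is uniformly $o$ of the claimed leading term across all overlapping parameter windows --- in particular $-r\asymp(\log k)^{2}$ simultaneously with the various magnitudes of $|t|$ --- which is the source of the subtleties emphasised in the Introduction.

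Finally, the statement for $\alpha=1$ is a direct local-limit convolution. Writing $j=l-a$ (so $k-j=t+a$), Theorem~\ref{thm:NoSplitting} gives $\Prob(A_{l,n}=l-a)\sim\tfrac{\theta}{\sqrt{2\pi}}(n\sqrt a)^{-1}$ and $\Prob(A_{n-l,n}\ge t+a)\sim\tfrac{2\theta}{\sqrt{2\pi}}\tfrac{\sqrt{n-k-a}}{n}$, uniformly for $a$ bounded away from $0$ and from $n-k$; the contributions of the two endpoint zones are easily bounded and vanish relative to the bulk. Hence
\[
\Prob(A_n\ge k)\sim\frac{\theta^{2}}{\pi n^{2}}\sum_{a=1}^{n-k}\sqrt{\frac{n-k-a}{a}}\sim\frac{\theta^{2}}{\pi n^{2}}(n-k)\int_0^1\sqrt{\frac{1-u}{u}}\,du=\frac{\theta^{2}}{2}\cdot\frac{n-k}{n^{2}},
\]
and $n\sim k$ turns this into $\Prob(A_n\ge k)\sim\tfrac{\theta^{2}}{2}(n-k)k^{-2}$, as claimed.
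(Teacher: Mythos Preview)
Your proposal is correct and follows essentially the same route as the paper: a multi-term split of the convolution with carefully chosen cutoffs, identification of the logarithmic divergence in the middle piece, and separate treatment of the boundary regimes (including the six-piece refinement when $\beta=1/2$ and the Beta-integral computation when $\alpha=1$). The only noteworthy difference is that the paper writes the convolution in the Abel-dual form $\sum_j \Prob(A_{l,n}\ge j)\,\Prob(A_{n-l,n}=k-j)$ and partitions on the value of $A_{n-l,n}$; this yields the symmetric integrand $j^{-1/2}(j-r)^{-1/2}$ (so your boundary correction $O(\sqrt{|r|/s_\star})$ never appears) and isolates $\Prob(A_{n-l,n}\ge k)$ as a clean first term rather than burying it inside your piece $\{A_{l,n}\le s_\star\}$, while the cutoffs are taken adaptively in $|r|$ rather than as $s^\star=\epsilon l$ with $\epsilon\downarrow 0$.
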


\noindent
Since there is a sharp transition from a power-law with exponent $-1/2$ to one with exponent $-1$ when $\alpha=1-\beta$, it is natural to consider the number of failures in the bigger component in more detail. In the proof of Theorem~\ref{thm:ProportionalExceedanceSpecial}, we partition the event of exceeding the threshold with respect to the number of line failures in the bigger component. 

When $\alpha=1-\beta$ with $\beta \in (0,1/2)$, we use the identity
\begin{align}
\Prob&\left(A_n \geq k \right)  = \Prob\left(A_{n-l,n} \geq k \right) \nonumber\\
&\hspace{.6cm}+\Prob\left(A_n \geq k ; A_{n-l,n} \in [k-s_\star,k) \right) \nonumber\\
&\hspace{1cm}+\Prob\left(A_n \geq k ; A_{n-l,n} \in (k-s^\star,k-s_\star)\right) \nonumber\\
&\hspace{1.5cm}+\Prob\left(A_n \geq k ; A_{n-l,n} \in [k-l,k-s^\star] \right),
\label{eq:SpecialFirstIdentity}
\end{align}
where $s_\star$ and $s^\star$ are chosen in a specific way. Labeling each term I, II, III, IV respectively, the asymptotic behavior of each term can be evaluated separately, which yields the result as in Table~\ref{tab:TermsAsympoticsCase1}.

\begin{table*}[htb]
\begin{ruledtabular}
\begin{tabular}{ccc}
Term & Probability & Asymptotic behavior \\
\hline
I & $\Prob\left(A_{n-l,n} \geq k \right)$ & $\frac{2(1-\beta)\theta}{\sqrt{2\pi}} \frac{\sqrt{-r}}{k} \mathbbm{1}_{\{-r>0 \textrm{ growing}\}}+ O(k^{-1}) \cdot \mathbbm{1}_{\{r\leq 0 \textrm{ fixed}\}}$ \\

II & $\Prob\left(A_n \geq k ; A_{n-l,n} \in [k-s_\star,k) \right)$ & $o\left( \frac{\log(l/ (|r|+1))}{k}  \right)$ \\

III & $\Prob\left(A_n \geq k ; A_{n-l,n} \in (k-s^\star,k-s_\star) \right)$ & $\frac{\beta(1-\beta) \theta^2}{\pi} \frac{\log (k/(|r|+1)}{k}$\\

IV & $\Prob\left(A_n \geq k ; A_{n-l,n} \in [k-l,k-s^\star] \right)$ & $o\left( \frac{\log(l/ (|r|+1))}{k}  \right)$ \\
\end{tabular}
\end{ruledtabular}
\caption{Asymptotic behavior of terms in~\eqref{eq:SpecialFirstIdentity}.}
\label{tab:TermsAsympoticsCase1}
\end{table*}

The result then follows by determining the dominant terms of~\eqref{eq:SpecialFirstIdentity} for the various cases of the threshold. It turns out there is a transition in dominant behavior when $-r \propto (\log k)^2$. For a smaller threshold, the threshold remains most likely to be exceeded in the larger component alone. Otherwise, it is most likely that exceeding the threshold is caused by almost all lines failing in the larger component in conjunction with a growing number of line failures in the smaller component. This outcome is summarized by Table~\ref{tab:OverviewSpecialCase1}. 
\begin{table*}[htb]
\begin{ruledtabular}
\begin{tabular}{ccccc}
 & I & II & III & IV  \\
\hline
$-r = \omega((\log k)^2)$ & $\frac{2(1-\beta)\theta}{\sqrt{2\pi}} \frac{\sqrt{-r}}{k}$ & negligible & negligible & negligible \\

$-r \propto (\log k)^2$ & $\frac{2(1-\beta)\theta}{\sqrt{2\pi}} \frac{\sqrt{-r}}{k}$ & negligible & $\frac{\beta(1-\beta) \theta^2}{\pi} \frac{\log k}{k}$ & negligible \\

$|r| = o(\log k)^2$ & 0 or negligible & negligible & $\frac{\beta(1-\beta) \theta^2}{\pi} \frac{\log k}{k}$ & negligible \\ \\

$r>0$ growing & 0 & negligible & $\frac{\beta(1-\beta) \theta^2}{\pi} \frac{\log (k/r)}{k}$ & negligible \\
\end{tabular}
\end{ruledtabular}
\caption{Road map for proof of Theorem~\ref{thm:ProportionalExceedanceSpecial} with $\alpha=1-\beta$ with $\beta \in (0,1/2)$.}
\label{tab:OverviewSpecialCase1}
\end{table*}

The situation turns even more subtle when $\alpha=\beta=1/2$. In the most extreme case, we may have $l=n-l=n/2$ and one cannot distinguish between a smaller and larger component. The cascading process in the component of size $l$ can therefore become more significant, leading to more possible scenarios likely to have occurred if the threshold is exceeded. To capture these scenarios, we need to refine the partitioning of events in~\eqref{eq:SpecialFirstIdentity} to
\begin{align}
&\Prob\left(A_n \geq k \right)  = \Prob\left(A_{n-l,n} \geq k \right) \nonumber\\
&+ \Prob\left(A_n \geq k ; A_{n-l,n} \in [k-s_\star,k) \right) \nonumber\\
&\hspace{.3cm}+ \Prob\left(A_n \geq k ; A_{n-l,n} \in (k-s^\star,k-s_\star) \right) \nonumber\\
&\hspace{.6cm}+\Prob\left(A_n \geq k ; A_{n-l,n} \in (k-q_\star,k-s^\star] \right)\nonumber\\
&\hspace{.9cm}+\Prob\left(A_n \geq k ; A_{n-l,n} \in [k-q^\star,k-q_\star] \right)\nonumber\\
&\hspace{1.2cm}+\Prob\left(A_n \geq k ; A_{n-l,n} \in [k-l,k-q^\star) \right).
\label{eq:SpecialSecondIdentity}
\end{align}
In other words, we partition the event $\{A_n \geq k ; A_{n-l,n} \in [k-l,k-s^\star] \}$ in~\eqref{eq:SpecialFirstIdentity} in three disjoint events in this case. In the proof, we determine the asymptotic behavior of the various terms in the identity~\eqref{eq:SpecialSecondIdentity}, which leads to the result given in Table~\ref{tab:TermsAsympoticsCase2}. Table~\ref{tab:OverviewSpecialCase2} illustrates which terms contribute to the asymptotic tail behavior of $A_n$.

\begin{table*}[htb]
\begin{ruledtabular}
\begin{tabular}{ccc}
Term & Probability & Asymptotic behavior \\
\hline
I & $\Prob\left(A_{n-l,n} \geq k \right)$ & $\frac{\theta}{\sqrt{2\pi}} \frac{\sqrt{-r}}{k} \mathbbm{1}_{\{-r>0 \textrm{ growing}\}}+ O(k^{-1}) \cdot \mathbbm{1}_{\{r\leq 0 \textrm{ fixed}\}}$ \\

II & $\Prob\left(A_n \geq k ; A_{n-l,n} \in [k-s_\star,k) \right)$ & $o\left( \frac{\log(l/ (|r|+1))}{k}  \right)$ \\

III & $\Prob\left(A_n \geq k ; A_{n-l,n} \in (k-s^\star,k-s_\star) \right)$ & $\frac{\theta^2}{4\pi} \frac{\log k}{k}$\\

IV & $\Prob\left(A_n \geq k ; A_{n-l,n} \in (k-q_\star,k-s^\star] \right)$ & $o\left( \frac{\log(k/(|r|+1)}{k}+ \frac{\log(k/(|t|+1)}{k} \right)$ \\

V & $\Prob\left(A_n \geq k ; A_{n-l,n} \in [k-q^\star,k-q_\star] \right)$ & $\frac{\theta^2}{4\pi} \frac{\log(k/(|t|+1)}{k}$\\

VI & $\Prob\left(A_n \geq k ; A_{n-l,n} \in [k-l,k-q^\star) \right)$ & $\frac{\theta}{\sqrt{2\pi}} \frac{\sqrt{-t}}{k} \mathbbm{1}_{\{-t>0 \textrm{ growing}\}} + o\left( \frac{\log(l/ (|t|+1))}{k}  \right)$ 
\end{tabular}
\end{ruledtabular}
\caption{Asymptotic behavior of terms in~\eqref{eq:SpecialSecondIdentity}.}
\label{tab:TermsAsympoticsCase2}
\end{table*}

\begin{table*}[htb]
\begin{ruledtabular}
\begin{tabular}{ccccccc}
 & I & II & III & IV & V & VI \\
\hline
$-r= \omega((\log k)^2), \lim_{n\rightarrow \infty} t/r >0$ & $\frac{\theta}{\sqrt{2\pi}} \frac{\sqrt{-r}}{k}$ & negl. & negl. & negl. & negl. & $\frac{\theta}{\sqrt{2\pi}} \frac{\sqrt{-t}}{k}$ \\

$-r= \omega((\log k)^2), \, \lim_{n\rightarrow \infty} t/r \leq 0$ & $\frac{\theta}{\sqrt{2\pi}} \frac{\sqrt{-r}}{k}$ & negl. & negl. & negl. & negl. & negl. \\

$-r \propto (\log k)^2, \lim_{n\rightarrow \infty} t/r >0$ & $\frac{\theta}{\sqrt{2\pi}} \frac{\sqrt{-r}}{k}$ & negl. & $\frac{\theta^2}{4\pi} \frac{\log k}{k}$ & negl. & $\frac{\theta^2}{4\pi} \frac{\log k}{k}$ & $\frac{\theta}{\sqrt{2\pi}} \frac{\sqrt{-t}}{k}$ \\

$-r \propto (\log k)^2, |t| \neq k^{1-o(1)}$ & $\frac{\theta}{\sqrt{2\pi}} \frac{\sqrt{-r}}{k}$ & negl. & $\frac{\theta^2}{4\pi} \frac{\log k}{k}$ & negl. & $\frac{\theta^2}{4\pi} \frac{\log (\frac{k}{|t|+1})}{k}$ & negl. \\

$-r \propto (\log k)^2, \, t = k^{1-o(1)}$ & $\frac{\theta}{\sqrt{2\pi}} \frac{\sqrt{-r}}{k}$ & negl. & $\frac{\theta^2}{4\pi} \frac{\log k}{k}$ & negl. & negl. & negl. \\

$r \textrm{ otherwise}, \frac{k}{|t|+1} \neq \left(\frac{k}{|r|+1}\right)^{o(1)}$ & negl. & negl. & $\frac{\theta^2}{4\pi} \frac{\log (\frac{k}{|r|+1})}{k}$ & negl. & $\frac{\theta^2}{4\pi} \frac{\log (\frac{k}{|t|+1})}{k}$ & negl. \\

otherwise & negl. & negl. & $\frac{\theta^2}{4\pi} \frac{\log (\frac{k}{|r|+1})}{k}$ & negl. & negl. & negl. 
\end{tabular}
\end{ruledtabular}
\caption{Road map for proof of Theorem~\ref{thm:ProportionalExceedanceSpecial} with $\alpha=1-\beta=1/2$.}
\label{tab:OverviewSpecialCase2}
\end{table*}

The final case of Theorem~\ref{thm:ProportionalExceedanceSpecial} involves the case where the threshold is close to the network size, i.e.~$\alpha=1$. Both component sizes are therefore significantly smaller than the threshold. In this case, we partition the event of exceeding the threshold in only three disjoint events:
\begin{align}
\Prob&\left(A_n \geq k \right)  = \Prob\left(A_n \geq k ; A_{n-l,n} \in (k-s_\star,n-l] \right) \nonumber \\
&\hspace{.5cm}+ \Prob\left(A_n \geq k ; A_{n-l,n} \in [k-s^\star,k-s_\star] \right) \nonumber \\
&\hspace{1cm} + \Prob\left(A_n \geq k ; A_{n-l,n} \in (k-l,k-s^\star] \right).
\label{eq:SpecialThirdIdentity}
\end{align}
For appropriate choices of $s_\star$ and $s^\star$, we will show that the second term is dominant and yields the result in Theorem~\ref{thm:ProportionalExceedanceSpecial}.

\subsection{Disparate component sizes}
\label{sec:NonProportional}
Next, we turn to the case $l=o(n)$. The smaller component is hence of a size that is (almost) negligible compared to the larger component. Essentially, this results in a framework that for most thresholds ($0 < \alpha<1$), no matter what occurs in the smaller component, the only likely manner to exceed the threshold is when it is exceeded in the larger component alone. This intuition remains true for $\alpha = 0$: the initial disturbance $\theta/n = \theta/l \cdot l/n$ is relatively minor in the smaller component and unlikely to cause the cascading failure process to propagate further.

When $\alpha=1-\beta=1$, other scenarios to exceed $k$ may become relevant. In particular, when $k > n-l$ the number of line failures in the larger component alone cannot exceed $k$. The partitioning of the event of exceeding threshold $k$ needs to be done carefully, resulting in many phase transitions.

\begin{theorem}
Suppose $\beta =0$ and $r=k-(n-l)$. If $\alpha<1$, or $\alpha=1$ with $-r = \Omega(l)$, then as $n \rightarrow \infty$,
\begin{align}
\Prob\left(A_n \geq k\right) \sim \frac{2\theta}{\sqrt{2\pi}}  \sqrt{1-\frac{k}{n-l}} k^{-1/2}.
\label{eq:NonProportionalNormal}
\end{align}
If $k\leq n-l$, $-r = o(l)$ growing with $l$, then as $n \rightarrow \infty$,
\begin{align}
\Prob\left(A_n \geq k\right) \sim \left\{ \begin{array}{l}
\frac{2\theta}{\sqrt{2\pi}} \frac{\sqrt{-r}}{k}, \\
\frac{2\theta}{\sqrt{2\pi}} \frac{\sqrt{-r}}{k} + \frac{\theta^2 l \log l}{\pi} k^{-2}, \\
\frac{\theta^2 l \log l}{\pi} k^{-2} ,
\end{array}\right.
\label{eq:NonProportionalRVeryNegative}
\end{align}
if $l=o\left( \frac{n\sqrt{-r}}{\log n} \right)$, if $l\propto  \frac{n\sqrt{-r}}{\log n}$, or if $l = \omega\left( \frac{n\sqrt{-r}}{\log n} \right)$, respectively. If $r\leq 0$ is fixed, then as $n \rightarrow \infty$,
\begin{align}
\Prob\left(A_n \geq k\right) \sim \left\{ \begin{array}{l}
\sum_{m=0}^{\max\{ -r,\lfloor \theta \rfloor \}} \frac{\theta (m-\theta)^m}{m!} {e}^{-(m-\theta)} k^{-1},\\
\frac{\theta^2 l \log l}{\pi} k^{-2}\\
\hspace{.2cm}+\sum_{m=0}^{\max\{ -r,\lfloor \theta \rfloor \}} \frac{\theta (m-\theta)^m}{m!} {e}^{-(m-\theta)} k^{-1}, \\
\frac{\theta^2 l \log l}{\pi} k^{-2},
\end{array}\right.
\label{eq:NonProportionalRFixed}
\end{align}
if $l=o\left( \frac{n}{\log n} \right)$, if $l\propto  \frac{n}{\log n}$, or if $l = \omega\left( \frac{n}{\log n} \right)$, respectively. If $k>n-l$ and $r=o(l)$ growing with $l$, then as $n \rightarrow \infty$,
\begin{align}
\Prob\left(A_n \geq k\right) \sim \frac{\theta^2 l \log (l/r)}{\pi} k^{-2}.
\label{eq:NonProportionalRPositive}
\end{align}
If $k>n-l$ and $\gamma:= \lim_{n\rightarrow \infty} r/l \in (0,1)$, then as $n \rightarrow \infty$,
\begin{align}
\Prob\left(A_n \geq k\right) \sim \frac{\theta^2}{\pi} c(y) \frac{\sqrt{l}}{k^2},
\end{align}
where 
\begin{align*}
c(y)= \int_{y=\gamma}^1 \sqrt{\frac{1-y}{(y-\gamma)y}} \, dy .
\end{align*}
Finally, if $k>n-l$ and $r=l-o(l)$, then as $n \rightarrow \infty$,
\begin{align}
\Prob\left(A_n \geq k\right) \sim \frac{\theta^2}{2}(n-k)k^{-2}.
\end{align}
\label{thm:NonProportional}
\end{theorem}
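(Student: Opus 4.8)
The plan is to exploit the decomposition $A_n = A_{l,n} + A_{n-l,n}$ into independent summands, where $A_{l,n}$ is the failure count in a single star of size $l$ with parameter $\mu_n := \theta l/n \to 0$ and $A_{n-l,n}$ that of a star of size $n-l$ with parameter $\nu_n := \theta(n-l)/n \to \theta$, both quasi-binomially distributed. The first step is to assemble \emph{uniform} local and tail estimates for the two laws. For $A_{n-l,n}$ the bulk behaviour follows from Theorem~\ref{thm:NoSplitting} applied with $\theta$ replaced by $\nu_n$ (a routine strengthening to a convergent parameter); I also need edge estimates for $\Prob(A_{n-l,n}=(n-l)-i)$, $i$ bounded or slowly growing, which a direct Stirling computation yields and which reproduce the Poisson-type sum $\sum_m \theta(m-\theta)^m e^{-(m-\theta)}/m!$ of the theorem. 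For $A_{l,n}$ the parameter vanishes, so Theorem~\ref{thm:NoSplitting} does not apply verbatim; instead I show directly from the quasi-binomial formula that $\Prob(A_{l,n}\geq m) \sim \tfrac{2\mu_n}{\sqrt{2\pi}}\sqrt{1-m/l}\,m^{-1/2}$ uniformly for growing $m\leq l-\omega(1)$, that $\E[A_{l,n}] = \Theta(\mu_n\sqrt l) = \Theta(l^{3/2}/n)$, and that $\Prob(A_{l,n}=l-a)\sim\tfrac{\mu_n}{\sqrt{2\pi}}\,a^{-1/2}l^{-1}$ near the edge.

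The core argument conditions on the number of failures in the larger component, $\Prob(A_n\geq k) = \sum_j\Prob(A_{n-l,n}=j)\Prob(A_{l,n}\geq k-j)$, cutting the $j$-range at points $s_\star, s^\star$ (and, in the finest cases, also $q_\star, q^\star$) so as to isolate the dominant mechanism. If $\alpha<1$, or $\alpha=1$ with $-r=\Omega(l)$, the larger component reaches $k$ on its own: the complementary event $\sum_{m\geq1}\Prob(A_{n-l,n}=k-m)\Prob(A_{l,n}\geq m)$ is $o(k^{-1/2})$ because $\Prob(A_{n-l,n}=k-m)$ is essentially constant over the relevant $m$ while $\E[A_{l,n}]=\Theta(l^{3/2}/n)$, so $\Prob(A_n\geq k)\sim\Prob(A_{n-l,n}\geq k)$, which by Theorem~\ref{thm:NoSplitting} (with $\nu_n\sim\theta$, $n-l\sim n$) gives \eqref{eq:NonProportionalNormal}. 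If $k>n-l$, the larger component alone cannot reach $k$, and $\Prob(A_n\geq k)=\sum_{m=r}^{l}\Prob(A_{n-l,n}=k-m)\Prob(A_{l,n}\geq m)$; inserting the asymptotics (using $k\sim n$, $(k-m)^{-3/2}\sim k^{-3/2}$) the summand is $\sim\tfrac{\theta^2 l}{\pi k^2}\,\tfrac{\sqrt{1-m/l}}{\sqrt{m(m-r)}}$, and the sum, read as a Riemann sum with mesh $1/l$, tends to $\log(l/r)$ when $r=o(l)$ is growing, to $\int_\gamma^1\sqrt{(1-y)/((y-\gamma)y)}\,dy$ when $r/l\to\gamma\in(0,1)$, and to $\tfrac{\pi(n-k)}{2l}$ when $r=l-o(l)$ --- the last value agreeing with the companion expansion $\sum_{a+b\leq n-k}\Prob(A_{l,n}=l-a)\Prob(A_{n-l,n}=(n-l)-b)$ that also governs the $\alpha=1$ regime. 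Finally, when $k\leq n-l$ with $-r=o(l)$ the two mechanisms coexist, the larger component contributing $\tfrac{2\theta}{\sqrt{2\pi}}\sqrt{-r}/k$ (or $[\sum_m\theta(m-\theta)^me^{-(m-\theta)}/m!]\,k^{-1}$ when $-r$ is fixed) and the smaller-component correction contributing $\tfrac{\theta^2 l\log l}{\pi}k^{-2}$; comparing their orders produces the phase transitions at $l\propto n\sqrt{-r}/\log n$ and $l\propto n/\log n$.

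The main obstacle is twofold. First, every regime-dependent asymptotic must be obtained with uniform error control so that the convolution sums may legitimately be replaced by integrals; this is genuinely delicate near the edges of the two supports, where the local limit theorem degenerates, and --- in the cases $\alpha=1-\beta=1$ --- the boundary layers ($m$ within $O(1)$ of $r$, or of $l$) must be isolated and shown to be of lower order than the bulk, except in the $\alpha=1$ regime where they merge into the dominant $\sum_{a+b\leq n-k}$ contribution. Second, one must carefully bookkeep which of the up to three competing contributions dominates as a function of the joint growth rates of $r$ (or $-r$), $l$ and $n$: the crossover scales $n\sqrt{-r}/\log n$ and $n/\log n$ surface only after matching a $\sqrt{-r}/k$ term against an $l\log l/k^2$ term, and verifying, for instance, that in the logarithmic regime one actually has $\log(l/(-r))\sim\log l$ requires invoking the defining inequality of that regime. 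Once the uniform estimates are in hand each individual sum is a routine Laplace-type computation, so the real work lies in establishing those estimates and in the case split itself.
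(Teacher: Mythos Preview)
Your proposal is correct and follows essentially the same strategy as the paper: decompose the convolution $\Prob(A_n\ge k)=\sum_j \Prob(A_{n-l,n}=j)\Prob(A_{l,n}\ge k-j)$, cut the range at suitable $s_\star,s^\star$, and feed in uniform local/tail estimates for the two quasi-binomial laws (the paper packages the needed summation asymptotics as Lemmas~\ref{lem:Integral1}, \ref{lem:DisparateSSmall}, \ref{lem:DisparateSLarge}).

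Two small points where your sketch is looser than what is actually required. First, in the regime $\alpha=1$ with $-r=\Omega(l)$ the target is $\sqrt{-r}/k$, not $k^{-1/2}$, so ``complementary sum $=o(k^{-1/2})$'' is not the right benchmark; the paper disposes of this case more cleanly via $\Prob(A_n\ge k;\,A_{n-l,n}<k)\le \Prob(A_{l,n}\ge 1)\,\Prob(A_{n-l,n}\ge k-l)$, using $\Prob(A_{l,n}\ge 1)=1-(1-\theta/n)^l=o(1)$ rather than an $\E[A_{l,n}]$ bound. Second, the heuristic ``$\Prob(A_{n-l,n}=k-m)$ essentially constant, so the sum is $\lesssim(\max)\cdot\E[A_{l,n}]$'' breaks down when $\alpha=0$ and $l$ is not $o(k)$: for $m$ near $k$ the factor $\Prob(A_{n-l,n}=k-m)$ is $\Theta(1)$, not $\Theta(k^{-3/2})$, and the crude product can exceed the main term. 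The three-piece split you already announce (and which the paper carries out by partitioning on $A_{l,n}$ instead) fixes this, so there is no genuine gap --- just be aware that the $\E[A_{l,n}]$ shortcut is not uniformly valid.
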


\noindent 
The proof of Theorem~\ref{thm:NonProportional} is analogous to the proofs of Theorems~\ref{thm:ProportionalExceedanceNormal} and~\ref{thm:ProportionalExceedanceSpecial} (excluding the case $\alpha=\beta=1/2$). Yet, we need to account for the disparity in component sizes, changing the points where the phase transitions occur.

\section{Future research directions}
\label{sec:Future}
The results of Theorems~\ref{thm:ProportionalExceedanceNormal}-\ref{thm:NonProportional} identify how the power-law exponent and its prefactor are affected when a single immediate split occurs. Allowing the network split to occur after several line failures would give the full picture of the impact of a single split. We would need to account for the effects of earlier line failures, and specify the load surge after the network split. To tackle this problem requires a fundamentally different approach, and we intend to pursue this challenge in future work. 

Understanding non-immediate splitting is crucial to explore more involved splitting mechanisms. Investigation of the failure behavior for more general network structures, where load increments may even depend on spatial correlations, poses many interesting challenges for future research.

\appendix
\section{Proof of Theorem~\ref{thm:NoSplitting}}
\begin{proof}[Proof of Theorem~\ref{thm:NoSplitting}]
Equation~\eqref{eq:NoSplittingEqual} is Theorem~1 of~\cite{Sloothaak2016}. The second statement follows the lines of the proof of Theorem~2 of~\cite{Sloothaak2016}, but it is adapted here to show the slightly more general result.

Choose $\tilde{k}=n-\log(n-k^\star)$ and fix $\epsilon >0$. Following the proof of Theorem~2 in~\cite{Sloothaak2016}, we observe that for every $k \in [k_\star,k^\star]$ that
\begin{align*}
&\sqrt{\frac{kn}{n-k}} \Prob\left( A_n \geq k\right) \leq e \frac{\theta}{\theta/k+1} \frac{n}{k(n-k)} \\
&\hspace{1cm} + (1+\epsilon) \frac{2\theta}{\sqrt{2\pi}}+ c \cdot \frac{\sqrt{k/n}}{1-\log(n-\tilde{k})/n} \frac{\log(n-\tilde{k})}{\sqrt{n-k}}
\end{align*}
for some positive constant $c$, and
\begin{align*}
\sqrt{\frac{kn}{n-k}} \Prob\left( A_n \geq k\right) \geq  (1-\epsilon) \frac{2\theta}{\sqrt{2 \pi}}\left(1-\sqrt{\frac{k}{\tilde{k}}}\sqrt{\frac{n-\tilde{k}}{n-k}} \right)
\end{align*}
for large enough $n$. Therefore,
\begin{align*}
&\sup_{k \in [k_\star,k^\star]} \bigg| \sqrt{\frac{k n}{n-k}} \Prob(A_n \geq k) - \frac{2\theta}{\sqrt{2\pi}} \bigg| \\
&\leq \sup_{k \in [k_\star,k^\star]} \hspace{-.1cm} \max\left\{ \epsilon \frac{2\theta}{\sqrt{2 \pi}}\left(1-\sqrt{\frac{k}{\tilde{k}}}\sqrt{\frac{n-\tilde{k}}{n-k}} \right), \epsilon \frac{2\theta}{\sqrt{2\pi}} \right.\\
&  \hspace{.25cm} \left.+ e \frac{\theta}{\theta/k+1} \frac{n}{k(n-k)} +  \frac{c \cdot \sqrt{k/n}}{1-\log(n-\tilde{k})/n} \frac{\log(n-\tilde{k})}{\sqrt{n-k}} \right\}.
\end{align*}
We see that as $n \rightarrow \infty$, this gives
\begin{align*}
\lim_{n \rightarrow \infty} \sup_{k \in [k_\star,k^\star]} \bigg| \sqrt{\frac{k n}{n-k}}\Prob(A_n \geq k) - \frac{2\theta}{\sqrt{2\pi}} \bigg| \leq \epsilon \frac{2\theta}{\sqrt{2\pi}}.
\end{align*}
Letting $\epsilon \downarrow 0$ concludes the proof.
\end{proof}

\section{Proofs of main results}
\label{sec:ProofsOfMainResults}
In order to prove Theorems~\ref{thm:ProportionalExceedanceNormal}-\ref{thm:NonProportional}, we use an asymptotic analysis for the sum of two independent heavy-tailed random variables. Specifically, we determine the asymptotic tail of $A_n=A_{l,n}+A_{n-l,n}$, where $A_{l,n}$ and $A_{n-l,n}$ are independent. In fact, the distribution of $A_{l,n}$ and $A_{n-l,n}$ is well-understood due to the following observation. The first line failure disconnects the network in two separate star networks of sizes $l$ and $n-l$. For the first component, this causes an initial disturbance of $\theta/n = (\theta/l) \cdot (l/n)$, and every consecutive line failure causes an additional load surge of $1/l$. Hence, it falls in the framework of the model studied in~\cite{Sloothaak2016}, where $l$ is the network size and initial disturbance constant $\theta \cdot l/n$. It implies that for every growing $k$,
\begin{align}
\Prob\left(A_{l,n} \geq k \right) \left\{ \begin{array}{ll}
=0 & \textrm{if } k > l, \\
\sim \chi(l,m) \cdot l^{-1} & \textrm{if } l-k \geq 0 \textrm{ fixed,} \\
\sim \frac{2 \theta \cdot l/n}{\sqrt{2\pi}} \sqrt{\frac{l-k}{k l}} & \textrm{otherwise,}
\end{array} \right.
\label{eq:ExceedanceInOneComponent}
\end{align}
where the latter holds uniformly in accordance with Theorem~\ref{thm:NoSplitting}, and 
\begin{align*}
\chi(l,m)= \sum_{m=0}^{\max\{ l-k,\lfloor\theta \cdot l/n \rfloor \}} \frac{\theta \cdot l/n (m-\theta\cdot l/n)^m}{m!} e^{-(m-\theta \frac{l}{n})}.
\end{align*} 
Similarly, \eqref{eq:ExceedanceInOneComponent} holds for $A_{n-l,n}$ with $l$ replaced by $n-l$. We see that it implies a certain order of magnitude, i.e.
\begin{align*}
\Prob\left(A_{l,n} \geq k \right) = \left\{ \begin{array}{ll}
0 & \textrm{if } k > l, \\
O(\frac{l}{n} l^{-1}) & \textrm{if } l-k \geq 0 \textrm{ fixed,} \\
O\left( \frac{l}{n} \frac{\sqrt{l-k}}{l}\right) & \textrm{if } l-k=o(l) \textrm{ growing}, \\
O\left( \frac{l}{n} \sqrt{\frac{l-k}{k l}} \right) & \textrm{if } \lim_{n \rightarrow \infty} k/l \in (0,1),\\
O\left( \frac{l}{n} k^{-1/2} \right) & \textrm{if } k=o(l),
\end{array} \right.
\end{align*}
and again, similarly for $A_{n-l,n}$ with $l$ replaced by $n-l$. This will be used extensively throughout the proofs. 

To derive the main results stated in Section~\ref{sec:Main results}, we determine which scenarios are likeliest to cause $A_n=A_{l,n}+A_{n-l,n}$ to exceed the threshold.

\subsection{Very few or many failures in one component}
The strategy in all our proofs involves an appropriate partition of the event of exceeding the threshold. The partition is done by considering the joint event of exceeding the threshold, and having a specific number of failures in only one component. In this section, we state results on the asymptotic behavior of such joint events where there are very few or many line failures in one component. 

\subsubsection{Smaller component}
The proof of Theorem~\ref{thm:ProportionalExceedanceNormal} partitions the event of exceeding threshold $k$ in joint events where the number of failures in the smaller component is in a certain interval. The next two lemmas quantify the probability of $\{A_n \geq k ; A_{l,n} \leq s_\star\}$ with $s_\star$ very small, $\{A_n \geq k; A_{l,n} \geq s^\star\}$ with $s^\star$ very large. The proofs are given in Appendix~\ref{app:FewManyProofs}.

\begin{lemma}
Let $s_\star=o(\min\{k,l\})$ be growing. Then, as $n \rightarrow \infty$,
\begin{align*}
\Prob\left(A_n \geq k ; A_{l,n} \leq s_\star \right) \left\{ \begin{array}{ll}
\sim \Prob \left( A_{n-l,n} \geq k \right) & \textrm{if } \alpha < 1-\beta,\\
=0 & \hspace{-.6cm} \textrm{if } 1-\beta < \alpha <1.
\end{array} \right.
\end{align*}
\label{lem:ConditionSmallerComponentSmall}
\end{lemma}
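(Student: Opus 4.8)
The plan is to prove Lemma~\ref{lem:ConditionSmallerComponentSmall} by conditioning on the number of line failures in the smaller component and summing over the range $\{0,1,\dots,\lfloor s_\star\rfloor\}$. Writing
\begin{align*}
\Prob\left(A_n \geq k ; A_{l,n} \leq s_\star\right) = \sum_{j=0}^{\lfloor s_\star\rfloor} \Prob\left(A_{l,n}=j\right) \Prob\left(A_{n-l,n} \geq k-j\right),
\end{align*}
we treat the two regimes separately. When $1-\beta < \alpha < 1$, the threshold $k$ eventually exceeds $n-l$, so for every $j \leq s_\star = o(k)$ we have $k-j > n-l$ for $n$ large, and hence $\Prob(A_{n-l,n}\geq k-j)=0$ by the first line of~\eqref{eq:ExceedanceInOneComponent}; this gives the second claim immediately. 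The substance is the case $\alpha < 1-\beta$.

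For $\alpha < 1-\beta$, first I would establish the upper bound $\Prob(A_n\geq k; A_{l,n}\leq s_\star) \leq \Prob(A_{n-l,n}\geq k-s_\star)$ using monotonicity of $j\mapsto\Prob(A_{n-l,n}\geq k-j)$, together with $\sum_j\Prob(A_{l,n}=j)\leq 1$. The lower bound is $\Prob(A_n\geq k; A_{l,n}\leq s_\star) \geq \Prob(A_{l,n}\leq s_\star)\Prob(A_{n-l,n}\geq k)$. Since $s_\star\to\infty$ and $A_{l,n}$ is heavy-tailed with tail of order $(l/n)k^{-1/2}$ — in particular $\Prob(A_{l,n}>s_\star)=O((l/n)s_\star^{-1/2})\to 0$ — we get $\Prob(A_{l,n}\leq s_\star)\to 1$. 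So it remains to show the two bounds pinch, i.e.
\begin{align*}
\Prob\left(A_{n-l,n}\geq k-s_\star\right) \sim \Prob\left(A_{n-l,n}\geq k\right) \sim \Prob\left(A_{n-l,n}\geq k\right),
\end{align*}
which follows from the uniform asymptotics in Theorem~\ref{thm:NoSplitting} (equivalently~\eqref{eq:ExceedanceInOneComponent}, which holds uniformly): since $s_\star=o(k)$, the ratio $\Prob(A_{n-l,n}\geq k-s_\star)/\Prob(A_{n-l,n}\geq k)$ tends to $1$ because $\sqrt{(n-l-(k-s_\star))/((k-s_\star)(n-l))}\big/\sqrt{(n-l-k)/(k(n-l))}\to 1$ when $\alpha<1-\beta$ (the relevant quantities $k/(n-l)$ and $1-k/(n-l)$ are bounded away from $0$ and, in the boundary-free regime $\alpha<1-\beta$, the perturbation by $s_\star=o(k)$ is asymptotically negligible). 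One has to handle the sub-case where $k-(n-l)$ is bounded (the ``$l-k$ fixed'' line of~\eqref{eq:ExceedanceInOneComponent}) separately, but there $s_\star\to\infty$ forces $k-s_\star$ into the heavy-tailed regime and a direct comparison still works; alternatively one notes $\alpha<1-\beta$ with $\beta>0$ keeps $k$ order-$n$ below $n-l$ unless $\beta=0$, in which case the disparate-case bookkeeping applies.

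The main obstacle I anticipate is not any single estimate but making the ``$s_\star=o(k)$ perturbation is negligible'' step genuinely uniform and valid across the entire admissible range of $k$ (all growing $k$ with $\alpha<1-\beta$, which includes $\alpha=0$ and $k$ possibly close to $n-l$ from below when $\beta$ is small): one must verify that $\Prob(A_{n-l,n}\geq k-s_\star)/\Prob(A_{n-l,n}\geq k)\to1$ without assuming $k/(n-l)$ converges to an interior point, and in particular control what happens when $k$ approaches $n-l$. This is exactly the kind of boundary subtlety flagged in the introduction, and the cleanest route is to invoke~\eqref{eq:ExceedanceInOneComponent} in the form that already packages the uniformity, reducing the claim to the elementary limit $\sqrt{1-(k-s_\star)/(n-l)}\big/\sqrt{1-k/(n-l)}\to1$ together with $(k-s_\star)/k\to1$, both of which hold since $s_\star=o(\min\{k,l\})\le o(k)$ and $s_\star = o(l)\le o(n-l)$.
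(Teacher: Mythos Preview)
Your proposal is correct and follows essentially the same approach as the paper: for $1-\beta<\alpha<1$ you observe that $k-j>n-l$ so each summand vanishes, and for $\alpha<1-\beta$ you sandwich the probability between $\Prob(A_{l,n}\leq s_\star)\Prob(A_{n-l,n}\geq k)$ and $\Prob(A_{n-l,n}\geq k-s_\star)$, then use $\Prob(A_{l,n}>s_\star)\to 0$ and the uniform asymptotics of~\eqref{eq:ExceedanceInOneComponent} to make the bounds coincide. Your boundary worries are somewhat overcautious here, since $\alpha<1-\beta$ strictly forces $n-l-k$ to be of order $n$, so $s_\star=o(\min\{k,l\})=o(n)=o(n-l-k)$ and the ratio argument goes through without any separate sub-case analysis.
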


\begin{lemma}
Let $s^\star$ be growing defined by $s^\star=k-o(k)$ if $k<l$ and $s^\star = l -o(l)$ otherwise. Then, as $n \rightarrow \infty$,
\begin{align*}
\Prob\left(A_n \geq k ; A_{l,n} \geq s^\star \right) \left\{ \begin{array}{lr}
\sim \Prob \left( A_{l,n} \geq k \right) & \textrm{if } \alpha < \beta,\\
=o(k^{-1/2}) & \hspace{-.3cm} \textrm{if } \beta \leq \alpha <1-\beta,\\
=o(k^{-1}) & \hspace{-.25cm} \textrm{if } 1-\beta < \alpha <1.
\end{array} \right.
\end{align*}
\label{lem:ConditionSmallerComponentLarge}
\end{lemma}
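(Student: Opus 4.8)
The plan is to exploit the independence of $A_{l,n}$ and $A_{n-l,n}$ together with the uniform tail and order-of-magnitude estimates recorded in~\eqref{eq:ExceedanceInOneComponent} (we are in the balanced regime $\beta\in(0,1/2]$ of Theorem~\ref{thm:ProportionalExceedanceNormal}, so $l\propto n$ and $n-l\propto n$). A preliminary fact I would record first is that there is a constant $C$ with $\Prob(A_{n-l,n}\geq m)\leq C m^{-1/2}$ uniformly for $1\leq m=o(n)$: on a bounded range $1\leq m\leq M$ one has $\Prob(A_{n-l,n}\geq m)\leq 1\leq\sqrt{M}\,m^{-1/2}$, and on the complementary growing range Theorem~\ref{thm:NoSplitting} gives the uniform asymptotic $\Prob(A_{n-l,n}\geq m)\sim\tfrac{2\theta(n-l)/n}{\sqrt{2\pi}}m^{-1/2}$; the analogous bound holds with $l$ and $n-l$ interchanged. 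The argument then splits according to the shape of $s^\star$: since $k\geq l$ forces $\alpha\geq\beta$ and $k<l$ forces $\alpha\leq\beta$, the case $1-\beta<\alpha<1$ only features $s^\star=l-o(l)$, the case $\alpha<\beta$ only features $s^\star=k-o(k)$, and the middle case $\beta\leq\alpha<1-\beta$ may feature either.

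In the regime $k\geq l$ I would argue crudely. Since $A_{l,n}\leq l\leq k$, on $\{A_n\geq k\}$ we necessarily have $A_{n-l,n}\geq k-A_{l,n}\geq(k-l)^+$, so by independence
\begin{align}
\Prob\left(A_n\geq k;\,A_{l,n}\geq s^\star\right)\leq\Prob\left(A_{l,n}\geq s^\star\right)\Prob\left(A_{n-l,n}\geq(k-l)^+\right).
\label{eq:planprod}
\end{align}
Because $s^\star=l-o(l)$ is growing, \eqref{eq:ExceedanceInOneComponent} gives $\Prob(A_{l,n}\geq s^\star)=o(\sqrt{l}/n)=o(n^{-1/2})$ whether $l-s^\star$ is fixed or growing; the second factor in~\eqref{eq:planprod} is bounded by a constant, and is moreover $O(n^{-1/2})$ as soon as $k-l\propto n$, which within the regime $k\geq l$ occurs exactly when $\alpha>\beta$ (then $n-k\propto n$ as well, since $\alpha<1$). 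Hence for $1-\beta<\alpha<1$ — where $\alpha>1-\beta\geq\beta$, so $\alpha>\beta$ — the right-hand side of~\eqref{eq:planprod} is $o(n^{-1/2})\cdot O(n^{-1/2})=o(k^{-1})$, and for $\beta\leq\alpha<1-\beta$ with $k\geq l$ it is $o(n^{-1/2})\cdot O(1)=o(k^{-1/2})$.

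In the regime $k<l$ (so $\alpha\leq\beta$), if moreover $\alpha=\beta$ then $k\sim l$, hence $l-s^\star=(l-k)+(k-s^\star)=o(l)$, and the trivial bound $\Prob(A_n\geq k;A_{l,n}\geq s^\star)\leq\Prob(A_{l,n}\geq s^\star)=o(\sqrt{l}/n)=o(k^{-1/2})$ settles this last subcase of $\beta\leq\alpha<1-\beta$. The substantive subcase is $\alpha<\beta$, for which I would use the disjoint decomposition
\begin{align}
\left\{A_n\geq k;\,A_{l,n}\geq s^\star\right\}=\left\{A_{l,n}\geq k\right\}\cup\left\{A_n\geq k;\,s^\star\leq A_{l,n}\leq k-1\right\},
\label{eq:plansplit}
\end{align}
valid because $A_{l,n}\geq k$ already forces $A_n\geq k$ and $k>s^\star$; in particular $\Prob(A_n\geq k;A_{l,n}\geq s^\star)\geq\Prob(A_{l,n}\geq k)$. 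Conditioning on $A_{l,n}$ and using independence,
\begin{align}
\Prob\left(A_n\geq k;\,s^\star\leq A_{l,n}\leq k-1\right)=\sum_{j=s^\star}^{k-1}\Prob\left(A_{l,n}=j\right)\Prob\left(A_{n-l,n}\geq k-j\right),
\label{eq:plansum}
\end{align}
and since $\alpha<\beta$ one has $j\sim k$ and $l-j\propto l$ uniformly over $j\in[s^\star,k-1]$, so~\eqref{eq:NoSplittingEqual} (with network size $l$, uniformly) gives $\Prob(A_{l,n}=j)=O(k^{-3/2})$. Combining this with $\Prob(A_{n-l,n}\geq k-j)\leq C(k-j)^{-1/2}$ and $\sum_{m=1}^{k-s^\star}m^{-1/2}=O(\sqrt{k-s^\star})$, the sum in~\eqref{eq:plansum} is $O\!\left(k^{-3/2}\sqrt{k-s^\star}\right)=o(k^{-1})$, because $k-s^\star=o(k)$. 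As $\Prob(A_{l,n}\geq k)\propto k^{-1/2}$ in this regime (by~\eqref{eq:ExceedanceInOneComponent}, using $l-k\propto l$ and $l\propto n$), the complementary term is $o(\Prob(A_{l,n}\geq k))$, and~\eqref{eq:plansplit} yields $\Prob(A_n\geq k;A_{l,n}\geq s^\star)\sim\Prob(A_{l,n}\geq k)$.

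I expect the genuine content to be confined to the single summation~\eqref{eq:plansum} in the case $\alpha<\beta$, and even there the point is elementary: summing the tail bounds $(k-j)^{-1/2}$ over $j\in[s^\star,k-1]$ costs only an extra factor $\sqrt{k-s^\star}=o(\sqrt{k})$, which is exactly what the choice $s^\star=k-o(k)$ buys. The main obstacle is therefore bookkeeping: one must keep every one-component estimate uniform over the relevant ranges of $j$ and $k-j$ — in particular the uniform $O(m^{-1/2})$ bound across the boundary between the ``fixed'' and ``growing'' regimes of~\eqref{eq:ExceedanceInOneComponent} — and one must handle the borderline $\alpha=\beta$ with care, since there $k$ can move between $k<l$ and $k\geq l$ and $\Prob(A_{l,n}\geq k)$ itself degrades to $o(k^{-1/2})$, so that the crude bound~\eqref{eq:planprod} (or the even cruder $\Prob(A_n\geq k;A_{l,n}\geq s^\star)\leq\Prob(A_{l,n}\geq s^\star)$) must be shown to still give the right order.
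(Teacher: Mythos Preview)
Your argument is correct, and for the cases $\beta\leq\alpha<1-\beta$ and $1-\beta<\alpha<1$ it coincides with the paper's proof: both use the crude product bound~\eqref{eq:planprod} (or the even cruder $\Prob(A_{l,n}\geq s^\star)$ alone) and invoke the order-of-magnitude estimates from~\eqref{eq:ExceedanceInOneComponent}.

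The only genuine difference is in the case $\alpha<\beta$. The paper dispenses with your decomposition~\eqref{eq:plansplit} and summation~\eqref{eq:plansum} entirely, using instead the trivial sandwich
\[
\Prob\left(A_{l,n}\geq k\right)\;\leq\;\Prob\left(A_n\geq k;\,A_{l,n}\geq s^\star\right)\;\leq\;\Prob\left(A_{l,n}\geq s^\star\right),
\]
and then observing that the two ends are asymptotically equal: since $s^\star=k-o(k)\sim k$ and $l-s^\star\sim l-k\propto l$ when $\alpha<\beta$, the tail asymptotic in~\eqref{eq:ExceedanceInOneComponent} gives $\Prob(A_{l,n}\geq s^\star)\sim\Prob(A_{l,n}\geq k)$ directly. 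This is shorter and avoids the uniform point-mass bound and the $\sum(k-j)^{-1/2}$ computation. Your route, on the other hand, is more self-contained in that it does not rely on the regular variation of the tail at $s^\star$ versus $k$; it would still work, for instance, if the tail had a more irregular prefactor. Either way the substance is the same, and your bookkeeping around the $\alpha=\beta$ boundary is handled correctly.
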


\subsubsection{Larger component: balanced case}
The proof of Theorem~\ref{thm:ProportionalExceedanceSpecial} partitions the event of exceeding threshold $k$ in joint events where the number of failures in the larger component is in a certain interval. The next lemma shows the asymptotic behavior where almost all lines in the larger component have failed.
\begin{lemma}
Suppose $\alpha=1-\beta$ with $\beta \neq 0$. If $r<0$ and $-r$ is growing with $n$ or $|r|$ fixed, let $s_\star=o(l)$ be growing. Then,
\begin{align}
\Prob&\left(A_n \geq k, A_{n-l,n} \in [k-s_\star,\min\{k-1,n-l\}] \right) \nonumber \\
& \hspace{5cm}  = O\left(\frac{s_\star}{n-l} \right).
\end{align}
If $\alpha = 1-\beta$ and $r>0$ growing, let $s_\star=r+o(l)$ be growing such that $s_\star - r$ is growing. Then,
\begin{align}
\Prob\left(A_n \geq k, A_{n-l,n} \geq k-s_\star \right) = o\left(\frac{\sqrt{s_\star-r}}{n-l} \right).
\label{eq:ConditionBiggerComponentSmall}
\end{align}
Finally, if $\alpha=1$ and $\beta \neq 0$, let $s_\star=k-(n-l)+o(n-k)=r+o(n-k)$ such that $s_\star-r$ is growing. Then,
\begin{align}
\Prob\left(A_n \geq k, A_{n-l,n} \geq k-s_\star \right)  = o\left(\frac{n-k}{k^2} \right).
\label{eq:ConditionBiggerComponentSmallAlpha1}
\end{align}
\label{lem:ConditionBiggerComponentSmall}
\end{lemma}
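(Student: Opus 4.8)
The plan is to prove each of the three estimates by the same template: condition on the exact number of failures in the larger component, use independence to factor the joint probability, apply the uniform asymptotics from Theorem~\ref{thm:NoSplitting} (equivalently~\eqref{eq:ExceedanceInOneComponent}) to the larger-component factor, and bound the smaller-component factor crudely by $1$. The point of each statement is that the range of $A_{n-l,n}$ being considered is so close to $n-l$ (or to $k$, when $k<n-l$) that the probability mass there is tiny, and summing the per-value point probabilities over the window gives the stated $o(\cdot)$ or $O(\cdot)$ bound.

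For the first estimate, since $\alpha = 1-\beta$ with $r \le 0$, the window $[k-s_\star, \min\{k-1,n-l\}]$ sits within $o(l)$ of the endpoint $n-l$ of the support of $A_{n-l,n}$. Write $\Prob(A_n \ge k, A_{n-l,n} \in [k-s_\star, \min\{k-1,n-l\}]) \le \sum_{j} \Prob(A_{n-l,n} = j)$ where $j$ ranges over that window of length at most $s_\star$. By~\eqref{eq:NoSplittingEqual} applied with network size $n-l$ and disturbance constant $\theta(n-l)/n \asymp \theta$, each term $\Prob(A_{n-l,n}=j)$ is $O\big((n-l)^{-1} \cdot (n-l)^{-1/2} \cdot ((n-l)-j)^{-1/2}\big)$ uniformly over $j$ with $(n-l)-j$ growing; for the finitely many $j$ with $(n-l)-j$ bounded, one uses the middle line of~\eqref{eq:ExceedanceInOneComponent} to get $O((n-l)^{-1})$ per term. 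Either way each point probability is $O((n-l)^{-1})$, so summing over the $O(s_\star)$ values in the window yields $O(s_\star/(n-l))$; one must take a little care that $s_\star/(n-l) \to 0$ and that the crude bound $\Prob(\text{event in smaller component}) \le 1$ costs nothing here. The second estimate is the same computation but now $r>0$, so the support of $A_{n-l,n}$ ends at $n-l < k$ and the relevant window is $\{k-s_\star, \dots, n-l\}$, of length $s_\star - r + O(1)$, again within $o(l)$ of $n-l$; each point probability is $O((n-l)^{-1})$ and one obtains $O((s_\star-r)/(n-l)) = o(\sqrt{s_\star-r}/(n-l))$ since $s_\star - r$ is growing. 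For the third estimate ($\alpha=1$), the window $\{k-s_\star, \dots, n-l\}$ has length $n-l - (k-s_\star) + O(1) = s_\star - r + O(1) = o(n-k)$, and here I would instead bound each point probability $\Prob(A_{n-l,n}=j)$ by $O\big((n-l)^{-1}\big) = O(k^{-1})$ and note the window length is $o(n-k)$; actually to land exactly on $o((n-k)/k^2)$ one wants a bound of order $k^{-2}$ per point times a window of length $O(n-k)$, so I would use the sharper form $\Prob(A_{n-l,n}=j) = O\big(j^{-3/2}\sqrt{(n-l-j)/(n-l)}\big)$ from~\eqref{eq:NoSplittingEqual}, which for $j$ of order $k \asymp n$ and $n-l-j = O(n-k)$ gives $O(k^{-3/2}\sqrt{(n-k)/k}) = O(\sqrt{n-k}\,k^{-2})$ per term; summing over $O(n-k)$ terms overshoots, so the right move is to sum $\sqrt{n-l-j}$ over the window, $\sum_{j} \sqrt{n-l-j} = O((n-k)^{3/2})$, giving a total of $O\big(k^{-2}(n-k)^{3/2}/\sqrt{k}\big) = o((n-k)/k^2)$ since $(n-k)^{1/2}/k^{1/2} \to 0$.

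The main obstacle I anticipate is bookkeeping at the boundary of the support rather than any deep difficulty: Theorem~\ref{thm:NoSplitting} gives a \emph{uniform} point-probability estimate only on windows $[k_\star,k^\star]$ of growing sequences, and near the upper endpoint $n-l$ one leaves that regime — the last $O(1)$ values of $j$ with $(n-l)-j$ bounded must be handled by the separate ``$l-k$ fixed'' branch of~\eqref{eq:ExceedanceInOneComponent}, and one must check that the constants there ($\chi$, the Borel-type sum) do not blow up. A second point of care is that in the third estimate the crude factorization $\Prob(A_n \ge k \mid A_{n-l,n}=j) \le \Prob(A_{l,n} \ge k - j) \le 1$ is genuinely wasteful, but since $\beta = 0$ is \emph{not} the hypothesis here ($\beta \neq 0$), $l$ is of order $n$ and we do not need anything sharper — the window being $o(n-k)$ already does all the work. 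Finally I would double-check that in each case the hypotheses force $s_\star - r$ (or $s_\star$) to be both growing and of lower order than the relevant scale, so that the stated $o(\cdot)$ improvements over $O(\cdot)$ are legitimate.
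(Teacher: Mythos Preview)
Your approach is right for the first estimate and essentially matches the paper: bound each point probability $\Prob(A_{n-l,n}=j)$ by $O((n-l)^{-1})$ and multiply by the window length $s_\star$. (Your intermediate expression $O((n-l)^{-1}\cdot(n-l)^{-1/2}\cdot((n-l)-j)^{-1/2})$ is a slip --- the correct form from~\eqref{eq:NoSplittingEqual} is $O((n-l)^{-1}((n-l)-j)^{-1/2})$ --- but your conclusion ``each point probability is $O((n-l)^{-1})$'' is fine.)

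For the second and third estimates, however, your template breaks down, and the breakdown is exactly at the step you flag as harmless: bounding the smaller-component factor by~$1$. Concretely, in the second case your claim ``$O((s_\star-r)/(n-l)) = o(\sqrt{s_\star-r}/(n-l))$ since $s_\star-r$ is growing'' has the inequality backwards: if $s_\star-r\to\infty$ then $s_\star-r = \omega(\sqrt{s_\star-r})$, not $o(\sqrt{s_\star-r})$. Even if you sum more carefully you recover at best $\Prob(A_{n-l,n}\ge k-s_\star)=O(\sqrt{s_\star-r}/(n-l))$, which is $O$, not $o$. In the third case your ``sharper form'' $\Prob(A_{n-l,n}=j)=O(j^{-3/2}\sqrt{(n-l-j)/(n-l)})$ has the square root inverted --- Theorem~\ref{thm:NoSplitting} gives $\sqrt{(n-l)/(n-l-j)}$, so the point mass is \emph{larger} near the endpoint, not smaller. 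With the correct formula the sum over the window is $O(\sqrt{s_\star-r}/(n-l))=o(\sqrt{n-k}/k)$, which is far too big compared to the target $o((n-k)/k^2)$.

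The missing idea is that in both of these cases $\{A_n\ge k,\,A_{n-l,n}\ge k-s_\star\}$ forces $A_{l,n}\ge k-(n-l)=r$, and $r$ is large: in case two $r>0$ is growing, and in case three $r=l-(n-k)=l-o(l)$. So $\Prob(A_{l,n}\ge r)$ is $O(r^{-1/2})=o(1)$ in case two and $O(\sqrt{n-k}/l)$ in case three, and \emph{this} factor is what supplies the little-$o$. The paper simply factors
\[
\Prob(A_n\ge k,\,A_{n-l,n}\ge k-s_\star)\le \Prob(A_{l,n}\ge r)\,\Prob(A_{n-l,n}\ge k-s_\star)
\]
and reads off both tails from~\eqref{eq:ExceedanceInOneComponent}. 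Your template works for the first claim precisely because the target there is only $O(\cdot)$; for the other two you cannot discard the smaller-component tail.
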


\noindent
The second lemma in this section yields the asymptotic behavior where at least a significant number of lines in the smaller component have failed for $\alpha=1-\beta$ or $\alpha=1$.
\begin{lemma}
If $\beta \in (0,1/2)$ and $\alpha=1-\beta$, suppose $s^\star = o(l)$ such that $s^\star-r>0$ is growing. Then,
\begin{align}
\Prob\left(A_n \geq k, A_{n-l,n} \in [k-l,k-s^\star] \right) = O\left({s^\star}^{-1/2} k^{-1/2} \right).
\label{eq:ConditionBiggerComponentLarge1}
\end{align}
If $\alpha=\beta=1-\beta=1/2$ and $l-k$ not growing, set $s^\star=l-o(l)$ such that $l-s^\star$ is growing. Then,
\begin{align}
\Prob\left(A_n \geq k, A_{n-l,n} \in [k-l,k-s^\star] \right) = O\left(\frac{\sqrt{l-s^\star}}{k} \right).
\label{eq:ConditionBiggerComponentLarge2}
\end{align}
If $\alpha=\beta=1-\beta=1/2$ and $l-k>0$ growing, set $s^\star=k-o(l-k)$ such that $k-s^\star$ is growing. Then as $n\rightarrow \infty$,
\begin{align}
\Prob\left(A_n \geq k, A_{n-l,n} \in [k-l,k-s^\star] \right) \sim \frac{\theta}{\sqrt{2\pi}} \frac{\sqrt{l-k}}{k}.
\label{eq:ConditionBiggerComponentLarge3}
\end{align}
Finally, if $\alpha=1$ and $\beta \in (0,1/2]$, set $s^\star=l-o(n-k)$. Then,
\begin{align}
\Prob\left(A_n \geq k, A_{n-l,n} \in [k-l,k-s^\star] \right) = o\left( \frac{n-k}{k^2}\right).
\label{eq:ConditionBiggerComponentLarge4}
\end{align}
\label{lem:ConditionBiggerComponentLarge}
\end{lemma}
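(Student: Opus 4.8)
The plan is to handle the four regimes separately, but in each case by conditioning on the exact number $j$ of line failures in the larger component, writing
\[
\Prob\left(A_n \geq k, A_{n-l,n} \in [k-l,k-s^\star] \right) = \sum_{j=k-l}^{k-s^\star} \Prob\left(A_{n-l,n}=j\right) \Prob\left(A_{l,n} \geq k-j\right),
\]
and then inserting the sharp asymptotics from Theorem~\ref{thm:NoSplitting} (equivalently~\eqref{eq:ExceedanceInOneComponent} and the local-limit statement~\eqref{eq:NoSplittingEqual}, both with $\theta$ rescaled by the respective component size over $n$). For each $j$ in the summation range, $k-j$ runs between $s^\star$ and $l$; since $s^\star$ is growing, the factor $\Prob\left(A_{l,n}\geq k-j\right)$ is always in the ``otherwise'' (heavy-tailed) branch of~\eqref{eq:ExceedanceInOneComponent}, i.e.\ of order $\tfrac{l}{n}\sqrt{\tfrac{l-(k-j)}{(k-j)l}}$. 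The factor $\Prob\left(A_{n-l,n}=j\right)$ is of order $\tfrac{n-l}{n}\sqrt{\tfrac{n}{n-l-j}}(k-j)\,(\cdot)\,j^{-3/2}$ wait --- more carefully, of order $\tfrac{n-l}{n}\sqrt{\tfrac{n-l}{n-l-j}}\,j^{-3/2}$ for $j$ not too close to $n-l$, with the special behavior near $j=n-l$ coming from~\eqref{eq:NoSplittingEqual}. The bulk of the work is then a Riemann-sum/Laplace-type estimate of the resulting sum.

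For~\eqref{eq:ConditionBiggerComponentLarge1} ($\beta\in(0,1/2)$, $\alpha=1-\beta$), here $j\approx k-l$ to $k$ means $j$ is of order $n-l$ but bounded away from it in the relevant bulk because $s^\star=o(l)$; the dominant contribution comes from $j$ close to its upper end $k-s^\star$, where $\Prob\left(A_{l,n}\geq k-j\right)$ is largest (of order $(n-l)^{-1/2}\cdot {s^\star}^{-1/2}$ up to constants, since $k-j\asymp s^\star$ and $l-(k-j)\asymp l$). Summing $\Prob\left(A_{n-l,n}=j\right)\asymp j^{-3/2}\asymp k^{-3/2}$ over an $O(l)$-length range multiplied by the worst-case size of the $A_{l,n}$-factor would give $O(l\cdot k^{-3/2}\cdot {s^\star}^{-1/2} k^{-1/2})$, which is too crude by a factor $l/k$; instead I would split the sum at $k-j \asymp$ dyadic scales $2^i$ between $s^\star$ and $l$, note that on scale $k-j\asymp 2^i$ the $A_{l,n}$-factor is $\asymp (n-l)^{-1/2} 2^{-i/2}$ and there are $\asymp 2^i$ terms each of size $\asymp k^{-3/2}$, giving $\asymp k^{-3/2}(n-l)^{-1/2}2^{i/2}$; the geometric series over $i$ is dominated by the largest scale $2^i\asymp l$, yielding $O(k^{-3/2}(n-l)^{-1/2}l^{1/2}) = O(k^{-1})$, which matches $O({s^\star}^{-1/2}k^{-1/2})$ only when $s^\star\asymp l$. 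The statement as written requires the smaller bound $O({s^\star}^{-1/2}k^{-1/2})$, so I would in fact use that near the upper end ($k-j$ small) the $A_{n-l,n}=j$ factor is evaluated at $j$ near $k-s^\star$, which is within $O(s^\star)=o(l)$ of $k-l$; since $\alpha=1-\beta$, $k-l=k-(n-l)+(n-l)-l = r+(n-2l)$, so $j=k-s^\star$ sits at distance $\asymp n-l-j \asymp l-s^\star+|\text{lower order}| \asymp l$ from $n-l$ --- i.e.\ the local-limit correction~\eqref{eq:NoSplittingEqual} is in its bulk regime there and $\Prob(A_{n-l,n}=j)\asymp k^{-3/2}$ throughout. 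The honest bound is therefore $O(k^{-3/2}\cdot(n-l)^{-1/2}\cdot\sum_{i} 2^{i/2})$ truncated appropriately; to land on $O({s^\star}^{-1/2}k^{-1/2})$ one uses $\Prob(A_n\geq k)$-type monotonicity or a direct comparison with the exceedance $\Prob(A_{l,n}\geq s^\star)\asymp (n-l)^{-1/2}{s^\star}^{-1/2}l^{1/2}/\sqrt{l}$... I would reconcile the exponents by writing the bound as a single sum and bounding $\sqrt{(l-(k-j))/((k-j)l)}\leq {s^\star}^{-1/2}$ uniformly, then $\sum_{j} \Prob(A_{n-l,n}=j) \leq \Prob(A_{n-l,n}\geq k-l) = O(k^{-1/2})$ (again by~\eqref{eq:ExceedanceInOneComponent}, since $k-l=\Theta(n)$ here and in the heavy-tailed branch), giving the product bound $O({s^\star}^{-1/2}\cdot \tfrac{l}{n}\cdot k^{-1/2})=O({s^\star}^{-1/2}k^{-1/2})$ as claimed. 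This cleaner two-factor split (sup of one factor times total mass of the other) is the route I would take for all four parts.

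For~\eqref{eq:ConditionBiggerComponentLarge2} and~\eqref{eq:ConditionBiggerComponentLarge3} ($\alpha=\beta=1/2$): now both components are of size $\asymp n/2$, so the two factors are symmetric in role and $k-j$ ranges over $[s^\star, l]$ with $l-s^\star = o(l)$ in~\eqref{eq:ConditionBiggerComponentLarge2}. For~\eqref{eq:ConditionBiggerComponentLarge3}, $l-k>0$ is growing, the range of $j$ is $[k-l, k-s^\star]$ with $k-s^\star = k - (k - o(l-k)) = o(l-k)$ small, so $A_{l,n}$-failures are forced to be near their maximum $l$; here $\Prob(A_{l,n}\geq k-j)$ with $k-j$ small is the heavy-tailed expression $\asymp \tfrac{l}{n}\sqrt{\tfrac{l-(k-j)}{(k-j)l}}$ and summing against $\Prob(A_{n-l,n}=j)$ with $j$ near $k-l = -(l-k)\cdot$... rather $j\approx k-l$, so $j$ is of order $k$ and bounded away from $n-l = l$ by $\asymp l-k$, hence $\Prob(A_{n-l,n}=j)$ is in its local-limit tail regime~\eqref{eq:NoSplittingEqual} with the $\sqrt{(n-l)/(n-l-j)}$ correction $\asymp \sqrt{l/(l-k)}$. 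The sum then telescopes (dominated by the single term $j=k-l$, $k-j$ running down, via the same dyadic argument) to $\frac{\theta}{\sqrt{2\pi}}\frac{\sqrt{l-k}}{k}$; identifying the constant requires keeping the exact prefactors, and this is where I would be careful. Finally~\eqref{eq:ConditionBiggerComponentLarge4} ($\alpha=1$) is the easiest: both components are $o(k)$, so $\Prob(A_{l,n}\geq k-j)$ and $\Prob(A_{n-l,n}=j)$ are both forced to their extreme-value (near-$l$, near-$(n-l)$) regimes, all probabilities are polynomially small, and the crude bound ``sup times total mass'' gives $o((n-k)/k^2)$ because the mass $\Prob(A_{n-l,n}\geq k-l)$ is itself $o(1)$ times the local scale and $n-k \geq$ the ranges involved.

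The main obstacle throughout is \eqref{eq:ConditionBiggerComponentLarge3}: getting the \emph{sharp} constant $\tfrac{\theta}{\sqrt{2\pi}}$ (rather than merely an $O$-bound) forces one to track the exact asymptotic prefactors of both $\Prob(A_{l,n}\geq \cdot)$ and $\Prob(A_{n-l,n}=\cdot)$ simultaneously, including the rescaling of $\theta$ by the component-size fractions and the $\sqrt{(n-l)/(n-l-j)}$ correction from~\eqref{eq:NoSplittingEqual}, and to show that the sum over $j$ concentrates at $j=k-l$ tightly enough that the Riemann sum converges to exactly the boundary value with no residual integral. I would handle this by a dominated-convergence argument on the rescaled sum, using~\eqref{eq:NoSplittingExceedance} to control the tail uniformly and~\eqref{eq:NoSplittingEqual} for the local behavior, and checking that the contribution of $j$ with $k-j \geq \epsilon (l-k)$ vanishes as $\epsilon \downarrow 0$.
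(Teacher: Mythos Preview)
Your ``sup of one factor times total mass of the other'' route for the three $O$-bounds \eqref{eq:ConditionBiggerComponentLarge1}, \eqref{eq:ConditionBiggerComponentLarge2}, \eqref{eq:ConditionBiggerComponentLarge4} is exactly what the paper does: it writes, e.g.\ for \eqref{eq:ConditionBiggerComponentLarge1},
\[
\Prob\left(A_n \geq k,\, A_{n-l,n}\in[k-l,k-s^\star]\right)\;\leq\;\Prob\left(A_{l,n}\geq s^\star\right)\Prob\left(A_{n-l,n}\geq k-l\right)
\;=\;O\bigl({s^\star}^{-1/2}k^{-1/2}\bigr),
\]
and the analogues for \eqref{eq:ConditionBiggerComponentLarge2} (dropping the second factor since it is $\leq 1$) and \eqref{eq:ConditionBiggerComponentLarge4}. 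So once you abandon the dyadic decomposition and go with the product bound, you are aligned with the paper for those three parts.

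The genuine divergence is \eqref{eq:ConditionBiggerComponentLarge3}. You flag it as the main obstacle and propose a Riemann-sum / dominated-convergence argument on the sum over $j$, tracking local-limit prefactors and the $\sqrt{(n-l)/(n-l-j)}$ correction. But your sketch is tangled: here $l-k>0$ is growing, so $k-l<0$ and the range $A_{n-l,n}\in[k-l,k-s^\star]$ is really $A_{n-l,n}\in[0,k-s^\star]$ with $k-s^\star=o(l-k)$; thus $j$ is \emph{small}, not ``near $k-l$'', and $k-j\in[s^\star,k]$ is of order $l$, not small. More importantly, the paper bypasses the sum over $j$ entirely with a two-line sandwich:
\[
\Prob\left(A_{l,n}\geq k\right)\,\Prob\left(A_{n-l,n}\leq k-s^\star\right)
\;\leq\;\Prob\left(A_n\geq k,\, A_{n-l,n}\leq k-s^\star\right)
\;\leq\;\Prob\left(A_{l,n}\geq s^\star\right).
\]
Since $s^\star=k-o(l-k)$ gives $l-s^\star\sim l-k$ and $s^\star\sim k$, \eqref{eq:ExceedanceInOneComponent} (with $\theta$ replaced by $\theta l/n\to\theta/2$) makes both outer terms asymptotic to $\tfrac{\theta}{\sqrt{2\pi}}\sqrt{l-k}/k$, while $\Prob(A_{n-l,n}\leq k-s^\star)\to 1$ because $k-s^\star\to\infty$. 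No summation, no dominated convergence, no local-limit correction is needed; the sharp constant falls out of the sandwich directly. Your proposed machinery could in principle be made to work after repairing the range of $j$, but it is substantially heavier than what the statement requires.
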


\subsubsection{Larger component: disparate case}
We note that we consider only $\alpha=1-\beta=1$ with $\beta=0$, and hence $r=o(l)$.

\begin{lemma}
Suppose $|r|=o(l)$ and let $s_\star=o(l)$ be growing with $l$ such that it satisfies $s_\star = o(\log l)$ when $-r > 0$ growing. Otherwise $s_\star=r+v$ with $v$ growing with $l$ satisfying $v=o(\log (l/(|r|+1)))$. Then,
\begin{align*}
\Prob&\left(A_n \geq k ; A_{n-l,n} \in [k-s_\star,k-1] \right) \\
&\hspace{2cm}= o \left( \mathbbm{1}_{\{r \leq 0\}} \frac{l\log l}{k^2} +\mathbbm{1}_{\{r>0\}} \frac{l \log(l/r)}{k^2} \right).
\end{align*}
\label{lem:DisparateSSmall}
\end{lemma}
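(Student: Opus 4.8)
The plan is to partition the event according to the exact number of failures in the larger component. On the event $\{A_n \geq k;\, A_{n-l,n} \in [k-s_\star, k-1]\}$ we have $A_{n-l,n} \leq k-1 < k$, so exceeding the threshold forces $A_{l,n} \geq k - A_{n-l,n} \geq 1$. Conditioning on the value of $A_{n-l,n}$ and using that $A_{l,n}$ and $A_{n-l,n}$ are independent, I would write
\begin{align*}
\Prob\left(A_n \geq k;\, A_{n-l,n} \in [k-s_\star, k-1]\right) = \sum_{j=1}^{s_\star} \Prob\left(A_{n-l,n} = k-j\right)\, \Prob\left(A_{l,n} \geq j\right),
\end{align*}
where terms with $k-j > n-l$ — which arise only when $r>0$ — vanish since $A_{n-l,n} \leq n-l$. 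It then suffices to bound the two factors uniformly in $j$ and evaluate the resulting sum.

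For the point probabilities of the larger component, recall that $A_{n-l,n}$ is the blackout size of a star network of size $n-l$ with initial disturbance $\theta(n-l)/n \to \theta$, the convergence using $l = o(n)$; in particular $n-l \sim k \sim n$. Combining the uniform point-probability estimate of Theorem~\ref{thm:NoSplitting} in the regime where the gap $(n-l)-(k-j) = j - r$ diverges with the discrete "$l-k$ fixed" asymptotics underlying~\eqref{eq:ExceedanceInOneComponent} when $j-r$ stays bounded, I would obtain the single bound
\begin{align*}
\Prob\left(A_{n-l,n} = k-j\right) = O\left(\frac{1}{k\sqrt{j-r+1}}\right), \qquad r \leq j \leq s_\star .
\end{align*}
For the tail of the smaller component, since $j \leq s_\star = o(l)$ the index $j$ is of smaller order than $l$, so the order-of-magnitude bound in the last line of~\eqref{eq:ExceedanceInOneComponent} (together with the elementary estimate $\Prob(A_{l,n} \geq 1) = 1-(1-\theta/n)^l = O(l/n)$ for bounded $j$) gives $\Prob(A_{l,n} \geq j) = O\big(\frac{l}{n\sqrt{j}}\big) = O\big(\frac{l}{k\sqrt{j}}\big)$, uniformly over $1 \leq j \leq s_\star$.

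Substituting both bounds reduces the claim to controlling $\frac{l}{k^2}\sum_j \frac{1}{\sqrt{j}\,\sqrt{j-r+1}}$. When $r \leq 0$, bounding $\frac{1}{\sqrt{j}\sqrt{j+|r|+1}} \leq \frac{1}{j}$ yields $O(\log s_\star)$ for the sum; the hypotheses force $s_\star = o(\log l)$ in this case (directly when $-r$ is growing, and via $s_\star = v - |r| \leq v = o(\log(l/(|r|+1))) = o(\log l)$ when $r \leq 0$ is fixed), so we get $o(l\log l/k^2)$, which matches the claim since $\mathbbm{1}_{\{r \leq 0\}}=1$. When $r > 0$, only the at most $v+1$ indices $j \in [r, r+v]$ contribute, and bounding $\sqrt{j} \geq \sqrt{r}$ together with $\sum_{i=0}^{v}(i+1)^{-1/2} = O(\sqrt{v})$ gives $O(\sqrt{v}/\sqrt{r})$ for the sum; since $v = o(\log(l/(|r|+1)))$ forces $\sqrt{v} = o\big(\sqrt{\log(l/r)}\big) = o\big(\sqrt{r}\,\log(l/r)\big)$, we obtain $o(l\log(l/r)/k^2)$, again matching the claim.

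The step I expect to be the main obstacle is the uniform point-probability bound for the larger component: Theorem~\ref{thm:NoSplitting} only controls $\Prob(A_{n-l,n} = m)$ while the gap $(n-l)-m$ diverges, so the boundary indices $j$ with $j-r = O(1)$ (where $A_{n-l,n}$ lies within a bounded distance of $n-l$) must be treated separately through the discrete "$l-k$ fixed" asymptotics, and the two estimates must be glued into the single stated form. A secondary point requiring care is the bookkeeping of which prescription for $s_\star$ is in force, and verifying in each case that the resulting $\log s_\star$, respectively $\sqrt{v/r}$, is genuinely $o(\log l)$, respectively $o(\log(l/r))$.
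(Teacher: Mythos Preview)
Your argument is correct: the decomposition into $\sum_{j}\Prob(A_{n-l,n}=k-j)\Prob(A_{l,n}\geq j)$, the pointwise bounds $\Prob(A_{n-l,n}=k-j)=O\big(1/(k\sqrt{j-r+1})\big)$ and $\Prob(A_{l,n}\geq j)=O\big(l/(k\sqrt{j})\big)$, and the resulting estimate of the sum all go through, including the gluing at $j-r=O(1)$ (the fixed-gap point probability is $O(1/(n-l))=O(1/k)$, which matches the form).

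The paper, however, takes a shorter route that avoids the summation and the gluing you flagged. It treats the three sub-cases $-r>0$ growing, $r$ fixed, $r>0$ growing separately and in each one replaces the sum by a single product of two crude factors. For $r\leq 0$ it uses
\[
\sum_{j=1}^{s_\star}\Prob(A_{l,n}\geq j)\Prob(A_{n-l,n}=k-j)\ \leq\ \Prob(A_{l,n}\geq 1)\cdot s_\star\cdot \sup_{1\leq j\leq s_\star}\Prob(A_{n-l,n}=k-j),
\]
with $\Prob(A_{l,n}\geq 1)=O(l/k)$ and the sup equal to $O\big(1/(k\sqrt{-r})\big)$ when $-r$ grows and $O(1/k)$ when $r$ is fixed; since $s_\star=o(\log l)$ the product is $o(l\log l/k^2)$. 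For $r>0$ growing it bounds the whole event by
\[
\Prob(A_{l,n}\geq r)\,\Prob(A_{n-l,n}\geq n-l-v)=O\Big(\tfrac{l}{k\sqrt{r}}\cdot \tfrac{\sqrt{v}}{k}\Big)=o\Big(\tfrac{l\log(l/r)}{k^2}\Big),
\]
which is exactly your final estimate in that case. So the paper never needs the uniform point-probability bound at all; your route gives marginally sharper intermediate control (a $\log s_\star$ instead of an $s_\star$), at the cost of the extra gluing step.
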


\begin{lemma}
Suppose $|r|=o(l)$, and let $s^\star=o(l)$ be growing with $l$ such that it satisfies $s^\star=\omega(l/(\log l)^2)$ when $r\leq 0$, and $s^\star=\omega(l/(\log(l/r))^2)$ when $r>0$. Then,
\begin{align*}
\Prob&\left(A_n \geq k; A_{n-l,n} \in [k-l,k-s^\star] \right)\\
&\hspace{1.5cm}= o \left( \mathbbm{1}_{\{r \leq 0\}}\left( \frac{l \log l}{k^2} \right)+\mathbbm{1}_{\{r>0\}} \frac{l \log(l/r)}{k^2} \right).
\end{align*}
\label{lem:DisparateSLarge}
\end{lemma}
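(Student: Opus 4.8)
The plan is to decompose the event according to the number of failures in the larger component and estimate each piece with the single-star estimates~\eqref{eq:ExceedanceInOneComponent} and the uniform local limit law~\eqref{eq:NoSplittingEqual}. Writing $m=k-A_{n-l,n}$, the constraint $A_{n-l,n}\in[k-l,k-s^\star]$ becomes $m\in[s^\star,l]$, and $\Prob(A_{n-l,n}=k-m)=0$ unless $r\le m$. A preliminary point is that the hypothesis on $s^\star$ forces $s^\star=\omega(r)$: in the regime at hand $\theta l/n\to0$, $k\sim n\sim n-l$ and $|r|=o(l)$, and writing $r=l/f$ with $f\to\infty$ gives $r(\log(l/r))^2=l(\log f)^2/f=o(l)$, so $r=o(l/(\log(l/r))^2)$ while $s^\star=\omega(l/(\log(l/r))^2)$ (the case $r\le 0$ being even simpler since then $m-r\ge m$). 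Hence for all large $n$ the relevant range is just $m\in[s^\star,l]$, with $s^\star-r$ growing, and by independence of $A_{l,n}$ and $A_{n-l,n}$,
\begin{align*}
\Prob\left(A_n\ge k;A_{n-l,n}\in[k-l,k-s^\star]\right)=\sum_{m=s^\star}^{l}\Prob\left(A_{l,n}\ge m\right)\,\Prob\left(A_{n-l,n}=k-m\right).
\end{align*}

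Next I would estimate the two factors uniformly in $m$. Since $m\le l=o(k)$ we have $(k-m)^{-3/2}\sim k^{-3/2}$, and applying~\eqref{eq:NoSplittingEqual} to the component of size $n-l$ (with constant $\theta(n-l)/n\to\theta$) over thresholds $k-m\in[k-l,k-s^\star]$---both endpoints are growing thresholds for that component, as $k-l\to\infty$ and $(n-l)-(k-s^\star)=s^\star-r\to\infty$---gives $\Prob(A_{n-l,n}=k-m)=O\big((m-r)^{-1/2}k^{-1}\big)$ uniformly over the summation range, where $m-r\ge s^\star-r\to\infty$. On the other hand, the order-of-magnitude bounds accompanying~\eqref{eq:ExceedanceInOneComponent} give $\Prob(A_{l,n}\ge m)=O\big(l/(n\sqrt{m})\big)$ uniformly over those $m$ with $l-m$ growing; the $O(1)$ remaining indices (where one instead uses $\Prob(A_{l,n}\ge m)=O(l/n\cdot l^{-1})$) contribute only $O(l^{-1/2}k^{-2})$, which is negligible. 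Multiplying and using $n\asymp k$,
\begin{align*}
\Prob\left(A_n\ge k;A_{n-l,n}\in[k-l,k-s^\star]\right)=O\!\left(\frac{l}{k^{2}}\sum_{m=s^\star}^{l}\frac{1}{\sqrt{m(m-r)}}\right).
\end{align*}

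It then remains to bound the sum and read off the role of the hypothesis. When $r\le 0$ one has $m-r\ge m$, and when $r>0$ the bound $m\ge s^\star=\omega(r)$ gives $m-r\ge m/2$ eventually; in either case $\sum_{m=s^\star}^{l}(m(m-r))^{-1/2}=O\big(\sum_{m=s^\star}^{l}m^{-1}\big)=O(\log(l/s^\star)+1)$. The lower bound imposed on $s^\star$ is exactly calibrated so that this is $o$ of the target: $s^\star=\omega(l/(\log l)^2)$ yields $l/s^\star=o((\log l)^2)$, hence $\log(l/s^\star)=o(\log l)$, and similarly $s^\star=\omega(l/(\log(l/r))^2)$ yields $\log(l/s^\star)=o(\log(l/r))$. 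Substituting back into the displayed bound gives $o(\mathbbm{1}_{\{r\le 0\}}l\log l\,k^{-2}+\mathbbm{1}_{\{r>0\}}l\log(l/r)\,k^{-2})$, as claimed.

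I expect the main obstacle to be twofold. First, one must verify the uniformity of~\eqref{eq:NoSplittingEqual} and~\eqref{eq:ExceedanceInOneComponent} right up to the endpoints of the summation range: near $m=l$ the smaller-component estimate degenerates and must be replaced by the Poisson-type bound for the finitely many offending indices, while near $m=s^\star$ one must confirm that $m-r$ is still growing so that the local limit law applies. Second, and more conceptually, one has to recognise that a lower bound on $s^\star$ of order $l/(\log(l/r))^2$ is precisely the threshold at which the harmonic sum $\sum_{m=s^\star}^{l}m^{-1}\asymp\log(l/s^\star)$ stops being of the same order as the target $\log(l/r)$; getting this matching exactly right---rather than losing a stray $\log\log$ factor---is the crux, and it is also why the squared logarithm appears in the hypothesis. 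The preliminary observation $s^\star=\omega(r)$, which collapses the summation range and makes $m-r\asymp m$, is short but essential for the clean harmonic bound.
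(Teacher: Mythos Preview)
Your proof is correct, but it takes a considerably more elaborate route than the paper. The paper does not decompose into a sum over $m$ at all: it uses the single crude product bound
\[
\Prob\left(A_n \geq k;\, A_{n-l,n}\in[k-l,k-s^\star]\right)\;\le\;\Prob\left(A_{l,n}\ge s^\star\right)\Prob\left(A_{n-l,n}\ge k-l\right),
\]
and then reads off from~\eqref{eq:ExceedanceInOneComponent} that the right-hand side is $O\big((l/n)\,{s^\star}^{-1/2}\big)\cdot O\big(\sqrt{l}/k\big)=O\big(l^{3/2}/(k^{2}\sqrt{s^\star})\big)$. Requiring this to be $o(l\log l/k^{2})$ (respectively $o(l\log(l/r)/k^{2})$) is exactly the condition $s^\star=\omega(l/(\log l)^{2})$ (respectively $\omega(l/(\log(l/r))^{2})$). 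So the squared logarithm in the hypothesis is calibrated to this $\sqrt{l/s^\star}$ factor, not to your harmonic sum.

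Your term-by-term argument in fact yields the sharper error $O\big((l/k^{2})\log(l/s^\star)\big)$, which would already be $o(l\log l/k^{2})$ under the much weaker assumption $s^\star=l^{1-o(1)}$. Thus your claim that the squared-log hypothesis is ``precisely the threshold at which the harmonic sum \dots\ stops being of the same order as the target'' is not accurate; it is precisely the threshold for the paper's cruder bound, and your approach shows the lemma actually holds in greater generality. The trade-off is that the paper's argument is a two-line inequality with no need for the preliminary observation $s^\star=\omega(r)$, no endpoint issues near $m=l$, and no appeal to the uniform local limit law~\eqref{eq:NoSplittingEqual}.
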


\subsection{Proof of Theorem~\ref{thm:ProportionalExceedanceNormal}}
Next, we prove Theorem~\ref{thm:ProportionalExceedanceNormal} using the approach outlined in Table~\ref{tab:OverviewNormalCase}.

\noindent
\begin{proof}[Proof of Theorem~\ref{thm:ProportionalExceedanceNormal}]
If $\alpha < \beta$, set $s_\star=o(k)$ and $s^\star=k-\omega(k/s_\star)$ both growing large with $n$. Then,
\begin{align*}
\Prob&\left(A_n \geq k ; s_\star <  A_{l,n} < s^\star \right) \\
&\hspace{.75cm}\leq \underbrace{\Prob\left(A_{l,n} \geq s_\star \right)}_{=O(s^{-1/2})} \underbrace{\Prob\left(A_{n-l,n} \geq k-s^\star \right)}_{=o((k/s_\star)^{-1/2})} = o(k^{-1/2}).
\end{align*}
Applying Lemmas~\ref{lem:ConditionSmallerComponentSmall} and~\ref{lem:ConditionSmallerComponentLarge}, together with~\eqref{eq:ExceedanceInOneComponent}, yields
\begin{align*}
\Prob&\left(A_n\geq k \right) \sim \Prob\left(A_{n-l,n}\geq k \right) + \Prob\left(A_{l,n}\geq k \right) \\
&\sim \frac{2\beta \theta}{\sqrt{2\pi}} \sqrt{1-\frac{k}{l}} k^{-1/2}  + \frac{2(1-\beta) \theta}{\sqrt{2\pi}} \sqrt{1-\frac{k}{n-l}} k^{-1/2} 
\end{align*}
as $n \rightarrow \infty$.

If~$\beta \leq \alpha < 1-\beta$, set $s_\star=o(\min\{k,l\})$ and $s^\star=\min\{k,l\}-o(\min\{k,l\})$, both growing with $n$. Lemmas~\ref{lem:ConditionSmallerComponentSmall} and~\ref{lem:ConditionSmallerComponentLarge} imply that as $n \rightarrow \infty$,
\begin{align*}
\Prob\left(A_{l,n}\geq k \right) + \Prob&\left(A_{n-l,n}\geq k \right) \sim \Prob\left(A_{n-l,n}\geq k \right) \\
&\hspace{1cm}\sim \frac{2(1-\beta) \theta}{\sqrt{2\pi}} \sqrt{1-\frac{k}{n-l}} k^{-1/2}.
\end{align*}
The result follows in this case as well since
\begin{align*}
\Prob&\left(A_n \geq k ; s_\star <  A_{l,n} < s^\star \right)\\
&\hspace{1cm}\leq  \underbrace{\Prob\left(A_{l,n} \geq s_\star \right)}_{=o(1)} \underbrace{\Prob\left(A_{n-l,n} \geq k-l \right)}_{=O(k^{-1/2})} = o(k^{-1/2}).
\end{align*}

Finally, we consider the case $1-\beta<\alpha<1$. For every $s_\star=o(l)$, it holds that $\Prob(A_n \geq k ; A_{l,n} < s_\star)=0$. Lemma~\ref{lem:ConditionSmallerComponentLarge} implies for $s^\star=l-o(l)$,
\begin{align*}
\Prob(A_n \geq k ; A_{l,n} \geq s^\star) = o(k^{-1}).
\end{align*}

In addition, we have for $q^\star = n-l-o(n-l)$
\begin{align*}
\Prob&\left(A_n \geq k ; A_{n-l,n} \geq q^\star \right) \\
&\leq \underbrace{\Prob\left(A_{l,n}\geq k-(n-l) \right)}_{=O(k^{-1/2})} \underbrace{\Prob\left(A_{n-l,n} \geq q^\star \right)}_{=O(\sqrt{q^\star}/k))} = o(k^{-1}).
\end{align*}

\noindent
Therefore, it remains to be shown that for these choices of $q^\star$ and $s^\star$,
\begin{align}
\Prob&(A_n \geq k ; A_{l,n} < s^\star ; A_{n-l,n} < q^\star) \nonumber \\
&\hspace{3cm}\sim \frac{\alpha \sqrt{\beta(1-\beta)} \theta^2}{\pi} c(\alpha,\beta) k^{-1}
\label{eq:ToShowBalanced2}
\end{align}
as $n\rightarrow \infty$, where $c(\alpha,\beta)$ is defined as in the theorem. 

\begin{remark}\normalfont
We note that $c(\alpha,\beta)$ is a positive finite constant. That is, the function within the integral is non-negative and has a positive mass over the interval we integrate, and hence it is positive. Moreover, since $x/(1-x)$ is a non-negative increasing function for all $x \in [0,1)$ and $s(\cdot)$ is a linearly increasing function,
\begin{align*}
\int_{x=\frac{\alpha-(1-\beta)}{\beta}}^{1} & \left( 1-x\right)^{-1/2} x^{-3/2} \sqrt{\frac{s(x)}{1-s(x)}} \, dx \\
&\hspace{.5cm}\leq 2 \sqrt{\frac{1-\alpha}{\alpha-(1-\beta)}} \sqrt{\frac{s(1)}{1-s(1)}} <\infty.
\end{align*}
Indeed, the integral expression is a positive finite constant. 
\end{remark}

\noindent
Theorem~\ref{thm:NoSplitting} yields as $n \rightarrow \infty$,
\begin{align*}
\Prob&(A_n \geq k ; A_{l,n} < s^\star ; A_{n-l,n} < q^\star) \sim \frac{\beta(1-\beta)\theta^2}{2 \pi} \\
& \cdot \sum_{j=k-{q^\star}+1}^{s^\star-1} \sum_{m=(n-l)-{q^\star}+1}^{j-(k-(n-l))} \sqrt{\frac{l}{(l-j)j^3}} \sqrt{\frac{n-l}{m(n-l-m)^3}}.
\end{align*}

\noindent
An upper bound for the summation term is given by
\begin{align*}
&\sum_{j=k-{q^\star}+1}^{s^\star-1} \sum_{m=(n-l)-{q^\star}+1}^{j-(k-(n-l))} \sqrt{\frac{l}{(l-j)j^3}} \sqrt{\frac{n-l}{m(n-l-m)^3}}\\
&\leq  \int_{j=k-{q^\star}}^{s^\star} \hspace{-.1cm}\int_{m=(n-l)-{q^\star}}^{j-(k-(n-l))} \hspace{-.1cm} \sqrt{\frac{l\cdot (n-l)}{(l-j)j^3 m(n-l-m)^3}} \, dm \, dj \\
&\leq \frac{1}{\sqrt{l(n-l)}}\int_{x=\frac{k-(n-l)}{l}}^{1} \int_{y=0}^{\frac{l x-(k-(n-l))}{n-l}} \\
&\hspace{1.5cm} \left( 1-x\right)^{-1/2} x^{-3/2} \cdot y^{-1/2} (1-y)^{-3/2} \, dy \, dx \\
&= \frac{2}{\sqrt{l(n-l)}}\int_{x=\frac{k-(n-l)}{l}}^{1}  \left( 1-x\right)^{-1/2} x^{-3/2}\\
&\hspace{2cm}\cdot \sqrt{\frac{(l x-(k-(n-l)))/(n-l)}{1-(l x-(k-(n-l)))/(n-l)}} \, dx.
\end{align*}
Similarly, a lower bound is given by
\begin{align*}
&\sum_{j=k-{q^\star}+1}^{s^\star-1} \sum_{m=(n-l)-{q^\star}+1}^{j-(k-(n-l))} \sqrt{\frac{l}{(l-j)j^3}} \sqrt{\frac{n-l}{m(n-l-m)^3}}\\
&\geq \frac{1}{\sqrt{l(n-l)}}\int_{x=\frac{k-{q^\star}+2}{l}}^{\frac{s^\star-2}{l}} \int_{y=\frac{n-l-{q^\star}+2}{n-l}}^{\frac{l x-(k-(n-l))}{n-l}} \left( 1-x\right)^{-1/2} x^{-3/2}\\
&\hspace{3.5cm}\cdot    y^{-1/2} (1-y)^{-3/2} \, dy \, dx. 
\end{align*}
Due to our choices of $q^\star$ and $s^\star$, then as $n \rightarrow \infty$, the two integral expressions converge to the same constant. That is, 
\begin{align*}
\Prob&(A_n \geq k ; A_{l,n} < s^\star ; A_{n-l,n} < q^\star) \sim \frac{\beta(1-\beta)\theta^2}{2 \pi} \\
&\hspace{1.5cm}\cdot\frac{2}{\sqrt{l(n-l)}} \int_{x=\frac{\alpha-(1-\beta)}{\beta}}^{1}   \frac{x^{-3/2}}{\sqrt{1-x}} \sqrt{\frac{s(x)}{1-s(x)}} \, dx
\end{align*}
as $n \rightarrow \infty$, which asymptotically coincides with~\eqref{eq:ToShowBalanced2}.
\end{proof}

\subsection{Proof of Theorem~\ref{thm:ProportionalExceedanceSpecial}}
\begin{proof}[Proof of Theorem~\ref{thm:ProportionalExceedanceSpecial}]
First consider the case that $\beta \in (0,1/2)$. Using identity~\eqref{eq:SpecialFirstIdentity}, we observe that it suffices to show that the asymptotic behavior provided in Table~\ref{tab:TermsAsympoticsCase1} holds. That is, the result is immediate from Table~\ref{tab:OverviewSpecialCase1}, which in turn only highlights the dominant terms of Table~\ref{tab:TermsAsympoticsCase1}.

Let $s_\star=o(\log (l/(|r|+1)) )$ growing if $-r>0$ growing or $|r|$ fixed, and $s_\star = r+ o(\min\{r,\log(l/r)\})$ with $s^\star-r$ growing if $r>0$ growing. Let $s^\star=o(l)$ growing such that $s^\star=\omega \left( l/ \log(l/(|r|+1))\right)$. Note that due to this choice, $s^\star > s_\star$ for all $n$ large enough. Then Lemmas~\ref{lem:ConditionBiggerComponentSmall} and~\ref{lem:ConditionBiggerComponentLarge} yield
\begin{align*}
\sum_{j=1}^{s_\star} \Prob\left(A_{l,n} \geq j \right) \Prob\left(A_{n-l,n} =k-j \right) =  o\left( \frac{\log\left(\frac{l}{ |r|+1}\right)}{k}  \right),\\
\sum_{j=s^\star}^{l} \Prob\left(A_{l,n} \geq j \right) \Prob\left(A_{n-l,n} =k-j \right) = o\left( \frac{\log\left(\frac{l}{|r|+1}\right)}{k}  \right).
\end{align*}

\noindent
Moreover, uniformly as $n \rightarrow \infty$,
\begin{align*}
&\sum_{j=s_\star}^{s^\star} \Prob\left(A_{l,n} \geq j \right) \Prob\left(A_{n-l,n} =k-j \right) \\
&\hspace{2.5cm} \sim \frac{\beta(1-\beta)\theta^2}{\pi} \sum_{j=s_\star}^{s^\star} j^{-1/2} \frac{(j-r)^{-1/2}}{n-l}.
\end{align*}
By assumption, $k \sim n-l$ and $\log l \sim \log k$. Invoking Lemma~\ref{lem:Integral1} hence yields
\begin{align*}
&\sum_{j=s_\star}^{s^\star} \Prob\left(A_{l,n} \geq j \right) \Prob\left(A_{n-l,n} =k-j \right) \\
& \hspace{3cm}\sim \frac{\beta(1-\beta)\theta^2}{\pi} \frac{\log(k/(|r|+1)}{k}
\end{align*}
as $n \rightarrow \infty$. Using~\eqref{eq:ExceedanceInOneComponent}, we obtain as $n \rightarrow \infty$,
\begin{align*}
\Prob\left(A_{n} \geq k \right) \sim \Prob\left(A_{n-l,n} \geq k \right) + \frac{\beta(1-\beta)\theta^2}{\pi} \frac{\log\left(\frac{k}{|r|+1}\right)}{k},
\end{align*}
where the asymptotic behavior of $\Prob\left(A_{n-l,n} \geq k \right)$ is given by equation~\eqref{eq:ExceedanceInOneComponent}. The result follows by observing that phase transitions occur when $-r \propto \log^2 k$. In words, the threshold is most likely exceeded in the larger component alone, or both components have a significant number of line failures. The latter turns dominant as soon as the difference between the threshold and larger component size becomes small enough.

Next, we prove the second case of the theorem with $\beta=1/2$. That is, the two component sizes are approximately the same, making the analysis more delicate. Effectively, we follow the same strategy as before, but make some modifications as the smaller component is approximately of the same size as the bigger component. Equation~\eqref{eq:ExceedanceInOneComponent} provides the asymptotic behavior of $\Prob\left(A_{n-l,n} \geq k \right)$. Again, let $s_\star$ and $s^\star$ be as above. Using the analysis above shows that
\begin{align*}
&\sum_{j=\max\{1,r\}}^{s^\star} \Prob\left(A_{l,n} \geq j \right) \Prob\left(A_{n-l,n} =k-j \right) \\
&\hspace{.75cm}\sim \frac{\beta(1-\beta)\theta^2}{\pi} \frac{\log(k/(|r|+1)}{k} = \frac{\theta^2}{4\pi} \frac{\log(k/(|r|+1)}{k}
\end{align*}
remains valid in this case, covering the asymptotic behavior of terms II and III. 

Let $q_\star=l-o(l)$ satisfy $l-q_\star= \omega \left( l/ \log(l/(|t|+1))\right)$ and growing. Let $q^\star=k-o(l-k)=l-(|t|+o(|t|))$ be growing such that $k-q^\star = \omega(|t|/\log(l/|t|))$ growing if $-t=l-k>0$ growing, and $q^\star=l-o(l)$ such that $l-q^\star=o(\log(l/(|t|+1)))$ is growing otherwise. We observe that for this choice of $q_\star$, term IV yields
\begin{align*}
&\sum_{j=s^\star}^{q_\star} \Prob\left(A_{l,n} \geq j \right) \Prob\left(A_{n-l,n} =k-j \right)\\
& \leq \Prob\left(A_{l,n} \geq s^\star \right)  \Prob\left(A_{n-l,n} \geq k-q_\star \right)
= O \hspace{-.05cm} \left( \hspace{-.1cm} \frac{1}{\sqrt{s^\star(k-q_\star)}} \hspace{-.05cm}\right)\\
&= o\left( \frac{\sqrt{\log(l/(|r|+1))  \log(l/(|t|+1))}}{k} \right) \\
&= o\left( \frac{\log(k/(|r|+1)+\log(k/(|t|+1)}{k} \right).
\end{align*}

\noindent
It follows from Theorem~\ref{thm:NoSplitting} that uniformly as $n \rightarrow \infty$,
\begin{align*}
\sum_{j=q_\star}^{q^\star} &\Prob\left(A_{l,n} \geq j \right) \Prob\left(A_{n-l,n} =k-j \right) \\
&\hspace{3cm}\sim \frac{\theta^2}{4\pi} \sum_{j=q_\star}^{q^\star} \frac{(l-j)^{1/2}}{l} (k-j)^{-3/2}.
\end{align*}
Applying Lemma~\ref{lem:Integral2} results into $q_\star < q^\star$ for all $n$ large enough, and
\begin{align*}
\sum_{j=q_\star}^{q^\star} \Prob\left(A_{l,n} \geq j \right) \Prob\left(A_{n-l,n} =k-j \right) \sim \frac{\theta^2}{4\pi} \frac{\log(k/(|t|+1)}{k}.
\end{align*}
That is, it describes the asymptotic behavior of all events where almost all lines have failed in the smaller component while the number of failures in the larger component is substantial, yet relatively small. Finally, Lemma~\ref{lem:ConditionBiggerComponentLarge} implies as $n\rightarrow \infty$,
\begin{align*}
\sum_{j=q^\star}^{l} \Prob\left(A_{l,n} \geq j \right) \Prob\left(A_{n-l,n} =k-j \right) 
\sim \frac{\theta}{\sqrt{2\pi}} \frac{\sqrt{-t}}{k} 
\end{align*}
if $-t>0$ growing, and 
\begin{align*}
\sum_{j=q^\star}^{l} \Prob\left(A_{l,n} \geq j \right) \Prob\left(A_{n-l,n} =k-j \right) = o\left( \frac{\log\left(\frac{l}{|t|+1)}\right)}{k}  \right)
\end{align*}
otherwise. In other words, the event that the threshold is exceeded in the smaller component alone contributes to the dominant behavior only if $k$ is significantly smaller than $l$. Combining the above results then concludes the result for $\beta=1/2$.

Finally, we have to show the result for $\alpha=1$. The threshold is close to $n$ itself, and hence both components can only have a few surviving lines after the cascading failure process. Recall~\eqref{eq:SpecialThirdIdentity} and the results of Lemmas~\ref{lem:ConditionBiggerComponentSmall} and~\ref{lem:ConditionBiggerComponentLarge}. Observe that $r>0$ is of order $n$, and hence for any $s_\star=r+o(n-k)$ and $s^\star=l-o(n-k)$ satisfying the conditions in Lemmas~\ref{lem:ConditionBiggerComponentSmall} and~\ref{lem:ConditionBiggerComponentLarge} yield
\begin{align*}
\sum_{j=r}^{s_\star} \Prob\left(A_{l,n} \geq j \right) \Prob\left(A_{n-l,n} =k-j \right) &= o(k^{-1}), \\
\sum_{j=s^\star}^{l} \Prob\left(A_{l,n} \geq j \right) \Prob\left(A_{n-l,n} =k-j \right) &= o\left( \frac{n-k}{k^2}\right).
\end{align*}
To finalize the proof, we hence have to show that for suitable $s_\star$ and $s^\star$ satisfying the conditions above,
\begin{align}
\sum_{j=s_\star}^{s^\star} \Prob\left(A_{l,n} \geq j \right) \Prob\left(A_{n-l,n} =k-j \right) \sim \frac{\theta^2}{2} \left( n-k \right)k^{-2}
\label{eq:ToShowBalanced3}
\end{align}
as $n \rightarrow \infty$. Fix $\epsilon>0$, then for large enough $n$,
\begin{align*}
&\sum_{j=s_\star}^{s^\star} \Prob\left(A_{l,n} \geq j \right) \Prob\left(A_{n-l,n} =k-j \right) \\
&= \sum_{j=s_\star-r}^{s^\star-r} \Prob\left(A_{l,n} \geq j+r \right) \Prob\left(A_{n-l,n} =n-l-j \right) \\
&\leq (1+\epsilon)^2 \int_{x=0}^{n-k} \frac{2\beta \theta}{\sqrt{2\pi}} \frac{\sqrt{l-r-x}}{l} \frac{(1-\beta)\theta}{\sqrt{2\pi}}\frac{1}{\sqrt{x}(n-l)} \, dx \\
&= (1+\epsilon)^2 \frac{\beta(1-\beta)\theta^2}{\pi l (n-l)}\int_{x=0}^{n-k} (n-k-x)^{1/2} x^{-1/2} \, dx \\
&= (1+\epsilon)^2 \frac{\beta(1-\beta)\theta^2}{2} \frac{n-k}{l(n-l)}.
\end{align*}
For the lower bound, note that we can set $s_\star-r=l-s^\star$ without violating the constraints of the lemmas. This is done to simplify the integration term in the lower bound, i.e.
\begin{align*}
&\sum_{j=s_\star}^{s^\star} \Prob\left(A_{l,n} \geq j \right) \Prob\left(A_{n-l,n} =k-j \right) \\
&\geq (1-\epsilon)^2 \frac{\beta(1-\beta)\theta^2}{\pi l (n-l)}\int_{x=s_\star-r+1}^{s^\star-r-1} (n-k-x)^{1/2} x^{-1/2} \, dx \\
&= (1-\epsilon)^2 \frac{\beta(1-\beta)\theta^2(n-k)}{\pi l (n-l)} \\
&\hspace{2cm}\cdot \arctan\left( \frac{n-k-2(s_\star-r)}{2\sqrt{(n-k-(s_\star-r))(s_\star-r)}}\right).
\end{align*}
We note that
\begin{align*}
\lim_{n \rightarrow \infty} \arctan\left( \frac{n-k-2(s_\star-r)}{2\sqrt{(n-k-(s_\star-r))(s_\star-r)}}\right) = \frac{\pi}{2}.
\end{align*}
Letting $\epsilon \downarrow 0$ shows that the bounds coincide and hence~\eqref{eq:ToShowBalanced3} holds.
\end{proof}

\subsection{Proof of Theorem~\ref{thm:NonProportional}}
\begin{proof}[Proof of Theorem~\ref{thm:NonProportional}]
In order to see that~\eqref{eq:NonProportionalNormal} holds for $\alpha<1$, choose $s_\star=o(\min\{k,l\}$ and $s^\star = \min\{k,l\} - o(\min\{k,l\})$. In addition, let $s^\star = \min\{ k,l\} - \omega (\min\{ k,l\}/s_\star)$ if $k =O(l)$. Lemma~\ref{lem:ConditionSmallerComponentSmall} and~\eqref{eq:ExceedanceInOneComponent} yield
\begin{align*}
\Prob\left(A_n \geq k ; A_{l,n} \leq s_\star \right)&\sim \Prob\left(A_{n-l,n} \geq k \right) \\&\sim \frac{2\theta}{\sqrt{2\pi}}  \sqrt{1-\frac{k}{n-l}} k^{-1/2},
\end{align*}
and
\begin{align*}
\Prob&\left(A_n \geq k ; A_{l,n} \in (s_\star,s^\star) \right) \\
&\hspace{1cm}\leq \underbrace{\Prob\left(A_{l,n} \geq s_\star \right)}_{=O(l/n s_\star^{-1/2})}\underbrace{\Prob\left(A_{n-l,n} \geq k-s^\star \right)}_{=O((k-s^\star)^{-1/2})} = o(k^{-1/2}).
\end{align*}
Moreover, if $k<l$,
\begin{align*}
\Prob&\left(A_n \geq k ; A_{l,n} \geq s_\star \right) \leq \Prob\left(A_{l,n} \geq s_\star \right) = O\left( \frac{l}{n} \sqrt{\frac{l-s^\star}{s^\star l}}\right) \\
&\hspace{2cm}=O\left( \frac{\sqrt{l(l-s^\star)}}{n} k^{-1/2} \right) = o(k^{-1/2}),
\end{align*}
and if $k \geq l$,
\begin{align*}
\Prob&\left(A_n \geq k ; A_{l,n} \geq s_\star \right) \leq \Prob\left(A_{l,n} \geq s_\star \right)  = O\left( \frac{l}{n} \sqrt{\frac{l-s^\star}{s^\star l}}\right) \\
&\hspace{3cm}=O\left( \frac{\sqrt{l-s^\star}}{n} \right) = o(k^{-1/2}).
\end{align*}

\noindent
Due to~\eqref{eq:SpecialFirstIdentity}, we can therefore conclude that~\eqref{eq:NonProportionalNormal} holds when $\alpha<1$. 

Next, suppose $\alpha = 1 $. Equation~\eqref{eq:ExceedanceInOneComponent} then translates to
\begin{align*}
\Prob\left(A_{n-l,n} \geq k \right) \left\{ \begin{array}{ll}
\sim \frac{2 \theta}{\sqrt{2\pi}} \frac{\sqrt{-r}}{k} & \textrm{if } -r>0 \textrm{ growing},\\
\sim \chi(r) k^{-1} & \textrm{if } r \textrm{ fixed,} \\
=0 & \textrm{if } r>0,
\end{array} \right.
\end{align*}
where
\begin{align*}
\chi(r)=\sum_{m=0}^{\max\{ -r,\lfloor\theta\rfloor \}} \frac{\theta (m-\theta)^m}{m!} e^{-(m-\theta)} .
\end{align*}
If $-r = \Omega(l)$, we have the bound
\begin{align*}
\Prob\left(A_{n} \geq k ; A_{n-l,n} < k \right) \leq \Prob\left(A_{l,n} \geq 1 \right) \Prob\left(A_{l,n} \geq k-l \right).
\end{align*}
Since $A_{l,n}$ obeys a quasi-binomial distribution~\cite{Dobson2005},
\begin{align*}
\Prob\left( A_{l,n} \geq 1 \right) = 1-\left(1-\frac{\theta}{n}\right)^l = o(1),
\end{align*}
and the second term is bounded by
\begin{align*}
\Prob\left( A_{n-l,n} \geq k-l \right) = O\left( \frac{\sqrt{n-k}}{n-l}\right) =O\left( \frac{\sqrt{-r}}{k}\right).
\end{align*}
Again, due to identity~\eqref{eq:SpecialFirstIdentity}, we observe that~\eqref{eq:NonProportionalNormal} holds in this case as well.

Next, suppose $\alpha =1$ with $|r|=o(l)$. Choose $s_\star$ small enough and $s^\star$ large enough such that the condition in Lemma~\ref{lem:Integral1} and~\ref{lem:DisparateSSmall} are satisfied. Then, uniformly,
\begin{align*}
&\sum_{j=s_\star+1}^{s^\star-1}  \Prob\left(A_{l,n} \geq j \right) \Prob\left(A_{n-l,n} =k-j \right) \\
&\sim \sum_{j=s_\star+1}^{s^\star-1} \frac{\theta^2}{\pi} \frac{l}{k^2} j^{-1/2}  (j-r)^{-1/2} \sim \frac{\theta^2}{\pi} \frac{l}{k^2} \log \left( \frac{l}{|r|+1}\right).
\end{align*}

\noindent
Recalling~\eqref{eq:SpecialFirstIdentity},~\eqref{eq:ExceedanceInOneComponent}, and applying Lemmas~\ref{lem:DisparateSSmall} and~\ref{lem:DisparateSLarge} then yields that as $n \rightarrow \infty$,
\begin{align}
\Prob\left(A_n \geq k \right) \sim \Prob\left(A_{n-l,n} \geq k \right) + \frac{\theta^2}{\pi} \frac{l \log(l/(|r|+1))}{k^2}.
\label{eq:DisparateCriticalExceedance}
\end{align}

It follows immediately that~\eqref{eq:NonProportionalRPositive} holds if $r>0$. Moreover, if $r\leq 0$ and fixed, the exceedance of the threshold in the larger component alone already yields a term of order $k^{-1}$. Since $\log(l/(|r|+1))<\log(n)$ and $k \sim n$, it is necessary that $l = \Omega(n/ \log n)$ for the second term in~\eqref{eq:DisparateCriticalExceedance} to be non-negligible. It is also sufficient since $l=n/\log n$ yields
\begin{align*}
\frac{l \log (l/(|r|+1))}{k^2} \sim \frac{n \log(n/\log n)}{k^2 \log n} \sim \frac{n \log n}{k\,  n \log n} = \frac{1}{k}.
\end{align*}
That is,~\eqref{eq:NonProportionalRFixed} holds as well. When $-r>0$ and growing, the same reasoning as before shows that it is necessary that $l = \Omega(n \sqrt{-r}/ \log n)$ for the second term in~\eqref{eq:DisparateCriticalExceedance} to be non-negligible. Note this implies $\sqrt{-r}=O\left(l/n \log n\right)=o(\log n)$. This condition is also sufficient: when $l= n \sqrt{-r}/ \log n$,
\begin{align*}
\frac{l \log (l/(|r|+1))}{k^2} &= \frac{n \sqrt{-r} \log(n/(\sqrt{-r}\log n))}{k^2 \log n} \\
&\sim \frac{\sqrt{-r}\log n}{k \log n} = \frac{\sqrt{-r}}{k}.
\end{align*}
In conclusion, also the case~\eqref{eq:NonProportionalRVeryNegative} holds.

Next, suppose $\alpha=1$ with $\gamma=\lim_{n \rightarrow \infty} r/l \in (0,1)$. Choose $s_\star=r+o(l)$ such that $s_\star-r$ is growing, and $s^\star=l-o(l)$ such that $l-s^\star$ is growing. Then, 
\begin{align*}
&\Prob\left(A_n \geq k ; A_{n-l,n} \geq k-s_\star\right) \\
&\hspace{2.5cm}\leq \Prob\left( A_{l,n} \geq r \right) \Prob\left( A_{n-l,n} \geq k-s_\star \right) \\
&\hspace{2.5cm} = O\left( \frac{l}{n} r^{-1/2} \frac{(k-s_\star)^{1/2}}{n-l} \right) = o\left(\frac{l}{k^2}\right)
\end{align*}
and
\begin{align*}
\Prob&\left(A_n \geq k ; A_{n-l,n} \leq k-s^\star\right) \\
&\hspace{2cm}\leq \Prob\left( A_{l,n} \geq s^\star \right) \Prob\left( A_{n-l,n} \geq k-l \right) \\
&\hspace{2cm}= O\left( \frac{l}{n} \frac{(l-s^\star)^{1/2}}{l} \frac{\sqrt{n-k}}{n-l} \right) = o\left(\frac{l}{k^2}\right).
\end{align*}

Using~\eqref{eq:ExceedanceInOneComponent}, we obtain that as $n\rightarrow \infty$,
\begin{align*}
\Prob&\left(A_n \geq k ; A_{n-l,n} \in (k-s^\star, k-s_\star) \right) \\
&= \sum_{j=s_\star+1}^{s^\star-1} \Prob\left(A_{n-l,n} = k-j \right) \Prob\left(A_{l,n} \geq j \right)\\
&\sim \sum_{j=s_\star+1}^{s^\star-1} \frac{\theta}{\sqrt{2\pi}} \frac{1}{(n-l)\sqrt{n-l-k+j}} \cdot \frac{2\theta}{\sqrt{2\pi}} \frac{l}{n} \sqrt{\frac{l-j}{l\cdot j}} \\
&\sim \frac{\theta^2}{\pi} \frac{\sqrt{l}}{k^2} \sum_{j=s_\star+1}^{s^\star-1} \sqrt{\frac{l-j}{(j-r)j}}.
\end{align*}
Note that the function within the summation is (strictly) decreasing on $(r,l]$. Hence, an upper bound for the summation term is given by
\begin{align*}
&\sum_{j=s_\star+1}^{s^\star-1} \sqrt{\frac{l-j}{(j-r)j}} \leq \int_{x=s_\star}^l \sqrt{\frac{l-j}{(j-r)j}} \, dx \\
&=\sqrt{l} \int_{y=s_\star/l}^1 \sqrt{\frac{1-y}{(y-r/l)y}} \, dy \sim \sqrt{l} \int_{y=\gamma}^1 \sqrt{\frac{1-y}{(y-\gamma)y}} \, dy,
\end{align*}
and a lower bound is given by
\begin{align*}
&\sum_{j=s_\star+1}^{s^\star-1} \sqrt{\frac{l-j}{(j-r)j}} \geq \int_{x=s_\star+1}^{s^\star} \sqrt{\frac{l-j}{(j-r)j}} \, dx \\
&=\int_{y=(s^\star+1)/l}^{s^\star/l} \sqrt{\frac{l(1-y)}{(y-r/l)y}} \, dy \sim \sqrt{l} \int_{y=\gamma}^1 \sqrt{\frac{1-y}{(y-\gamma)y}} \, dy.
\end{align*}
As the asymptotic behavior of the upper and lower bound coincides, we obtain
\begin{align*}
\Prob&\left(A_n \geq k ; A_{n-l,n} \in (k-s^\star, k-s_\star) \right) \\
&\hspace{3cm}\sim \frac{\theta^2}{\pi} \int_{y=\gamma}^1 \sqrt{\frac{1-y}{(y-\gamma)y}} \, dy \, \frac{\sqrt{l}}{k^2}.
\end{align*}
We observe that $\int_{y=\gamma}^1 \sqrt{(1-y)/((y-\gamma)y)} \, dy$ is a constant, since
\begin{align*}
\int_{y=\gamma}^1 \sqrt{\frac{1-y}{(y-\gamma)y}} \, dy \leq \int_{y=\gamma}^1 \sqrt{\frac{1}{(y-\gamma)\gamma}} \, dy < \infty.
\end{align*}
Recalling~\eqref{eq:SpecialThirdIdentity} yields the result in this case.

Finally, we consider $\alpha=1$ with $r=l-o(l)$, and hence both components can only have a few surviving lines after the cascading failure process. The proof is analogous to the case where $\beta \neq 0$, and is merely adapted below to account for the disparity between the component sizes. Choose $s_\star=r+o(n-k)$ and $s^\star=l-o(n-k)$. Then,
\begin{align*}
\Prob&\left(A_n \geq k ; A_{n-l,n} \geq k-s_\star\right) \\
&\hspace{2cm}\leq \Prob\left( A_{l,n} \geq r \right) \Prob\left( A_{n-l,n} \geq k-s_\star \right) \\
&\hspace{2cm}= O\left( \frac{l}{n} \frac{\sqrt{l-r}}{l} \frac{\sqrt{k-s_\star}}{n-l} \right) = o\left(\frac{{n-k}}{k^2}\right),
\end{align*}
and
\begin{align*}
\Prob&\left(A_n \geq k ; A_{n-l,n} \leq k-s^\star\right) \\
&\hspace{2cm}\leq \Prob\left( A_{l,n} \geq s^\star \right) \Prob\left( A_{n-l,n} \geq k-l \right) \\
&\hspace{2cm}= O\left( \frac{l}{n} \frac{\sqrt{l-s^\star}}{l} \frac{\sqrt{n-k}}{n-l} \right) = o\left(\frac{{n-k}}{k^2}\right).
\end{align*}

Fix $\epsilon>0$. Using~\ref{eq:ExceedanceInOneComponent} yields
\begin{align*}
&\sum_{j=s_\star+1}^{s^\star-1} \Prob\left(A_{l,n} \geq j \right) \Prob\left(A_{n-l,n} =k-j \right) \\
&= \sum_{j=s_\star+1-r}^{s^\star-1-r} \Prob\left(A_{l,n} \geq j+r \right) \Prob\left(A_{n-l,n} =n-l-j \right) \\
&\leq (1+\epsilon)^2 \int_{x=0}^{n-k} \frac{l}{n} \frac{2 \theta}{\sqrt{2\pi}} \frac{\sqrt{l-r-x}}{l} \frac{\theta}{\sqrt{2\pi}}\frac{1}{\sqrt{x}(n-l)} \, dx \\
&= (1+\epsilon)^2 \frac{\theta^2}{\pi n (n-l)}\int_{x=0}^{n-k} (n-k-x)^{1/2} x^{-1/2} \, dx \\
&= (1+\epsilon)^2 \frac{\theta^2}{2} \frac{{n-k}}{n(n-l)}.
\end{align*}
For the lower bound, note that we can set $s_\star-r=l-s^\star$ without violating the assumptions on $s_\star$ and $s^\star$. This is done to simplify the integration term in the lower bound, i.e.
\begin{align*}
&\sum_{j=s_\star+1}^{s^\star-1} \Prob\left(A_{l,n} \geq j \right) \Prob\left(A_{n-l,n} =k-j \right) \\
&\geq (1-\epsilon)^2 \frac{\theta^2}{\pi n (n-l)}\int_{x=s_\star-r+2}^{s^\star-r-2} (n-k-x)^{1/2} x^{-1/2} \, dx \\
&\geq (1-\epsilon)^3 \frac{\theta^2}{\pi n (n-l)} (n-k)\\
&\hspace{2cm}\cdot \arctan\left( \frac{n-k-2(s_\star-r)}{2\sqrt{(n-k-(s_\star-r))(s_\star-r)}}\right).
\end{align*}
We note that
\begin{align*}
\lim_{n \rightarrow \infty} \arctan\left( \frac{n-k-2(s_\star-r)}{2\sqrt{(n-k-(s_\star-r))(s_\star-r)}}\right) = \frac{\pi}{2}.
\end{align*}
Letting $\epsilon \downarrow 0$ shows that the bounds coincide and hence
\begin{align*}
\sum_{j=s_\star+1}^{s^\star-1} \Prob\left(A_{l,n} \geq j \right) \Prob\left(A_{n-l,n} =k-j \right) \sim \frac{\theta^2}{2} \left( n-k \right)k^{-2}.
\end{align*}
Combining the results show that the theorem also holds in this final case.
\end{proof}

\section{Asymptotic behavior of some summation terms}
In our analysis determining the asymptotic behavior often boils down to deriving the asymptotics of some summation terms. In this section we provide two of such results that are used.

\begin{lemma}
Suppose $|r|=o(l)$. Let $s_\star$ be such that $s_\star=o(\log (l/(|r|+1)) )$ growing if $-r>0$ growing or $|r|$ fixed, and $s_\star = r+ o(r)$ with $s^\star-r$ growing if $r>0$ growing. Let $s^\star=o(l)$ be growing such that $s^\star=\omega \left( l/ \log(l/(|r|+1))\right)$. Then, as $l \rightarrow \infty$, $s_\star \lesssim s^\star$ and
\begin{align}
\sum_{j=s_\star}^{s^\star} j^{-1/2}(j-r)^{-1/2} \sim  \log \left(\frac{l}{|r|+1}\right).
\end{align}
\label{lem:Integral1}
\end{lemma}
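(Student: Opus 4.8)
The plan is to estimate the sum $\sum_{j=s_\star}^{s^\star} j^{-1/2}(j-r)^{-1/2}$ by the corresponding integral and then extract the leading logarithmic term in the three relevant regimes for $r$: (i) $r \le 0$ with $-r$ growing, (ii) $r$ fixed, and (iii) $r > 0$ growing. In each case I will first check that $s_\star \lesssim s^\star$ for $l$ large, which is immediate: when $-r$ is growing or $|r|$ is fixed we have $s_\star = o(\log(l/(|r|+1)))$ while $s^\star = \omega(l/\log(l/(|r|+1)))$, and since $\log(l/(|r|+1)) = o(l^{1/2})$ say, the ratio $s_\star/s^\star \to 0$; when $r>0$ is growing, $s_\star = r + o(r) \sim r = o(l)$, again of smaller order than $s^\star$. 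So the summation range is non-empty and growing.

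The main step is the integral comparison. Since $x \mapsto x^{-1/2}(x-r)^{-1/2}$ is decreasing on $(\max\{0,r\},\infty)$, standard monotonicity gives
\begin{align*}
\int_{s_\star}^{s^\star+1} x^{-1/2}(x-r)^{-1/2}\,dx \le \sum_{j=s_\star}^{s^\star} j^{-1/2}(j-r)^{-1/2} \le \int_{s_\star-1}^{s^\star} x^{-1/2}(x-r)^{-1/2}\,dx,
\end{align*}
valid once $s_\star - 1 > \max\{0,r\}$, which holds because $s_\star - r$ is growing in case (iii) and $s_\star$ itself is growing in cases (i)–(ii). The antiderivative is elementary: for $r<0$ one gets $2\log(\sqrt{x}+\sqrt{x-r}) + \text{const}$ (equivalently $2\,\mathrm{arcsinh}\sqrt{x/(-r)}$), for $r=0$ it is $2\log x$, and for $r>0$ it is $2\log(\sqrt{x}+\sqrt{x-r})$ as well. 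Evaluating between the limits and using $s^\star = \omega(l/\log(l/(|r|+1)))$ together with $s^\star = o(l)$, the upper endpoint contributes $\log s^\star + O(1) = \log l - O(\log\log(\cdot)) \sim \log l$ in case (i)/(ii), and $\tfrac12\log(s^\star/|r|) \sim \log(l/|r|)$ in case (iii) after the same simplification; the lower endpoint contributes $\log s_\star$ (cases i, ii) or $\tfrac12\log((s_\star/|r|) - 1) = \tfrac12\log((s_\star-r)/r)$ (case iii), which by the hypotheses $s_\star = o(\log(l/(|r|+1)))$ resp.\ $s_\star - r = o(r)$ is $o(\log(l/(|r|+1)))$ and hence negligible against the upper endpoint. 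Collecting terms yields $2\cdot\tfrac12\log(l/(|r|+1)) = \log(l/(|r|+1))$ up to lower-order corrections in every regime, which is the claim.

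The step I expect to require the most care is \textbf{checking uniformly across the three regimes that the lower-endpoint and $O(1)$ correction terms are genuinely $o(\log(l/(|r|+1)))$}, since the "right" normalization $\log(l/(|r|+1))$ degenerates differently in each case: when $r$ is fixed it is essentially $\log l$, when $-r$ grows it is $\log l - \log(-r)$ which could itself be small if $-r$ is nearly of order $l$ (but this is excluded since $|r| = o(l)$), and when $r$ grows the factor $\tfrac12$ from the $\log(\sqrt{x}+\sqrt{x-r})$ antiderivative must combine correctly with $\log(s^\star/r) \sim 2\log(l/r)$ to still give $\log(l/r)$. The hypotheses on $s_\star$ and $s^\star$ are tailored precisely so that these corrections vanish, so the verification is a matter of bookkeeping rather than a genuine obstacle, but it is where an error would most easily creep in.
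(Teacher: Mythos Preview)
Your approach is the same as the paper's---sandwich the sum between two integrals of $x^{-1/2}(x-r)^{-1/2}$, use the antiderivative $2\log(\sqrt{x}+\sqrt{x-r})$, and extract the leading logarithm---so the strategy is sound. The problem is in the endpoint bookkeeping, precisely the step you flagged as delicate.

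In case (i) ($-r>0$ growing) you assert the lower endpoint contributes $\log s_\star$ and is therefore negligible. But the lower endpoint is $2\log(\sqrt{s_\star}+\sqrt{s_\star-r})=2\log(\sqrt{s_\star}+\sqrt{s_\star+|r|})$, and when $|r|\gg s_\star$ (e.g.\ $|r|=\sqrt{l}$) this is $\sim\log|r|$, not $\log s_\star$. That contribution is \emph{not} negligible: the upper endpoint gives $\sim\log s^\star\sim\log l$, and it is exactly the subtraction of $\log|r|$ that produces the target $\log(l/|r|)$ rather than $\log l$. As written, your argument would give $\sim\log l$ in this sub-regime, which is wrong by a factor that can be any constant in $(0,1)$. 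In case (iii) your claimed contributions $\tfrac12\log(s^\star/r)$ and $\tfrac12\log((s_\star-r)/r)$ are also off: the upper endpoint gives $\sim\log(s^\star/r)$ (no $\tfrac12$), and the lower endpoint $2\log(\sqrt{s_\star}+\sqrt{s_\star-r})\sim\log r$ (since $s_\star\sim r$), so the difference is $\log(s^\star/r)\sim\log(l/r)$ directly; the ``$2\cdot\tfrac12$'' accounting at the end does not reflect what is actually happening.

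The paper sidesteps these pitfalls by not splitting into endpoint contributions at all: it bounds the ratio $(\sqrt{s^\star}+\sqrt{s^\star-r})/(\sqrt{s_\star}+\sqrt{s_\star-r})$ above by $\sqrt{l/(|r|+1)}$ and below by a quantity asymptotic to it, uniformly across all three regimes, so the case analysis collapses. Either adopt that uniform bounding, or redo your endpoint computations keeping the $\sqrt{x-r}$ term intact at the lower limit.
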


\noindent
\begin{proof}
First, we have that $s_\star \leq s^\star$ as $l \rightarrow \infty$. That is, if $-r>0$ growing or $|r|$ fixed, 
\begin{align*}
s_\star \lesssim \log \left(\frac{l}{|r|+1}\right) \leq \log(l) \lesssim \frac{l}{\log l} \leq \frac{l}{\log \left(\frac{l}{|r|+1}\right)} \lesssim s^\star,
\end{align*}
and if $r>0$ growing,
\begin{align*}
s_\star \sim r = l \, \frac{r}{l}  \lesssim \frac{l}{\log (l/r)}  \lesssim s^\star.
\end{align*}

\noindent
Next, observe that the expression in the summation is a decreasing function, and therefore
\begin{align*}
\sum_{j=s_\star}^{s^\star} j^{-1/2}(j-r)^{-1/2} &\leq \int_{j=s_\star-1}^{s^\star} j^{-1/2}(j-r)^{-1/2} \\
&= 2 \log \left( \frac{\sqrt{s^\star}+\sqrt{s^\star-r}}{\sqrt{s_\star-1}+\sqrt{s_\star-r-1}}\right),
\end{align*}
and
\begin{align*}
\sum_{j=s_\star}^{s^\star} j^{-1/2}(j-r)^{-1/2} &\geq \int_{j=s_\star}^{s^\star} j^{-1/2}(j-r)^{-1/2} \\
&= 2 \log \left( \frac{\sqrt{s^\star}+\sqrt{s^\star-r}}{\sqrt{s_\star}+\sqrt{s_\star-r}}\right).
\end{align*}
It is apparent that the asymptotic behavior of the upper bound and lower bound is the same. It remains to derive this behavior in terms of $l$ and $r$. 

For an asymptotic upper bound, we observe that $\sqrt{s^\star}+\sqrt{s^\star-r} \lesssim \sqrt{l}$ and $\sqrt{s^\star}+\sqrt{s^\star-r} \gtrsim \sqrt{|r|+1}$ due to our choice of $s^\star$. Therefore
\begin{align*}
\sum_{j=s_\star}^{s^\star} j^{-1/2}(j-r)^{-1/2} \lesssim 2 \log\left( \sqrt{\frac{l}{|r|+1}} \right)  = \log \left(\frac{l}{|r|}\right).
\end{align*}
For a lower bound, recall that $|r| \lesssim l /\log(l/(|r|+1))$ and thus $\sqrt{s^\star}+\sqrt{s^\star-r} \gtrsim 2 \sqrt{l /\log(l/(|r|+1))}$. Since $\log\log x =o(\log x)$, we derive that as $l \rightarrow \infty$,
\begin{align*}
&\sum_{j=s_\star}^{s^\star} j^{-1/2}(j-r)^{-1/2} \\
&\gtrsim \log \left( \frac{4 \cdot {l /\log(l/(|r|+1))} }{{\max\{|r|,\log(l/(|r|+1)) \}}}\right) \sim \log \left(\frac{l}{|r|+1}\right).
\end{align*}
\end{proof}

\begin{lemma}
Suppose $|r|=o(l)$. Let $s_\star$ be such that $s_\star=o(\log(l/(|r|+1)))$ growing if $-r>0$ growing or $|r|$ fixed, and if $r>0$ growing let $s_\star  = r+ o(r)$ be such that $s_\star-r = \omega(r/\log(l/r))$ growing. Let $s^\star=o(l)$ be growing such that $s^\star=\omega \left( l/ \log(l/(|r|+1))\right)$. Then, there exists a $s_\star$ satisfying the assumptions, as $l \rightarrow \infty$, $s_\star \lesssim s^\star$ and
\begin{align}
\sum_{j=s_\star}^{s^\star} j^{-1/2}(j-r)^{-3/2} \sim  \log \left(\frac{l}{|r|+1}\right).
\label{eq:Integral2}
\end{align}
\label{lem:Integral2}
\end{lemma}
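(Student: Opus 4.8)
The plan is to follow the same route as the proof of Lemma~\ref{lem:Integral1}. First I would verify that a sequence $s_\star$ meeting the hypotheses exists and that $s_\star\lesssim s^\star$. When $-r>0$ is growing or $|r|$ is fixed this is precisely the chain of inequalities already used there: $s_\star=o(\log(l/(|r|+1)))\le\log l\lesssim l/\log l\le l/\log(l/(|r|+1))\lesssim s^\star$. When $r>0$ is growing, the point is that $r/\log(l/r)=o(r)$ since $l/r\to\infty$, so there is a growing sequence $v$ with $r/\log(l/r)\ll v\ll r$; taking $s_\star=r+v$ gives $s_\star\sim r$, and since $(\log m)/m\to0$ with $m=l/r$ we obtain $s_\star\sim r=o(l/\log(l/r))\lesssim s^\star$.

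The key algebraic step is the identity
\begin{align*}
j^{1/2}(j-r)^{-3/2}=j^{-1/2}(j-r)^{-1/2}+r\,j^{-1/2}(j-r)^{-3/2},
\end{align*}
which splits the sum into a main term plus a correction. Summing the first term over $j\in[s_\star,s^\star]$ is exactly Lemma~\ref{lem:Integral1}: the hypotheses imposed here imply those of Lemma~\ref{lem:Integral1} (in the case $r>0$ growing, $s_\star-r=\omega(r/\log(l/r))$ growing is stronger than $s_\star-r$ growing), so $\sum_{j=s_\star}^{s^\star}j^{-1/2}(j-r)^{-1/2}\sim\log(l/(|r|+1))$. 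It then remains to show that $r\sum_{j=s_\star}^{s^\star}j^{-1/2}(j-r)^{-3/2}=o(\log(l/(|r|+1)))$. Since $x\mapsto x^{-1/2}(x-r)^{-3/2}$ is decreasing on $(\max\{0,r\},\infty)$ and $\int x^{-1/2}(x-r)^{-3/2}\,dx=-2\sqrt{x}/(r\sqrt{x-r})$, integral comparison gives
\begin{align*}
\sum_{j=s_\star}^{s^\star}j^{-1/2}(j-r)^{-3/2}\le s_\star^{-1/2}(s_\star-r)^{-3/2}+\frac{2}{r}\left(\frac{\sqrt{s_\star}}{\sqrt{s_\star-r}}-\frac{\sqrt{s^\star}}{\sqrt{s^\star-r}}\right).
\end{align*}

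Multiplying by $r$ and estimating case by case: if $r>0$ is growing the dominant term is $\sqrt{r/(s_\star-r)}=o(\sqrt{\log(l/r)})=o(\log(l/(|r|+1)))$, using the lower bound $s_\star-r=\omega(r/\log(l/r))$; if $-r>0$ is growing the parenthesis lies in $(0,1)$, so the integral part contributes $O(1)$ and the boundary part is $|r|\,s_\star^{-1/2}(s_\star+|r|)^{-3/2}=o(1)$, while $O(1)=o(\log(l/|r|))$ because $|r|=o(l)$; if $|r|$ is fixed the sum is $O(1/s_\star)=o(1)$ and $r$ is bounded. Combining the three displays yields the claim. The hard part is the bookkeeping of the correction term uniformly across the three regimes for $r$ — in particular, verifying that $s_\star-r=\omega(r/\log(l/r))$ is exactly the condition that renders the correction negligible when $r>0$ grows; everything else reduces to the integral comparison already performed for Lemma~\ref{lem:Integral1}.
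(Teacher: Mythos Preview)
Your proof is correct and is essentially the same computation as the paper's, organized slightly differently. The paper applies integral comparison directly to $j^{1/2}(j-r)^{-3/2}$ and obtains the antiderivative $-2\sqrt{x/(x-r)}+2\log(\sqrt{x}+\sqrt{x-r})$, then shows the square-root terms are $o(\log(l/(|r|+1)))$ and quotes the proof of Lemma~\ref{lem:Integral1} for the logarithmic term; your algebraic identity $j^{1/2}(j-r)^{-3/2}=j^{-1/2}(j-r)^{-1/2}+r\,j^{-1/2}(j-r)^{-3/2}$ is precisely what produces these two pieces of the antiderivative, so you are doing the same estimates in a different order. The one advantage of your arrangement is that the main term is handled by invoking Lemma~\ref{lem:Integral1} as a black box rather than by revisiting its proof, and your case analysis for the correction (in particular noting that $s_\star-r=\omega(r/\log(l/r))$ is exactly what makes $\sqrt{r/(s_\star-r)}=o(\sqrt{\log(l/r)})$) is clean.
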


\noindent
\begin{proof}
It is not immediate that if $r>0$ growing, there exists a $s_\star$ that satisfies both $s_\star  = r+ o(r)$ and $s_\star-r = \omega(r/\log(l/r))$. Yet, we observe that $\log(l/r) \rightarrow \infty$ as $l \rightarrow \infty$ and hence $r/\log(l/r) = o(r)$. Therefore, there exists a $s_\star$ that satisfies the stated conditions.

The claim that $s_\star \lesssim s^\star$ as $l \rightarrow \infty$ is already proven in Lemma~\ref{lem:Integral1}.

Finally we have to show~\eqref{eq:Integral2}. Note that the expression in the summation is a decreasing function, and therefore
\begin{align*}
\sum_{j=s_\star}^{s^\star}& j^{1/2}(j-r)^{-3/2} \leq \int_{j=s_\star-1}^{s^\star} j^{1/2}(j-r)^{-3/2} \\
&= 2\sqrt{\frac{s_\star-1}{s_\star-1-r}} -  2\sqrt{\frac{s^\star}{s_\star-r}} \\
&\hspace{2.5cm}+ 2 \log \left( \frac{\sqrt{s^\star}+\sqrt{s^\star-r}}{\sqrt{s_\star-1}+\sqrt{s_\star-1-r}}\right).
\end{align*}
Similarly,
\begin{align*}
\sum_{j=s_\star}^{s^\star} & j^{-1/2}(j-r)^{-1/2} \geq \int_{j=s_\star}^{s^\star} j^{-1/2}(j-r)^{-1/2} \\
&= 2\sqrt{\frac{s_\star}{s_\star-r}} -  2\sqrt{\frac{s^\star}{s_\star-r}} \\
& \hspace{3cm}+ 2 \log \left( \frac{\sqrt{s^\star}+\sqrt{s^\star-r}}{\sqrt{s_\star}+\sqrt{s_\star-r}}\right).
\end{align*}

\noindent
It is apparent that the bounds asymptotically coincide, and it remains to express the asymptotics in terms of $l$ and $r$. First, as we have seen in the proof of Lemma~\ref{lem:Integral1}, $r = O(s^\star)$, and hence
\begin{align*}
2\sqrt{\frac{s^\star}{s^\star-r}} = O(1) = o\left( \log\left(\frac{l}{|r|+1} \right)\right).
\end{align*}
Next, if $r \leq 0$ or $|r|$ fixed, then clearly,
\begin{align*}
2\sqrt{\frac{s_\star}{s_\star-r}} = O(1) = o\left( \log\left(\frac{l}{|r|+1} \right)\right).
\end{align*}
If $r>0$, then
\begin{align*}
2\sqrt{\frac{s_\star}{s_\star-r}} \sim 2 \frac{r}{s_\star-r} = o\left( \log\left(\frac{l}{r} \right)\right).
\end{align*}
Finally, it follows from the proof of Lemma~\ref{lem:Integral1} that
\begin{align*}
2 \log \left( \frac{\sqrt{s^\star}+\sqrt{s^\star-r}}{\sqrt{s_\star}+\sqrt{s_\star-r}}\right) \sim \log\left(\frac{l}{|r|+1} \right),
\end{align*}
as $l \rightarrow \infty$. Adding the above expressions yields the result.
\end{proof}

\section{Proofs of Lemmas~\ref{lem:ConditionSmallerComponentSmall}-\ref{lem:DisparateSLarge}}
\label{app:FewManyProofs}
\begin{proof}[Proof of Lemma~\ref{lem:ConditionSmallerComponentSmall}]
Note that if $1-\beta < \alpha <1$, we must be in the balanced case. Therefore, for large enough $n$, $s_\star \leq k-(n-l)$, which proves the second assertion. 

Next, suppose $\alpha < 1-\beta$ (this can be either the disparate or the balanced case). We then have to prove that the joint event that the threshold is exceeded and the smaller component has few line failures is dominated by the event that $k$ is exceeded in the larger component. Note that $\{A_n \geq k ; A_{l,n} \leq s_\star \}$ implies that at least $\{A_{n-l,n} \geq k-s_\star\}$. Moreover, $\{A_{n-l,n} \geq k\}$ implies $\{A_{n} \geq k\}$. Then,
\begin{align*}
&\underbrace{\left(1- \Prob(A_{l,n} > s_\star) \right)}_{= 1-o(1)} \underbrace{\Prob\left(A_{n-l,n} \geq k \right)}_{\sim \frac{2(1-\beta) \theta}{\sqrt{2\pi}}} \\
&\hspace{1cm}\leq \Prob\left(A_n \geq k ; A_{l,n} \leq s_\star \right) \leq \underbrace{\Prob\left(A_{n-l,n} \geq k-s^\star \right)}_{\sim \frac{2(1-\beta) \theta}{\sqrt{2\pi}}}.
\end{align*}
Therefore,
\begin{align*}
\Prob\left(A_n \geq k ; A_{l,n} \leq s_\star \right) &\sim \frac{2(1-\beta) \theta}{\sqrt{2\pi}} \sqrt{\frac{n-l-k}{k(n-l)}}\\
& \sim \Prob\left(A_{n-l,n} \geq k \right).
\end{align*}
\end{proof}

\noindent
\begin{proof}[Proof of Lemma~\ref{lem:ConditionSmallerComponentLarge}]
First, suppose $\alpha < \beta$. Then we must be in the balanced case, and $k < l$ and $s^\star=k-o(k)$ for $n$ large enough. Basically, we want to show in this case that it is most likely that $A_{l,n}$ already exceeds $k$ given that it exceeds $s^\star$. Note
\begin{align*}
\Prob\left(A_{l,n} \geq k \right) \leq \Prob\left(A_n \geq k ; A_{l,n} \geq s^\star \right) \leq \Prob\left(A_{l,n} \geq s^\star \right).
\end{align*}
Equation~\eqref{eq:ExceedanceInOneComponent} where $l$ is balanced and $\alpha< \beta$ yields that $\Prob\left(A_{l,n} \geq s^\star \right) \sim \Prob\left(A_{l,n} \geq k \right)$ as $n \rightarrow \infty$. This coincides with the lower bound, and hence 
\begin{align*}
\Prob\left(A_n \geq k ; A_{l,n} \geq s^\star \right) \sim \left( A_{l,n} \geq k \right).
\end{align*}

If $\beta \leq \alpha <1-\beta$, we can have both the disparate and the balanced case. When the component sizes are disparate,
\begin{align*}
\Prob&\left(A_n \geq k ; A_{l,n} \geq s^\star \right) \leq  \Prob\left(A_{l,n} \geq s^\star \right) \\
&\hspace{2.5cm}= O\left(\frac{l}{n}\sqrt{\frac{l-s^\star}{l}} {s^{\star}}^{-1/2} \right) =  o(k^{-1/2}).
\end{align*}
When the component sizes are balanced, note that the condition $\beta \leq \alpha <1-\beta$ implies that $(l-s^\star)/l=o(1)$, and hence
\begin{align*}
\Prob&\left(A_n \geq k ; A_{l,n} \geq s^\star \right) \leq  \Prob\left(A_{l,n} \geq s^\star \right) \\
&\hspace{2.5cm}= O\left(\sqrt{\frac{l-s^\star}{l}} {s^{\star}}^{-1/2} \right) =  o(k^{-1/2}).
\end{align*}

Finally, if $1-\beta < \alpha <1$, we have a balanced case and $k>l$ for $n$ large enough. Then,
\begin{align*}
\Prob&\left(A_n \geq k ; A_{l,n} \geq s_\star \right) \\
&\hspace{1cm}\leq  \underbrace{\Prob\left(A_{l,n} \geq s_\star \right)}_{=o(k^{-1/2})} \underbrace{\Prob\left(A_{n-l,n} \geq  k-l \right)}_{=O\left(k^{-1/2} \right)} = o(k^{-1}).
\end{align*}
\end{proof}

\noindent
\begin{proof}[Proof of Lemma~\ref{lem:ConditionBiggerComponentSmall}]
For the first claim, note that $s_\star-r>0$ is growing, and
\begin{align*}
\Prob&\left(A_n \geq k, A_{n-l,n} \in [k-s_\star,\min\{k-1,n-l\}] \right) \\
&\leq s_\star \sup_{i \in [0,s_\star-r]} \Prob\left(A_{n-l,n} =n-l-i \right) = O\left(\frac{s_\star}{n-l} \right).
\end{align*}
Next, in the second case,
\begin{align*}
\Prob&\left(A_n \geq k, A_{n-l,n} \geq k-s_\star \right) \\
&\leq \underbrace{\Prob\left(A_{l,n} \geq r \right)}_{=O(r^{-1/2})} \underbrace{\Prob\left(A_{n-l,n} \geq k-s_\star \right)}_{=O\left( \sqrt{s_\star-r}/(n-l)\right)} = o\left(\frac{\sqrt{s_\star-r}}{n-l} \right).
\end{align*}
For the final case, observe that $r=k-(n-l)=l-(n-k)=l-o(l)$, and hence
\begin{align*}
\Prob&\left(A_n \geq k, A_{n-l,n} \geq k-s_\star \right) \\
&\leq \underbrace{\Prob\left(A_{l,n} \geq r \right)}_{=O(\sqrt{n-k}/l)}\underbrace{\Prob\left(A_{n-l,n} \geq k-s^\star \right)}_{=o(\sqrt{n-k}/(n-l))} = o\left(\frac{n-k}{k^2} \right).
\end{align*}
\end{proof}

\noindent
\begin{proof}[Proof of Lemma~\ref{lem:ConditionBiggerComponentLarge}]
For~\eqref{eq:ConditionBiggerComponentLarge1}, note that $k-l$ is of order $n$ and hence,
\begin{align*}
&\Prob\left(A_n \geq k, A_{n-l,n} \in [k-l,k-s^\star] \right) \\
&\leq \Prob\left(A_{l,n} \geq s^\star \right) \Prob\left(A_{n-l,n} \geq k-l \right) = O\left({s^\star}^{-1/2}k^{-1/2} \right).
\end{align*}
For~\eqref{eq:ConditionBiggerComponentLarge2}, observe $l \sim k$ as $n \rightarrow \infty$ and hence,
\begin{align*}
\Prob&\left(A_n \geq k, A_{n-l,n} \in [k-l,k-s^\star] \right)\\
&\hspace{2cm}\leq \Prob\left(A_{l,n} \geq s^\star \right) = O\left(\frac{\sqrt{l-s^\star}}{k}  \right).
\end{align*}
For~\eqref{eq:ConditionBiggerComponentLarge3}, we thus want to show that it is most likely that if $A_{n-l,n}$ is at most $k-s^\star=o(l-k)$, the threshold is exceeded in the smaller component itself. As $n\rightarrow\infty$,
\begin{align*}
\Prob&\left(A_n \geq k, A_{n-l,n} \in [k-l,k-s^\star] \right) \leq \Prob\left(A_{l,n} \geq s^\star \right) \\
&\sim \frac{2\theta\cdot 1/2}{\sqrt{2\pi}} \sqrt{\frac{l-s^\star}{l s^\star}} \sim \frac{\theta}{\sqrt{2\pi}} \frac{\sqrt{l-k}}{k}.
\end{align*}
For the lower bound,
\begin{align*}
\Prob&\left(A_n \geq k, A_{n-l,n} \in [k-l,k-s^\star] \right) \\
&\geq \Prob\left(A_{l,n} \geq k \right) \Prob\left(A_{n-l,n} \leq k-s^\star \right) \sim \frac{\theta}{\sqrt{2\pi}} \frac{\sqrt{l-k}}{k}.
\end{align*}
Since the lower and upper bounds coincide, we observe that~\eqref{eq:ConditionBiggerComponentLarge3} holds. Finally, if $\alpha=1$ and $\beta \in (0,1/2]$, then
\begin{align*}
\Prob&\left(A_n \geq k, A_{n-l,n} \in [k-l,k-s^\star] \right) \\
&\hspace{1cm}\leq  \Prob\left(A_{l,n} \geq s^\star \right) \Prob\left(A_{n-l,n} \geq n-l-(n-k) \right)\\
&\hspace{1cm}= O\left( \frac{\sqrt{l-s^\star}}{l} \frac{\sqrt{n-k}}{n-l} \right) = o\left( \frac{n-k}{k^2}\right).
\end{align*}
\end{proof}

\noindent
\begin{proof}[Proof of Lemma~\ref{lem:DisparateSSmall}]
When $r<0$ and growing, we have (for $n$ large enough)
\begin{align*}
\Prob&\left(A_n \geq k ; A_{n-l,n} \in [k-s_\star,k-1] \right) \\
&\hspace{1.5cm}= \sum_{j=1}^{s_\star} \Prob\left( A_{l,n} \geq j\right) \Prob\left( A_{n-l,n} = k-j \right) \\
&\hspace{1.5cm}\leq  \Prob\left( A_{l,n} \geq 1\right) s_\star \sup_{j \in [1,s_\star]} \Prob\left( A_{n-l,n} = k-j  \right).
\end{align*}

\noindent
Due to our choice of $s_\star$, we obtain the inequality,
\begin{align*}
\Prob&\left(A_n \geq k ; A_{n-l,n} \in [k-s_\star,k-1] \right) \\
&\hspace{3cm}= O\left( \frac{l}{k} s_\star \frac{1}{\sqrt{-r} k}\right) = o \left(\frac{l \log l}{k^2} \right).
\end{align*}

When $r$ is fixed,
\begin{align*}
\Prob&\left(A_n \geq k ; A_{n-l,n} \in [k-s_\star,k-1] \right) \\
&\hspace{1.5cm}\leq \Prob\left( A_{l,n} \geq 1\right) s_\star \sup_{j \in [1,s_\star]} \Prob\left( A_{n-l,n} = k-j  \right) \\
&\hspace{1.5cm}= O\left( \frac{l}{k} s_\star \frac{1}{ k}\right) = o \left(\frac{l \log l}{k^2} \right).
\end{align*}

When $r>0$ and growing,
\begin{align*}
\Prob&\left(A_n \geq k ; A_{n-l,n} \in [k-s_\star,k-1] \right) \\
&\hspace{1.5cm}\leq \Prob\left( A_{l,n} \geq r\right) \Prob\left( A_{n-l,n} \geq n-l-v  \right) \\
&\hspace{1.5cm}= O\left( \frac{l}{k} r^{-1/2} \frac{\sqrt{v}}{k}\right) = o\left(\frac{l \log(l/r)}{k^2}  \right).
\end{align*}
\end{proof}

\noindent
\begin{proof}[Proof of Lemma~\ref{lem:DisparateSLarge}]
Observe that
\begin{align*}
\Prob&\left(A_n \geq k; A_{n-l,n} \in [k-l,k-s^\star] \right) \\
&\leq \underbrace{\Prob\left( A_{l,n} \geq s^\star \right)}_{=O\left(  \frac{l}{k} {s^\star}^{-1/2} \frac{\sqrt{l}}{k}\right)} \hspace{.5cm}\cdot  \underbrace{\Prob\left( A_{n-l,n} \geq k-l \right)}_{=o \left( \mathbbm{1}_{\{r \leq 0\}}\left( \frac{l \log l}{k^2} \right)+\mathbbm{1}_{\{r>0\}} \frac{l \log(l/r)}{k^2} \right)}.
\end{align*}
\end{proof}

\bibliographystyle{plain}

\end{document}